\documentclass[12pt]{amsart}
\usepackage{amsmath}
\usepackage{amsxtra}
\usepackage{amscd}
\usepackage{amsthm}
\usepackage{amsfonts}
\usepackage{amssymb}
\usepackage{eucal}
\usepackage{epsfig}
\usepackage{graphics}
\textwidth=15cm
\textheight=22cm
\hoffset=-1cm
\voffset=-2cm
\baselineskip=18pt plus 3pt
\def\({\left(}
\def\){\right)}

\newcommand{\xib}{\mbox{\boldmath$\xi$}}

\newcommand{\taub}{\mbox{\boldmath$\tau$}}

\newcommand{\ket}[1]{{| #1 \rangle}}      

\newcommand{\cb}{\mathbf{c}}
\newcommand{\bb}{\mathbf{b}}
\newcommand{\ab}{\mathbf{a}}
\newcommand{\tb}{\mathbf{t}}
\newcommand{\fb}{\mathbf{f}}
\newcommand{\ub}{\mathbf{u}}
\newcommand{\qb}{\mathbf{q}}
\newcommand{\gb}{\mathbf{g}}


\newcommand{\nn}{\nonumber}
\newcommand{\bea}{\begin{eqnarray}}
\newcommand{\ena}{\end{eqnarray}}
\def\bel{\begin{eqnarray}}
\def\enl{\end{eqnarray}}
\newcommand{\be}{\begin{eqnarray*}}
\newcommand{\en}{\end{eqnarray*}}
\newcommand{\ba}{\begin{array}}
\newcommand{\ea}{\end{array}}

\newcommand{\R}{{\mathbb R}}
\newcommand{\T}{{\mathbb T}}

\newcommand{\C}{{\mathbb C}}
\newcommand{\Z}{{\mathbb Z}}

\newcommand{\slt}{\mathfrak{sl}_2}
\newcommand{\slth}{\widehat{\mathfrak{sl}}_2}
\newcommand{\res}{{\rm res}}

\newcommand{\Tr}{{\rm Tr}}

\newcommand{\End}{\mathop{\rm End}}

\newenvironment{tenumerate}{
  \begin{enumerate}
  
  }{\end{enumerate}}
\newcommand{\bi}{\begin{tenumerate}}
\newcommand{\ei}{\end{tenumerate}}
\newcommand{\isoto}[1][]%
{{\mathop{\buildrel{\sim}\over\longrightarrow}\limits_{#1}}}

\def\tto{\underset{\nu\to 0}{\rightarrow}}
\def\[{\left[}
\def\]{\right]}
\newcommand{\la}{\lambda}

\newcommand{\al}{\alpha}

\newcommand{\s}{\sigma}
\newcommand{\z}{\zeta}

\numberwithin{equation}{section}
\newtheorem{thm}{Theorem}[section]

\newtheorem{lem}[thm]{Lemma}

\newtheorem{cor}[thm]{Corollary}

\def\rb{\mathbf{r}}
\def\xb{\mathbf{x}}
\def\J{\mathbb{J}}
\newcommand{\osym}{\omega_{sym}}

\newcommand{\bS}{\mathbb{S}}

\newcommand{\ao}{\mathbf{a}}

\newcommand{\kb}{\mathbf{k}}
\def\half{\textstyle{\frac  1 2}}
\newcommand{\bk}{\mathbf{k}}
\newcommand{\bl}{\mathbf{l}}
\newcommand{\bT}{{\mathbb T}}

\newcommand{\bu}{\mathbf{u}}
\newcommand{\bp}{\mathfrak{b}^+}
\newcommand{\bbA}{\mathbb{A}}
\def\bo{\mathbf{1}}

\def\bn{\mathbf{n}}
\def\bno{\mathbf{n-1}}
\def\bj{\mathbf{j}}
\def\bi{\mathbf{i}}
\def\bp{\mathbf{p}}
\begin{document}

\begin{title}[Grassmann Structure in XXZ Model]
{Hidden Grassmann Structure in the XXZ Model III:
Introducing Matsubara direction
}
\end{title}
\date{\today}
\author{M.~Jimbo, T.~Miwa and  F.~Smirnov}
\address{MJ: Graduate School of Mathematical Sciences, The
University of Tokyo, Tokyo 153-8914, Japan;
Institute for the Physics and Mathematics of the Universe, 
Kashiwa, Chiba 277-8582, Japan}\email{jimbomic@ms.u-tokyo.ac.jp}
\address{TM: Department of 
Mathematics, Graduate School of Science,
Kyoto University, Kyoto 606-8502, 
Japan}\email{tetsuji@math.kyoto-u.ac.jp}
\address{FS\footnote{Membre du CNRS}: Laboratoire de Physique Th{\'e}orique et
Hautes Energies, Universit{\'e} Pierre et Marie Curie,
Tour 16 1$^{\rm er}$ {\'e}tage, 4 Place Jussieu
75252 Paris Cedex 05, France}\email{smirnov@lpthe.jussieu.fr}

\begin{abstract}
We address the problem of computing temperature correlation functions 
of the XXZ chain, within the approach developed in our previous works. 
In this paper we calculate 
the expected values of a fermionic basis of quasi-local operators, 
in the infinite volume limit 
while keeping the Matsubara (or Trotter) 
direction finite.
The result is expressed in terms of two basic quantities: a ratio 
$\rho(\z)$ of transfer matrix eigenvalues, and a nearest neighbour correlator
$\omega(\z,\xi)$. We explain that the latter is interpreted as 
the canonical second kind differential in the theory of 
deformed Abelian integrals. 
\end{abstract}

\maketitle

\bigskip


\section{Introduction}\label{Intro}

The present article is a continuation of 
the paper \cite{HGSII}, which was written almost
a year ago and was dedicated to the memory of Alyosha Zamolodchikov. 
It so happens that the topic we discuss this time is 
not too far from a domain in which he made giant footsteps. 
So, life goes on, but there stays
a painful sorrow caused by his early death. 

Consider the XXZ spin chain with the Hamiltonian 
\begin{eqnarray}
H=\textstyle{\frac{1}{2}}\sum\limits_{k=-\infty}^{\infty}
\left( 
\sigma_{k}^1\sigma_{k+1}^1+
\sigma_{k}^2\sigma_{k+1}^2+
\Delta\sigma_{k}^3\sigma_{k+1}^3
\right), \quad \Delta =\half(q+q^{-1})\,,
\label{Ham}
\end{eqnarray}
where $\sigma^a \, (a=1,2,3)$ are the Pauli matrices. 
To avoid technicalities, 
in this Introduction let us accept \eqref{Ham} as 
a formal object acting on 
$\frak{H}_\mathrm{S} 
=\bigotimes\limits_{j=-\infty}^{\infty}\mathbb{C}^2$.  
We shall touch upon the limit 
from a finite chain in the body of the text.   
In the papers \cite{HGS}, \cite{HGSII}, we studied the vacuum
expectation values (VEVs)
\begin{align}
\langle q^{2\al S(0)}\mathcal{O}\rangle_{{XXZ}}
=\frac{\langle\text{vac}|q^{2\al S(0)}
\mathcal{O}|\text{vac}\rangle}{\langle\text{vac}|
q^{2\al S(0)}|\text{vac}\rangle}\,.
\label{exp}
\end{align}
Here $|\text{vac}\rangle$ denotes the ground state eigenvector,  
$S(k)=\textstyle{\frac 1 2} \sum_{j=-\infty}^k\sigma ^3_j$,
and $\mathcal{O}$ is a local operator. 
We have obtained a description of \eqref{exp} 
in terms of fermionic operators. 
For that purpose, it was essential to consider 
operators of the form $q^{2\al S(0)}\mathcal{O}$,  
which we call quasi-local operators with tail $\al$. 

An important generalisation
of our results was proposed by Boos, G{\"o}hmann, Kl{\"u}mper and Suzuki 
\cite{BGKS}. 
They gave evidences that our fermionic description
works equally well 
in the presence of
a finite temperature and a non-zero magnetic field:
\begin{align}
\langle q^{2\al S(0)}\mathcal{O}\rangle_{{XXZ},\ \beta ,h}\ =\ 
\frac{\Tr _{\mathrm{S}}\Bigl(e^{-\beta H+hS}q^{2\al S(0)}\mathcal{O}\Bigr)}
{\Tr _{\mathrm{S}}\Bigl(e^{-\beta H+hS}q^{2\al S(0)}\Bigr)}\,, 
\label{TEV}
\end{align}
where $\Tr_\mathrm{S}$ stands for the trace on $\frak{H}_\mathrm{S}$. 
For $\beta \to \infty$ and $h=0$, 
the expectation value (\ref{TEV}) reduces to (\ref{exp}). 
For us this was quite an exciting development,  
because it shows that the
fermionic structure is not 
a peculiarity of VEVs, but is rather a 
reflection of a symmetry hidden deep in the model. 
It should be said that in the paper \cite{BGKS} the
expectation values (\ref{TEV}) were not considered in full
generality. The formula expressing them
in terms of fermionic operators was formulated as 
a conjecture, which was 
checked in some particular cases but was left unproved. 

The first question which we asked ourselves was, why not to
add other local integrals of motion to $-\beta H+hS$ in
(\ref{TEV}). 
The physical meaning of such a generalisation 
is obscure, but it should be possible for integrable models. 
This question,  
together with an intuition coming from the papers \cite{Klumper},
\cite{Suzuki}, led to the following generalisation of (\ref{TEV}).
Along with the space $\frak{H}_\mathrm{S}$, 
consider the Matsubara space $\frak{H}_{\mathbf{M}}$, 
\begin{align}
\frak{H}_\mathbf{M}=\mathbb{C}^{2s_\mathbf{1}+1}\otimes
\cdots \otimes \mathbb{C}^{2s_\mathbf{n}+1}\,,
\label{HM}
\end{align}
with an arbitrary spin $s_{\mathbf{m}}$ and a 
spectral parameter $\tau_\mathbf{m}$ attached to each component. 
The generalisation of (\ref{TEV}) is given
by the following linear functional
\begin{align}
Z^{\kappa}\Bigl\{q^{2\al S(0)}\mathcal{O}
\Bigr\}=
\frac{\Tr _{\mathrm{S}}\Tr _{\mathbf{M}}\Bigl(T_{\mathrm{S},\mathbf{M}}q^{2\kappa S+2\al S(0)}\mathcal{O}\Bigr)}
{\Tr _{\mathrm{S}}\Tr _\mathbf{M}\Bigl(T_{\mathrm{S},\mathbf{M}}q^{2\kappa S+2\al S(0)}\Bigr)}
\,.
\label{Z}
\end{align}
Here $T_{S,\mathbf{M}}$ denotes the 
monodromy matrix associated with 
$\mathfrak{H}_\mathrm{S}\otimes\mathfrak{H}_{\mathbf{M}}$
(see \eqref{monodromy-matrix}). 

The idea behind the generalisation (\ref{Z}) is simple:
for whichever spins and $\tau _\mathbf{m}$ that
we put in the Matusbara direction, 
$\Tr _\mathbf{M}\bigl(T_{\mathrm{S},\mathbf{M}}\bigr)$ commutes with $H_{XXZ}$. 
One expects that using cleverly 
this arbitrariness in the definition of 
$\frak{H}_{\mathbf{M}}$, 
it should be possible to reproduce any function of local 
integrals of motion under the trace.  
In particular, in order to reproduce (\ref{TEV}) from
(\ref{Z}),  one has to take special inhomogeneities and
then to consider the limit $\mathbf{n}\to\infty$. 
This point is explained in detail 
in \cite{Klumper}, \cite{Suzuki}. 
In the present paper we compute $Z^{\kappa}$ for  finite 
$\mathbf{n}$, leaving the discussion of the limit for future
publication. 
We would like to emphasise, however, that
this limit is not complicated. 
For finite $\mathbf{n}$, $Z^{\kappa}$ will be
expressed in terms of only two functions,  
$\rho (\z)$, $\omega (\z,\xi)$ 
(see \eqref{detform} below) and one needs only to take the limit of them. 
Let us explain all that in some more details,
starting from our fermionic operators.

For the moment we forget about the 
Matsubara direction, and concentrate
on the description of the operators 
acting on $\mathfrak{H}_\mathrm{S}$.
The logic of our papers \cite{HGS}, \cite{HGSII} is close to that of CFT:
we describe the space of quasi-local operators as a module
created from the primary field $q^{2\al S(0)}$ by creation
operators. 
We recall below the main features of the construction in 
\cite{HGSII}.

We say that $X=q^{2\al S(0)}\mathcal{O}$ is a 
quasi-local operator with tail $\al$ 
if it stabilises outside some finite interval
of the infinite chain: to $q^{\al \sigma ^3_j}$ 
on the left and to $I_j$ on the right. 
The minimal interval with this 
property is called the support of $X$.
The spin of $X$ is the eigenvalue of $\mathbb{S}(\cdot )=[S,\cdot]$ 
where $S=S(\infty)$ is the total spin operator. 
We denote by $\mathcal{W}_{\al}$ the 
space of quasi-local operators 	with tail $\alpha$, and by 
$\mathcal{W}_{\al,s}$ its subspace 
of operators of spin $s\in \mathbb{Z}$. 
Consider the space
$$
\mathcal{W}^{(\al)}
=\bigoplus\limits _{s=-\infty}^{\infty}\mathcal{W}_{\al-s,s}
\,.
$$
On this space we defined the creation 
operators $\tb ^*(\z)$, $\bb^*(\z)$, $\cb^*(\z)$ 
and annihilation operators $\bb(\z)$, $\cb(\z)$. 
These are one-parameter families of operators of the form
\begin{align}
&\tb ^*(\z)=\sum\limits _{p=1}^{\infty}(\z^2-1)^{p-1}\tb_p\,,
\nn\\
&
\ \bb ^*(\z)=\z^{\al+2}\sum\limits _{p=1}^{\infty}(\z^2-1)^{p-1}\bb^*_p\,,
\ \cb ^*(\z)=\z^{-\al-2}\sum\limits _{p=1}^{\infty}(\z^2-1)^{p-1}\cb^*_p\,,\nn
\\
&\bb (\z)=\z^{-\al}\sum\limits _{p=0}^{\infty}(\z^2-1)^{-p}\bb_p\,,
\ \cb (\z)=\z^{\al}\sum\limits _{p=0}^{\infty}(\z^2-1)^{-p}\cb_p\,.\nn
\end{align}
The operator $\tb^*(\z)$ is in the center of our algebra 
of creation-annihilation operators,  
\begin{align}
&[\tb^*(\z_1),\tb^*(\z _2)]=[\tb^*(\z_1),\cb^*(\z _2)]=
[\tb^*(\z_1),\bb^*(\z _2)]=0,
\nn\\
&[\tb^*(\z_1),\cb(\z _2)]=
[\tb^*(\z_1),\bb(\z _2)]=0\,.
\nn
\end{align}
The rest of the operators $\bb$, $\cb$, $\bb ^*$, $\cb ^*$ 
are fermionic. The only non-vanishing anti-commutators are 
\begin{align}
&[\bb (\z _1 ),\bb ^*(\z _2)]_+
=-\psi (\z_2/\z_1,\al)\,,\quad
[\cb (\z _1 ),\cb ^*(\z _2)]_+=
\psi (\z_1/\z_2,\al)\,,
\nn
\end{align}
where
\begin{align}
\psi (\z,\al)=\z ^{\al}\frac {\z ^2+1}{2(\z^2-1)}\,.
\label{psi}
\end{align}

Each Fourier mode has the block structure
\begin{align}
&\tb_p ^*\ :\ \mathcal{W}_{\al-s,s}\ \to\ \mathcal{W}_{\al-s,s}\label{blocks}\\
&\bb_p ^*,\cb_p\ :\ \mathcal{W}_{\al-s+1,s-1}\ \to\ \mathcal{W}_{\al-s,s}\,,
\quad \cb_p ^*,\bb_p\ :\ \mathcal{W}_{\al-s-1,s+1}\ \to\ \mathcal{W}_{\al-s,s}\,.\nn
\end{align}
Among them, $\taub=\tb^*_1/2$ plays a special role. 
It is the right shift by one site along the chain. 
Consider the set of operators
\begin{align}
\taub^{m}\tb^*_{p_1}\cdots \tb^*_{p_j}
\bb^*_{q_1}\cdots \bb^*_{q_k}
\cb^*_{r_1}\cdots \cb^*_{r_k}\Bigl(q^{2\al S(0)}\Bigr),
\label{basis}
\end{align}
where $m\in \Z$, $j,k\in\Z_{\ge0}$, 
$p_1\ge\cdots\ge p_j\ge2$, 
$q_1>\cdots>q_k\ge 1$ and 
$r_1>\cdots>r_k\ge 1$. 
It can be shown that \eqref{basis}
constitutes a basis of $\mathcal{W}_{\al,0}$
(we postpone the proof to other publication).

Now we start to consider the spaces
$\mathfrak{H}_\mathrm{S}$ and $\mathfrak{H}_\mathbf{M}$ 
together. We shall prove that
\begin{align}
&Z^{\kappa}\bigl\{\tb^*(\z)(X)\bigr\}
=2\rho (\z)Z^{\kappa}\{X\}\,,\label{maint}\\
&Z^{\kappa}\bigl\{\bb^*(\z)(X)\bigr\}
=\frac 1{2\pi i}\oint\limits _{\Gamma}
\omega (\z,\xi)
Z^{\kappa}\bigl\{\cb(\xi)(X)\bigr\}
\frac{d\xi^2}{\xi^2}\,,\label{mainb}\\
&Z^{\kappa}\bigl\{\cb^*(\z)(X)\bigr\}
=-\frac 1 {2\pi i}\oint\limits_{\Gamma}
\omega (\xi,\z)
Z^{\kappa}\bigl\{\bb(\xi)(X)\bigr\}
\frac{d\xi^2}{\xi^2}\,,
\label{mainc}
\end{align}
where $\Gamma$ goes around $\xi ^2=1$. 
In particular,
\begin{align}
&\rho (\z)=\frac{1}{2}
Z^{\kappa}\bigl\{\tb^*(\z)(q^{2\al S{(0)}})\bigr\},\nn
\\
& 
\omega (\z,\xi)
=Z^\kappa\left\{\bb^*(\z)\cb^*(\xi)(q^{2\al S(0)})\right\}\,.\nn
\end{align}
They
are given in terms of the eigenvalues of the transfer matrices
and the $Q$ operators, as well as other characteristics 
in the Matsubara direction. Their explicit formulas will be 
given in \eqref{defrho} and \eqref{def-omega2}. 
In Appendix C we shall explain that
$\omega (\z,\xi)$ is a quantum deformation of the canonical 
normalised second kind differential
on a hyperelliptic Riemann surface. 

From the equations (\ref{maint}), (\ref{mainb}), (\ref{mainc})
one immediately derives 
\begin{align}
&Z^{\kappa}\Bigl\{\tb^*(\z _1^0)\cdots
\tb^*(\z _k^0)\bb^*(\z _1^+)\cdots
\bb^*(\z _l^+ )
\cb^*(\z _l^- )\cdots\cb^*(\z _1^-)
\bigl(q^{2\al S(0)}    \bigr)  \Bigr\}\label{detform}\\
&\qquad\qquad
=\prod\limits _{p=1}^k 2\rho (\z ^0_p)\times  
\det\left(\omega(\z _i^+,\z _j^-)
\right)_{i,j=1,\cdots ,l}\,.\nn
\end{align}
Taking the Taylor coefficients in $(\z^{\epsilon}_{i})^2-1$ 
in both sides, one obtains the value of 
$Z^{\kappa}$ on an arbitrary element of the basis (\ref{basis}). 
This is the main result of the paper. 
\medskip

The text is organised as follows. 

In Section 2 we give the precise 
definition of the linear functional 
$Z^{\kappa}$ on the space $\mathcal{W}_{\al,0}$. 
We explain that on any particular $X\in \mathcal{W}_{\al,0}$ 
this functional reduces to a finite expression.

In Section 3 we prove (\ref{maint}). 
A significant part of this section is devoted 
to the reduction of $Z^{\kappa}\bigl\{\tb^*(\z)(X)\bigr\}$ to finite intervals. This is a point which is used in Section 6.

In Section 4 we explain some simple 
facts about transfer matrices and $Q$ operators in the Matsubara
direction. It should be considered as 
preparation for the following sections.

In Section 5 we introduce $q$-deformed 
Abelian integrals which are constructed 
via eigenvalues of $Q$ operators in the Matsubara
direction. We introduce $q$-deformed exact forms and present the 
$q$-deformed Riemann bilinear relations.

In Section 6 we consider 
$Z^{\kappa}\bigl\{\bb^*(\z)(X)\bigr\}$. 
We formulate two lemmas 
which are proved in Appendix A and Appendix B. 
Informally, these lemmas say that 
$Z^{\kappa}\bigl\{\bb^*(\z)(X)\bigr\}$ 
is a $q$-deformation of a 
normalised second kind Abelian differential 
in $\z$, which has a prescribed singularity specified
by the quasi-local operator $X$. 
In the classical limit, such a differential 
can be expressed using the canonical normalised 
second kind differential. 
Formula (\ref{mainb}) is an analogue 
in the quantum case, the function $\omega (\z,\xi)$ playing the role  of the 
canonical differential.

In Section 7 we define $\omega (\z,\xi)$.  
Using the results in Section 5, 
we prove that it satisfies 
all the necessary requirements. 

Finally, in Section 8 we prove
the main Theorem 
which states that (\ref{mainb}), (\ref{mainc}) hold.

As mentioned above, Appendices A, B are devoted to the 
proof of the technical Lemmas in Section 6. 
In Appendix C we consider the classical limit 
of the $q$-deformed Abelian integrals and differentials. 
Then we explain that the classical
limit of $\omega (\z,\xi)$ is indeed related to 
the canonical normalised second kind
differential. 
Some general information about 
differentials on Riemann surfaces is provided. 
Readers who are not familiar with 
Riemann surfaces are recommended to read 
Section 5 and Appendix C together. 
In  Appendix D we show equivalence of several non-degeneracy
conditions accepted in the text.

\section{Definition of the linear functional $Z^{\kappa}$}
\label{definitionZ}

Consider a two dimensional finite lattice 
composed of two one-dimensional chains: 
the space chain, and the imaginary time or the 
Matsubara chain. 
The space chain has $2l$ sites which are
labelled by the letters $j=-l+1,\cdots,l$. 
With every site the 
Pauli matrices $\sigma ^a_j$ are associated. 
The Matsubara chain has $\mathbf{n}$ sites 
labelled by boldface letters $\mathbf{m}=\mathbf{1,\cdots , n}$.  
With every site we associate a half-integral 
spin $s_{\mathbf{m}}$ and a parameter $\tau _{\mathbf{m}}$, in other words 
a $(2s_{\mathbf{m}}+1)$-dimensional evaluation representation of 
the quantum group $U_q(\slth)$.
We assume that $\sum_{\mathbf{m}=1}^{\mathbf{n}} s_{\mathbf{m}}$ 
is an integer. 

We define the monodromy matrix
$$
T_{j,\mathbf{M}}(\z)
=L_{j,\mathbf{n}}(\z/\tau _{\mathbf{n}})
L_{j,\mathbf{n-1}}(\z/\tau _{\mathbf{n-1}})
\cdots 
L_{j,\mathbf{1}}(\z/\tau _{\mathbf{1}})
\,.
$$
The $L$ operator $L_{j,\mathbf{m}}(\z/\tau _{\mathbf{m}})$
is obtained from the universal one
$$ 
L_j(\z)=q^{\frac1 2}\begin{pmatrix}
\z ^2q^{\frac{H+1}2}-q^{-\frac{H+1}2}&(q-q^{-1})\z Fq^{\frac{H-1}2}\\
(q-q^{-1})\z q^{-\frac{H-1}2}E &
\z ^2q^{-\frac{H-1}2}-q^{\frac{H-1}2}
\end{pmatrix}_j\,, 
$$
by letting 
$E$, $F$, $H$ act on the $(2s_{\mathbf{m}}+1)$-dimensional representation
of $U_q(\slt)$. 
We shall consider a twisted transfer matrix
\be
&&T_{\mathbf{M}}(\z ,\kappa)
=\Tr _j\bigl(T_{j,\mathbf{M}}(\z ,\kappa)
\bigr),  
\\
&&
T_{j,\mathbf{M}}(\z ,\kappa)=
T_{j,\mathbf{M}}(\z)q^{\kappa\sigma ^3_j}, 
\en
and use the letter $T(\z,\kappa)$ to denote its eigenvalues.

Now we are ready to introduce the main object of our study. On the 
space  $\mathcal{W}_{\al,0}$  consider the linear functional
\begin{align}
Z^{\kappa}\Bigl\{ q^{2\al S(0)}\mathcal{O}  \Bigr\}=\lim_{l\to\infty}
\frac {\Tr _{\mathbf{M}}\Tr _{[-l+1,l]}\(
T_{[-l+1,l],\mathbf{M}}
\ q^{2(\kappa S_{[-l+1,l]}+\al S_{[-l+1,0]})}\mathcal{O}
\)}
{\Tr _{\mathbf{M}}\Tr _{[-l+1,l]}\(T_{[-l+1,l],\mathbf{M}}
\ q^{2(\kappa S_{[-l+1,l]}+\al S_{[-l+1,0]})}\)}\,.
\label{partition}
\end{align}
Here and for later use, we set
\bea
T_{[k,m],\mathbf{M}}=T_{k,\mathbf{M}}\cdots T_{m,\mathbf{M}}\,,
\quad
T_{j,\mathbf{M}}=T_{j,\mathbf{M}}(1)\,.
\label{monodromy-matrix}
\ena
In terms of the equivalent six-vertex model,  
the functional \eqref{partition} is given
by the following partition function on the infinite cylinder:
\vskip .5cm
\hskip .5cm\includegraphics[height=7cm]{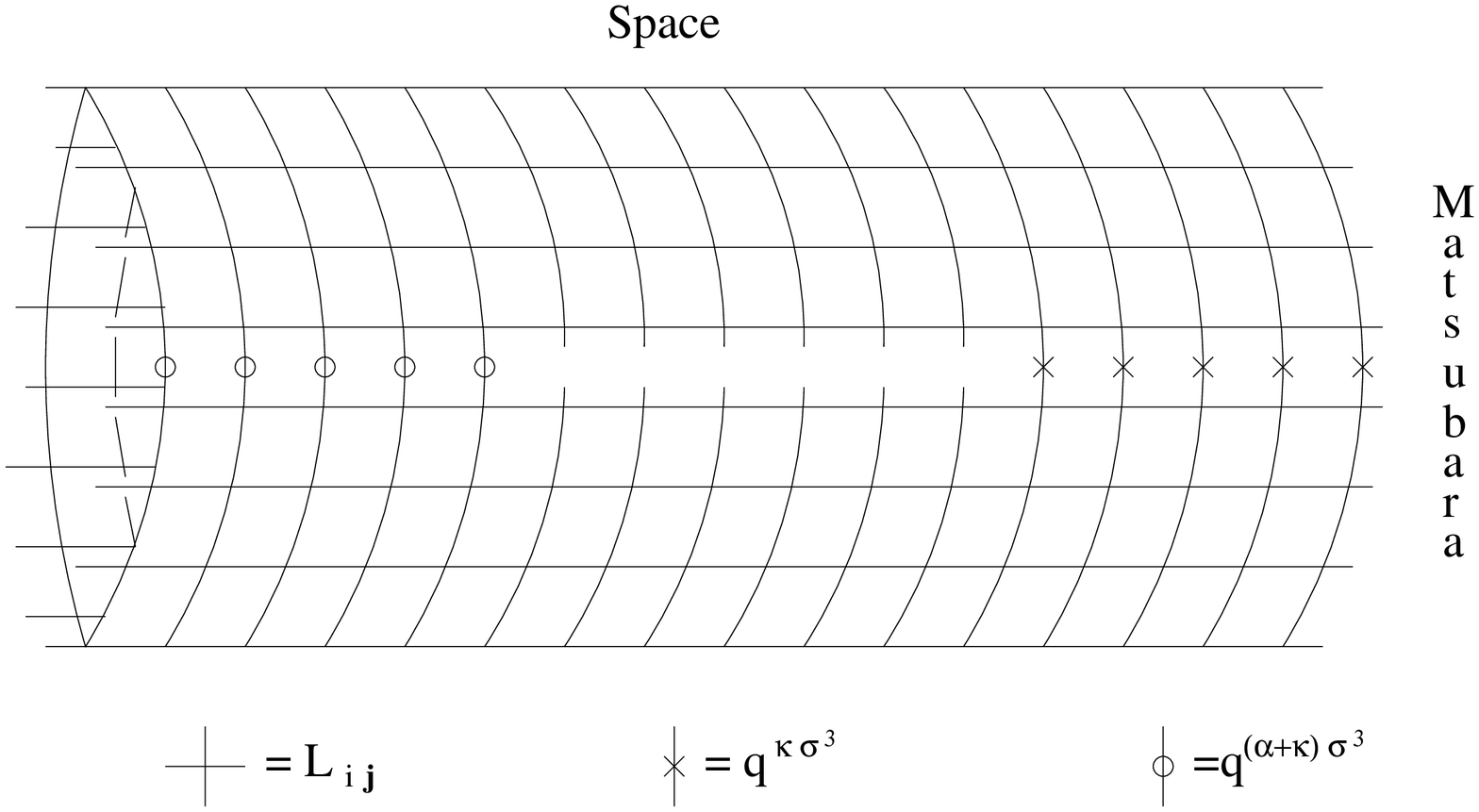}
\hfill{\it fig. 1}
\vskip .5cm
In this picture the broken links represent the operator $\mathcal{O}$: the
arrows on them are fixed.

Suppose that the transfer matrix 
$T_{\mathbf{M}}(1,\kappa)$ 
has a unique eigenvector $|\kappa\rangle$
such that the corresponding eigenvalue $T(1 ,\kappa)$ 
has the maximal absolute value. Similarly let
$\langle\kappa+\al |$ is be eigencovector 
of $T_{\mathbf{M}}(1, \kappa +\al)$ with
the eigenvalue $T(1, \kappa +\al)$ possessing the same property.
Let us remark that for the XXX model the spectrum in spin zero sector
is known to be simple even in the homogeneous case \cite{Tarasov}
even in the homogeneous case.
Suppose also that 
\begin{align}
\langle \kappa+\al|\kappa\rangle\ne 0\,.
\label{general}
\end{align}
It is clear that in this situation
\eqref{partition} reduces to the linear functional 
\begin{align}
Z^{\kappa}\Bigl\{ q^{2\al S(0)}\mathcal{O}  \Bigr\}=\lim_{l\to\infty}
\frac {\langle \kappa+\al|\Tr _{[-l+1,l]}\(T_{[-l+1,l],\mathbf{M}}\ q^{2(\kappa S_{[-l+1,l]}+\al S_{[-l+1,0]})}\mathcal{O}\)|\kappa\rangle}
{\langle \kappa+\al|\Tr _{[-l+1,l]}\(T_{[-l+1,l],\mathbf{M}}\ q^{2(\kappa S_{[-l+1,l]}+\al S_{[-l+1,0]})}\)|\kappa\rangle}\,.
\label{defZ1}
\end{align}
This is the object which we shall calculate. 
For any given quasi-local operator we can proceed further.
Indeed, if the support of 
$q^{2\al S(0)}\mathcal{O}=q^{2\al S(k-1)}X_{[k,m]} $
is contained in the interval $[k,m]$ of the space chain, then
\begin{align}
Z^{\kappa}\Bigl\{ q^{2\al S(k-1)}X_{[k,m]}  \Bigr\}=\rho(1)^{k-1}
\frac{\langle \kappa+\al|\Tr _{[k,m]}\(T_{[k,m],\mathbf{M}}
q^{2\kappa S_{[k,m]}}X_{[k,m]}\)|\kappa\rangle}{T(1,\kappa)^{m-k+1}
\ \langle \kappa+\al|\kappa\rangle}
\label{defZ2}\,,
\end{align}
where 
\begin{align}
\rho (\z )=\frac {T(\z,\al+\kappa)}{T(\z,\kappa)}
\label{defrho}\,. 
\end{align}
The function \eqref{defrho} will play an important  
role for us; we shall see in the next Section 
that this is the same function as in (\ref{maint}).
The last formula \eqref{defZ2}
shows, as it has been said,  that
the limit $l\to\infty$ is superfluous.  
It is put in the formula (\ref{defZ1}) 
just for the sake of treating all quasi-local operators simultaneously. 

It may look surprising that the thermodynamic limit 
in this approach is so simple. 
Usually, it requires  a complicated analysis of 
Bethe equations. 
Certainly, the complexity of the problem cannot disappear, 
and it is hidden in the 
limiting process $\mathbf{n}\to \infty$ to arrive at 
(\ref{TEV}). 
But the idea used in \cite{KMT}, 
and developed further in \cite{GKS},  
is that one can proceed 
rather far before taking this limit. 
This is especially true in the present work. 
The complexities of the thermodynamic limit of 
$Z^{\kappa}(X)$ are confined to only two functions,  
for which one can 
take the limit $\mathbf{n}\to \infty$ rather easily.

Let us emphasise one point which may be a source of confusion.
We started with (\ref{partition}), reduced it to (\ref{defZ1}) and
further to (\ref{defZ2}). The expression in the right hand side of
(\ref{defZ1}) is perfectly well defined for 
any pair of eigenvectors
of $T_\mathbf{M}(\z,\kappa)$ and 
$T_\mathbf{M}(\z,\kappa+\al)$ satisfying 
the condition (\ref{general}). 
For the computation of (\ref{defZ1})
we shall use only quite general facts 
concerning Bethe vectors, so they are valid in general. 
Still, the subject of our study is (\ref{partition}), 
and it reduces to (\ref{defZ1})
only for the eigenvectors corresponding to the  maximal eigenvalues. 

\section{Computation of $Z^{\kappa}\bigl\{\tb ^*(\z)\bigl( X\bigr)  \bigr\}$.}\label{compt}

According to (\ref{blocks}) we are actually interested only in the following
block of $\tb^*(\z)$:
$$\left.\tb^*(\z,\al)=\tb^*(\z)\right|_{\mathcal{W}_{\al,0}\to\mathcal{W}_{\al,0}}\,.$$
Let us recall the definition of the operator $\tb ^*(\z,\al)$  
given in the paper \cite{HGSII}. 
We start with a finite interval and  
an operator $X_{[k,m]}$. With this notation 
we imply that $X_{[k,m]}$ acts as $I$ outside $[k,m]$.
Define for $l>m$ $$
\tb ^*_{[k,l]}(\z,\al)(X_{[k,m]})=\Tr_a\(\mathbb{T}_{a,[k,l]}(\z,\al)(X_{[k,m]})\)\,,$$
where
\begin{align}
&\mathbb{T}_{a,[k,l]}(\z,\al)(X_{[k,m]})=T_{a,[k,l]}(\z)q^{\al\sigma ^3_a}X_{[k,m]}T_{a,[k,l]}(\z)^{-1}\,,\nn\\
&T_{a,[k,l]}(\z)=R_{a,l}(\z)\cdots R_{a,k}(\z)\,,\nn
\end{align}
$R_{a,j}(\z)$ is the standard 4 by 4 $R$-matrix
(see e.g. (2.4), \cite{HGSII}). Define further
$$\widetilde\R^\vee_{i,j}(\z^2) =\z^{\bS_i}\R_{i,j}(\z)\mathbb{P}_{i,j}\z^{-\bS_j}=1+(\z^2-1)\rb_{i,j}(\z^2)
\,,$$
where $\mathbb{P}_{i,j}(\cdot)=P_{i,j}(\cdot )P_{i,j}$ 
and $P_{i,j}$ is the permutation operator. 
Since $ \widetilde\R^\vee_{i,j}(1)=1$, 
$\rb_{i,j}(\z^2)$ is regular at $\z ^2=1$. 
Then \cite{HGSII}
\be
&&\tb^*_{[k,l]}(\z,\al)(X_{[k,m]})
=2\sum_{j=m}^{l-1}(\z^2-1)^{j-m}
\rb_{j+1,j}(\z^2)\cdots\rb_{m+2,m+1}(\z^2)
\tilde \R^\vee(\z^2)(Y_{[k,m+1]})\\
&&\qquad+(\z^2-1)^{l-m}\Tr_a\left\{\rb_{a,l}(\z^2)\rb_{l,l-1}(\z^2)\cdots
\rb_{m+2,m+1}(\z^2)\tilde\R^\vee(\z^2)(Y_{[k,m+1]})\right\}\,,
\en
where $Y_{[k,m+1]}=q^{\al\s^3_k}\taub(X_{[k,m]})$, 
$\taub$ is the shift by one site of the chain to the right, and 
\be
\tilde \R^\vee(\z^2)(Y_{[k,m+1]})=\widetilde\R^\vee_{m+1,m}(\z^2)
\cdots\widetilde\R^\vee_{k+1,k}(\z^2)(Y_{[k,m+1]}).
\en
Hence the limit $l\to\infty$ is well defined as a 
power series in $\z ^2-1$:
\begin{align}
\tb^*(\z,\al)&\bigl(q^{2\al S(k-1)}X_{[k,m]}\bigr)
=\lim_{l\rightarrow\infty}
q^{2\al S(k-1)}\tb^*_{[k,l]}(\z,\al)(X_{[k,m]})\nn\\
&=2q^{2\al S(k-1)}\sum_{j=m}^\infty(\z^2-1)^{j-m}\rb_{j+1,j}(\z^2)
\cdots\rb_{m+2,m+1}(\z^2)\tilde \R^\vee(\z^2)(Y_{[k,m+1]})\,.\nn
\end{align}

We repeated these definitions because we want to make clear
the following point. Take a 2 by 2 matrix $K$ such that $\Tr(K)\ne 0$ and
consider the following object:
$$\tb ^*_{[k,l]}(\z,\al,K)(X_{[k,m]})=\frac 2 {\Tr(K)}
\Tr_a\(K_a\mathbb{T}_{a,[k,l]}(\z,\al)(X_{[k,m]})\)\,,$$
Then it is easy to conclude from the above definition that
\begin{align}
\tb ^*_{[k,l]}(\z,\al,K)(X_{[k,m]})=\tb ^*_{[k,l]}(\z,\al)(X_{[k,m]})
\quad \text{mod}\ (\z^2-1)^{l-m}\,.\label{K}
\end{align}

\begin{lem}\label{thm:t*rho}
We have
\begin{align}
Z^{\kappa}\Bigl\{\tb ^*(\z)\bigl(q^{2\al S(0)}\mathcal{O}\bigr)\Bigr\}=
2\rho (\z)Z^{\kappa}\Bigl\{q^{2\al S(0)}\mathcal{O}\Bigr\}\,.
\label{Zt*}
\end{align}
\end{lem}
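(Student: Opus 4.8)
The plan is to reduce both sides of \eqref{Zt*} to a finite interval and then to drag the auxiliary line carried by $\tb^*(\z)$ around the Matsubara direction by the Yang--Baxter equation, so that it acts on $\mathfrak{H}_{\mathbf{M}}$ as a twisted transfer matrix whose dominant eigenvalues assemble into $\rho(\z)$.

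First I would make the left-hand side finite. Writing $q^{2\al S(0)}\mathcal{O}=q^{2\al S(k-1)}X_{[k,m]}$ and inserting the finite-interval form $\tb^*_{[k,l]}(\z,\al)(X_{[k,m]})=\Tr_a\bigl(T_{a,[k,l]}(\z)q^{\al\sigma^3_a}X_{[k,m]}T_{a,[k,l]}(\z)^{-1}\bigr)$ into \eqref{defZ2} for the interval $[k,l]$, the quantity $Z^{\kappa}\{\tb^*(\z)(X)\}$ becomes, for large $l$,
\begin{equation*}
\rho(1)^{k-1}\,\frac{\langle\kappa+\al|\Tr_{[k,l]}\Tr_a\bigl(T_{[k,l],\mathbf{M}}\,q^{2\kappa S_{[k,l]}}\,T_{a,[k,l]}(\z)\,q^{\al\sigma^3_a}X_{[k,m]}\,T_{a,[k,l]}(\z)^{-1}\bigr)|\kappa\rangle}{T(1,\kappa)^{l-k+1}\,\langle\kappa+\al|\kappa\rangle}\,.
\end{equation*}
The $K$-freedom \eqref{K} ensures that whatever happens at the right boundary site $l$ enters only at order $(\z^2-1)^{l-m}$, hence disappears when $l\to\infty$ is taken as a power series in $\z^2-1$; this is exactly why the reduction to finite intervals is the technical core of the section.

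The key step is the Yang--Baxter manipulation. After absorbing the twist into the monodromies using the ice-rule commutation $[R_{a,j}(\z),q^{\kappa(\sigma^3_a+\sigma^3_j)}]=0$, I would apply the $RTT$ relation
\begin{equation*}
R_{a,j}(\z)\,T_{a,\mathbf{M}}(\z)\,T_{j,\mathbf{M}}=T_{j,\mathbf{M}}\,T_{a,\mathbf{M}}(\z)\,R_{a,j}(\z)
\end{equation*}
site by site to transport the space-direction monodromy $T_{a,[k,l]}(\z)$ through the column $T_{[k,l],\mathbf{M}}$. This ``train-track'' move trades the adjoint action of $T_{a,[k,l]}(\z)$ for that of the Matsubara-direction monodromy $T_{a,\mathbf{M}}(\z)$ on $\mathfrak{H}_{\mathbf{M}}$, producing along the way background factors $T(1,\kappa)$ from the sites $m+1,\dots,l$ untouched by $X_{[k,m]}$. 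These cancel the power of $T(1,\kappa)$ in the denominator, while $X_{[k,m]}$ is carried through unchanged, which is what ultimately makes the answer proportional to $Z^{\kappa}\{X\}$ with the very same $X$.

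It remains to extract the eigenvalues. The $\al$-twist $q^{\al\sigma^3_a}$ sitting between the two copies of the auxiliary monodromy, together with the tail that already distinguishes $\langle\kappa+\al|$ from $|\kappa\rangle$, forces the left copy to act on the dominant eigencovector as $T_{\mathbf{M}}(\z,\kappa+\al)$, giving $\langle\kappa+\al|T_{\mathbf{M}}(\z,\kappa+\al)=T(\z,\kappa+\al)\langle\kappa+\al|$, and the right copy $T_{a,[k,l]}(\z)^{-1}$ to act on $|\kappa\rangle$ as $T_{\mathbf{M}}(\z,\kappa)^{-1}$, giving $T(\z,\kappa)^{-1}$. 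Their product is $\rho(\z)=T(\z,\kappa+\al)/T(\z,\kappa)$ by \eqref{defrho}, and the residual factor $2$ is fixed by the normalisation of the auxiliary trace --- the same normalisation for which \eqref{Zt*} at $\z^2=1$ degenerates to $Z^{\kappa}\{\taub(X)\}=\rho(1)Z^{\kappa}\{X\}$, $\taub=\tb^*_1/2$ being the unit shift. The main obstacle is precisely this last bookkeeping: proving rigorously that, under the eigenvector projections, the space-direction conjugation by $T_{a,[k,l]}(\z)$ factorises into $T(\z,\kappa+\al)$ on the bra side and $T(\z,\kappa)^{-1}$ on the ket side with $X$ decoupling completely, rather than leaving a genuine $\z$-dependent entanglement between the auxiliary line and the operator.
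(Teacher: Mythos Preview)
Your overall intuition is right, but the proposal has a genuine gap at the Yang--Baxter step, and the role you assign to \eqref{K} is not the one that makes the argument work.

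You invoke \eqref{K} only to control the order-$(\z^2-1)^{l-m}$ remainder in the limit $l\to\infty$. In the paper the point of \eqref{K} is the opposite: it is the mechanism that \emph{inserts} the missing Matsubara monodromy. One chooses $K_a=T_{a,\mathbf{M}}(\z)q^{\kappa\sigma^3_a}$; since $|\kappa\rangle$ is an eigenvector of $\Tr_a K_a$ with eigenvalue $T(\z,\kappa)$, the $K$-freedom replaces $\Tr_a\bigl(\mathbb{T}_{a,[1,l]}(\z,\al)(X)\bigr)$ by $\frac{2}{T(\z,\kappa)}\Tr_a\bigl(T_{a,\mathbf{M}}(\z,\kappa)\,\mathbb{T}_{a,[1,l]}(\z,\al)(X)\bigr)$ modulo $(\z^2-1)^{l-m}$. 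The denominator $T(\z,\kappa)$ in $\rho(\z)$ appears here, from the normalisation $2/\Tr K$, and not from any ``right copy $T_{a,[k,l]}(\z)^{-1}$ acting on $|\kappa\rangle$''. That last statement in your proposal cannot be made sense of: $T_{a,[k,l]}(\z)^{-1}$ acts on $V_a\otimes\mathfrak{H}_{\mathrm{S}}$, while $|\kappa\rangle\in\mathfrak{H}_{\mathbf{M}}$.

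Once $T_{a,\mathbf{M}}(\z,\kappa)$ has been inserted, the $RTT$ relation you quote becomes applicable and gives
\[
T_{[1,l],\mathbf{M}}\,q^{2\kappa S_{[1,l]}}\,T_{a,\mathbf{M}}(\z,\kappa)\,\mathbb{T}_{a,[1,l]}(\z,\al)(X)
=\mathbb{T}_{a,[1,l]}(\z)\Bigl(T_{a,\mathbf{M}}(\z)q^{(\kappa+\al)\sigma^3_a}\,T_{[1,l],\mathbf{M}}\,q^{2\kappa S_{[1,l]}}\,X\Bigr).
\]
The outer conjugation $\mathbb{T}_{a,[1,l]}(\z)$ is then removed not by letting it act on $|\kappa\rangle$ but by cyclicity of the \emph{space} trace $\Tr_{[1,l]}$. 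What remains is $\Tr_a\bigl(T_{a,\mathbf{M}}(\z)q^{(\kappa+\al)\sigma^3_a}\cdots\bigr)$, which hits $\langle\kappa+\al|$ and yields $T(\z,\kappa+\al)$. No ``background factors $T(1,\kappa)$'' are produced by the train-track move; those are already accounted for in \eqref{defZ2}. The bookkeeping you flag as the main obstacle in your last paragraph is therefore not an obstacle at all once the $K$-insertion is done correctly: the space-direction conjugation disappears cleanly by cyclicity, and there is no residual entanglement between the auxiliary line and $X$.
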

\begin{proof}
Without loss of generality, 
let $\mathcal{O}=X_{[1,m]}$ be localised on the interval $[1,m]$.  
\begin{align}
&Z^{\kappa}\Bigl\{
\tb ^*(\z,\al)\bigl(X_{[1,m]}q^{2\al S(0)}\bigr)
\Bigr\}\nn\\&=
\lim _{l\to \infty}
\frac{\langle\kappa+\al|\Tr _{[1,l],a}\Bigl(T_{[1,l],\mathbf{M}}
q^{2\kappa S_{[1,l]}}
\mathbb{T}_{a,[1,l]}(\z,\al)(X_{[1,m]})\Bigr)|\kappa\rangle}{T(1,\kappa)^{l}
\langle\kappa+\al|\kappa\rangle}\,.\nn
\end{align}
From the considerations  above we obtain
\begin{align}
&\langle\kappa+\al|\Tr _{[1,l],a}\Bigl(T_{[1,l],\mathbf{M}}q^{2\kappa S_{[1,l]}}
\mathbb{T}_{a,[1,l]}(\z)(X_{[1,m]})\Bigr)|\kappa\rangle
\nn\\
&=\frac 2 {T(\z,\kappa)}
\langle\kappa+\al|\Tr _{[1,l],a}\Bigl(T_{[1,l],\mathbf{M}}q^{2\kappa S_{[1,l]}}
T_{a,\mathbf{M}}(\z)q^{\kappa\sigma^3_a}\mathbb{T}_{a,[1,l]}(\z,\al)(X_{[1,m]})\Bigr)|\kappa\rangle\nn\\
&\qquad\qquad\qquad\qquad\qquad\qquad\qquad\qquad\qquad\qquad\qquad\qquad\qquad \text{mod}\ (\z^2-1)^{l-m}\,.
\nn
\end{align}
The idea here is exactly as in (\ref{K}). 
The monodromy matrix $T_{a,\mathbf{M}}(\z)q^{\kappa\sigma ^3_a}$ 
plays the role of $K_a$. The fact
that it carries the additional structure as operator
in the Matsubara space is not important.  
What is important is
that the state $|\kappa\rangle$ 
is an eigenstate of 
$\Tr _a\bigl(T_{a,\mathbf{M}}(\z)q^{\kappa\sigma ^3_a}\bigr)$ 
with eigenvalue $T(\z,\kappa)$. 
Now we can proceed using the Yang-Baxter equation and the 
cyclicity of trace:
\begin{align}
&\frac 2 {T(\z,\kappa)}
\langle\kappa+\al|\Tr _{[1,l],a}\Bigl(T_{[1,l],\mathbf{M}}\ q^{2\kappa S_{[1,l]}}
T_{a,M}(\z)q^{\kappa\sigma^3_a}\mathbb{T}_{a,[1,l]}(\z,\al)(X_{[1,m]})\Bigr)|\kappa\rangle\nn\\
&=\frac 2 {T(\z,\kappa)}
\langle\kappa+\al|\Tr _{[1,l],a}\Bigl(\mathbb{T}_{a,[1,l]}(\z)\Bigl(T_{a,\mathbf{M}}(\z )q^{(\kappa+\al)\sigma ^3_a}T_{[1,l],\mathbf{M}}\ q^{2\kappa S_{[1,l]}}
X_{[1,m]}\Bigr)\Bigr)|\kappa\rangle\nn\\
&=\frac 2 {T(\z,\kappa)}
\langle\kappa+\al|\Tr _{[1,l],a}\Bigl(T_{a,\mathbf{M}}(\z )q^{(\kappa+\al)\sigma ^3_a}T_{[1,l],\mathbf{M}}\ q^{2\kappa S_{[1,l]}}
X_{[1,m]}\Bigr)|\kappa\rangle\nn\\
&=2\rho(\z)
\langle\kappa+\al|\Tr _{[1,l]}\Bigl(T_{[1,l],\mathbf{M}}\ q^{2\kappa S_{[1,l]}}
X_{[1,m]}\Bigr)|\kappa\rangle\nn\,,
\end{align}
which proves the assertion. 
\end{proof}

Some comments on (\ref{Zt*}) have to be made. It has been 
said that $\taub=\tb^*_1/2$ is the shift by one site of the chain 
to the right. According to \cite{HGSII} the rest of $\tb^*_p$ is constructed
from the adjoint action of local integrals of motion. 
Then, looking at (\ref{partition}) 
one may wonder where $\rho(\z)$ comes from. 
Naively, it should not be in the right hand side 
because
$\taub$ and adjoints of the integrals of motion commute with 
$\Tr _{\mathbf{M}}(T_{Q,\mathbf{M}})$ and hence they should not contribute to (\ref{partition}) due to the 
cyclicity of trace. 
However, this is not correct 
in the presence of disorder field $q^{2\al S(0)}$. 
Let us explain this point in the simplest case  
$\taub=\tb^*_1/2$. Consider the definition
(\ref{partition}). For finite $l$ in (\ref{partition}), 
we define the cyclic shift by one site $\taub^{\text{periodic}}$,  
which acts in particular as 
$$
\taub^{\text{periodic}}
\Bigl(q^{2\al S_{[-l+1,0]}}\Bigr)= q^{2\al S_{[-l+2,1]}}\,.
$$
On the other hand, it is easy to see from the definition 
that  our operator $\taub$ acts as
$$
\taub\Bigl(q^{2\al S_{[-l+1,0]}}\Bigr)=  q^{2\al S_{[-l+1,1]}}\,.
$$
This difference accounts for the appearance of $\rho(1)$
in the functional (\ref{partition}). 
A similar thing happens with the 
adjoint action of the local integral of motion $\mathbb{I}_p(\cdot)=
[I_p,\cdot ]$. 
The operator $\mathbb{I}_{p}^{ \text{periodic}}$ feels 
the two inhomogeneities 
of $q^{2\al S_{[-l+1,0]}}$: between sites $0$ and $1$ 
and between sites $-l+1$ and $l$, while
the operators entering the definition of $\tb^*(\z)$ 
feel only the first one. 
This is the reason why $\rho(\z)$ appears. 
There are two cases when $\rho(\z)=1$. 
The first one is trivial: $\al=0$. The second one
is the case of VEVs (\ref{exp}) which was 
considered in \cite{HGS}, \cite{HGSII}.

Before proceeding to $\bb ^*$ and $\cb ^*$ we have to give some explanation
about $q$-deformed Abelian integrals.

\section{Spectral properties in Matsubara direction.}\label{SPMD}

Consider the transfer matrix
$$T_{\mathbf{M}}(\z ,\la)=\Tr _a\bigl(T_{a,\mathbf{M}}(\z)q^{\la\sigma ^3_a}\bigr)\,.$$ 
Let us introduce the $Q$ operator 
\begin{align}
Q^{+}_{\mathbf{M}}(\z ,\la)=\z ^{\la-\mathbf{S}}\Tr _A\(T_{A,\mathbf{M}}(\z ,\la)\)
\nn\,,
\end{align}
where $\mathbf{S}$ is the total spin operator acting on the Matsubara chain, 
and
$$
T_{A,\mathbf{M}}(\z ,\la)=L_{A,\mathbf{n}}(\z/\tau _{\mathbf{n}})
\cdots  L_{A,\mathbf{1}}(\z/\tau _{\mathbf{1}})q^{2\la D_A}\,.
$$
Here the $L$ operators are associated with the $q$-oscillator algebra
with generators $\ab _A, \ab _A^* ,D_A$. 
For the notation and conventions, see \cite{HGSII}. 
If $s_{\mathbf{m}}=1/2$, then 
\begin{align}
L_{A,\mathbf{m}}(\z)=\begin{pmatrix}
1-\z ^2q^{2D_A+2} &-\z\ab _A\\
-\z\ab _A^* &1
\end{pmatrix}_{\mathbf{m}}
\begin{pmatrix}
q^{-D_A} &0\\ 0 &q^{D_A}
\end{pmatrix}_{\mathbf{m}}\,. 
\label{Lop}\end{align}
To obtain $L_{A,\mathbf{m}}(\z)$ 
for other spins, one applies the standard fusion procedure.
The $Q$ operator $Q^{-}_{\mathbf{M}}(\z,\la)$ is defined by
\begin{align}
Q^{-}_{\mathbf{M}}(\z,\la )=J\ Q^{+}_{\mathbf{M}}(\z ,-\la)\ J\,,\nn
\end{align}
where $J$ is the operator of spin reversal. 
 
These $Q$ operators satisfy the Baxter equation:
\begin{align}
&T_{\mathbf{M}}(\z,\la)Q^{\pm}_{\mathbf{M}}(\z ,\la)=d(\z)Q^{\pm}_{\mathbf{M}}(\z q ,\la)
+
a(\z)Q^{\pm}_{\mathbf{M}}(\z q^{-1}, \la)
\,.\label{Baxter}
\end{align}
These equations hold for the eigenvalues because $T_{\mathbf{M}}(\z,\la)$
commute with $Q^{\pm}_{\mathbf{M}}(\xi ,\la)$. 
 
The functions $a(\z)$, $d(\z)$  are defined by spins and inhomogeneities present in the 
Matsubara direction. 
\begin{align}
&a(\z)=\prod\limits _{\mathbf{m=1}}^{\mathbf{n}}
a_{s_{\mathbf{m}}}(\z/\tau_{\mathbf{m}}),\qquad
a_s(\z)=\z^2q^{2s+1}-1\,,\label{eq-ad}\\
& d(\z)=\prod\limits _{\mathbf{m=1}}^{\mathbf{n}}
d_{s_{\mathbf{m}}}(\z/\tau_{\mathbf{m}}),\qquad
d_s(\z)=\z^2q^{-2s+1}-1
\,.\nn
\end{align}
Let us cite one formula from \cite{BLZ}:
\begin{align}
Q^{+}_{\mathbf{M}}(\z ,\la)Q^{-}_{\mathbf{M}}(\z q,\la)
-Q^{-}_{\mathbf{M}}(\z ,\la)Q^{+}_{\mathbf{M}}(\z q,\la)
=\frac 1 {q^{\la  -\mathbf{S}}-q^{-\la +\mathbf{S}}}
W(\z),\
\label{wronskian}
\end{align}
where 
$$
W(\z)=\prod\limits _{\mathbf{m=1}}^{\mathbf{n}}
w_{s_{\mathbf{m}}}(\z /\tau_{\mathbf{m}} ),
\quad w_{s}(\z )=\prod\limits _{k=1}^{2s}(1-\z ^2q^{2k-2s+1})
\,.
$$

Suppose that $T_{\mathbf{M}}(\z,\la)$ has a unique eigenvector $|\la\rangle$ 
with eigenvalue $T(\z,\la)$ such that $T(1,\la)$ 
has  maximal absolute value.
We denote by 
$Q^{\pm}(\xi ,\la)$ the eigenvalues of 
$Q^{\pm}_{\mathbf{M}}(\xi ,\la)$ on $|\la \rangle$.
If the eigenvector $|\la \rangle$ has spin 
$d-\sum\limits _{\mathbf{m}=1}^{\mathbf{n}} s_{\mathbf{m}}$, 
it follows 
from the form of the $L$ operator (\ref{Lop})
that $\z^{-\la+\mathbf{S}}Q^{+}(\z ,\la)$
is a polynomial in $\z^2$ of degree $d$, 
while $\z ^{\la-\mathbf{S}}Q^{-}(\z ,\la)$ 
is of degree 
$2\sum\limits _{\mathbf{m}=1}^{\mathbf{n}} s_{\mathbf{m}}-d$.
Due to the quantum Wronskian relation (\ref{wronskian}),  
their leading and the lowest coefficients 
are both nonzero.

Let us discuss the symmetry under negating $\la$. We have
\begin{align}
T_{\mathbf{M}}(\z,-\la)=J\ T_{\mathbf{M}}(\z,\la)\ J\,,\label{hhh}
\end{align}
which implies that the spectra of $T_{\mathbf{M}}(\z,\la)$ and $T_{\mathbf{M}}(\z,-\la)$ coincide, 
and, in particular,
\begin{align}
T(\z,\la)=T(\z,-\la)\,.\label{symT}
\end{align}
Furthermore, the equation
$$
Q^-_{\mathbf{M}}(\z,\la)=J\ Q^+_{\mathbf{M}}(\z,-\la)\ J
$$
implies that
\begin{align}
Q^-(\z,\la)=Q^+(\z,-\la)\,.\label{symQ}
\end{align}
Due to \eqref{hhh} the vectors $|\la\rangle $ and $|-\la\rangle$ have opposite spins.

\section{Deformed Abelian integrals}\label{DAI}

Working with quantum integrable models, 
one should not neglect the important piece of intuition  provided 
by the method of Separation of Variables (SoV) discovered by
Sklyanin \cite{Skl}. It has been explained in \cite{SmiToda} that
the matrix elements of observables in the SoV method are
expressed in terms of deformed Abelian integrals. In the
case under consideration, which is related to the algebra
$U_q(\slth)$, these integrals are deformations of hyperelliptic ones.
Let us give their definition. 

Introduce the function $\varphi (\z)$ which
satisfies the equation
\begin{align}
a(\z q) \varphi (\z q)=d(\z)\varphi (\z)\,.
\label{eqphi}
\end{align}
This function is elementary, 
$$
\varphi (\z)
=\prod\limits_{\mathbf{m=1}}^{\mathbf{n}}
\varphi_{s_{\mathbf{m}}}(\z/\tau _{\mathbf{m}})\,,\quad
\varphi_s(\z)=\prod\limits _{k=0}^{2s}\frac 1
{\z ^2q^{-2s+2k+1}-1}\,.
$$
In addition to the contour $\Gamma$ which encircles $\z^2=1$, 
we consider $\mathbf{n+1}$ contours in the $\z ^2$ plane:   
$\Gamma _{\mathbf{0}}$ which goes around $0$, 
and $\Gamma _{\mathbf{m}}$ which encircles
the poles $\z^2= \tau ^2_{\mathbf{m}}q^{2s_{\mathbf{m}}-2k-1}$
($k=0,\cdots, 2s_{\mathbf{m}}$)
of $\varphi_{s_{\mathbf{m}}}(\z/\tau _{\mathbf{m}})$. 

In the following, we use the $q$-difference operator
$$
\Delta _{\z}f(\z)=f(\z q)-f(\z q^{-1})\,.
$$
It acts on the class of functions of the form 
$f(\z)=\z^{\la}P(\z^2)$, $P$ being a polynomial in $\z^2$
and $q^{2(n+\la)}\neq1$ for all integers $n$.    
Within this class the $q$-primitive $\Delta^{-1}_\z f(\z)$ 
is defined uniquely. 

There are two kinds of deformed Abelian integrals,  
\begin{align}
\int\limits _{\Gamma_{\mathbf{m}}}f^{\pm}(\z)Q^{\mp}(\z ,\kappa+\al)Q^{\pm}(\z,\kappa)
\varphi (\z)\frac{d\z ^2}{\z ^2}\,,\label{defAI}
\end{align}
where $\z^{\mp\al}f^{\pm}(\z)$ is a  polynomial in $\z^2$, in order that
the integrand is single-valued. 

We start our study of deformed Abelian integrals with the 
following technical lemma. 
\begin{lem}\label{1stAI}
Let $\z^{\mp\al}f^{\pm}(\z)$ be a polynomial in $\z^2$. Then,  
for $\mathbf{m}=\mathbf{0},\cdots,\mathbf{n}$, the following identities hold. 
\begin{align}
&\int\limits _{\Gamma_{\mathbf{m}}}
\Bigl\{T(\z,\kappa)\Delta_\z^{-1}f^{\pm}(\z q)
 -T(\z,\kappa+\al)\Delta_\z^{-1}f^{\pm}(\z)
\Bigr\}
Q^{\mp}(\z ,\kappa+\al)Q^{\pm}(\z,\kappa)
\varphi (\z)\frac{d\z ^2}{\z ^2}
\label{AIa}\\
&=\int\limits _{\Gamma_{\mathbf{m}}}
f^{\pm}(\z)a(\z)Q^{\mp}(\z,\kappa+\al)Q^{\pm}(\z q^{-1},\kappa)
\varphi (\z)\frac{d\z ^2}{\z ^2}\,,
\nn
\end{align}
\begin{align}
&\int\limits _{\Gamma_{\mathbf{m}}}
\Bigl\{T(\z,\kappa +\al)\Delta_\z^{-1}f^{\pm}(\z )
 -T(\z,\kappa)\Delta_\z^{-1}f^{\pm}(\z q^{-1})
\Bigr\}
Q^{\mp}(\z ,\kappa+\al)Q^{\pm}(\z,\kappa)
\varphi (\z)\frac{d\z ^2}{\z ^2}
\label{AId}\\
&=\int\limits _{\Gamma_{\mathbf{m}}}
f^{\pm}(\z)d(\z)Q^{\mp}(\z ,\kappa+\al)Q^{\pm}(\z q,\kappa)
\varphi (\z)\frac{d\z ^2}{\z ^2}\,.
\nn
\end{align}
\end{lem}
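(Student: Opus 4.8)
The plan is to use the Baxter equation (\ref{Baxter}) to trade the transfer-matrix eigenvalues for shifted $Q$-eigenvalues, and then to recognise the difference of the two sides of (\ref{AIa}) as the period over $\Gamma_{\mathbf{m}}$ of a $q$-exact form, which vanishes because $\Gamma_{\mathbf{m}}$ and its $q$-shift are homologous. Throughout write $g^{\pm}=\Delta_\z^{-1}f^{\pm}$, so that by definition of the $q$-primitive $g^{\pm}(\z q)-g^{\pm}(\z q^{-1})=f^{\pm}(\z)$. In the bracket on the left of (\ref{AIa}) the factor $\Delta_\z^{-1}f^{\pm}(\z q)$ multiplies $T(\z,\kappa)$ and $\Delta_\z^{-1}f^{\pm}(\z)$ multiplies $T(\z,\kappa+\al)$; inside the integral these stand next to $Q^{\pm}(\z,\kappa)$ and $Q^{\mp}(\z,\kappa+\al)$ respectively, so I would substitute (\ref{Baxter}) for $T(\z,\kappa)Q^{\pm}(\z,\kappa)$ and for $T(\z,\kappa+\al)Q^{\mp}(\z,\kappa+\al)$. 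This turns the integrand into a sum of four terms in which each $Q$ is shifted by $q^{\pm1}$.

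One of these four terms, namely $g^{\pm}(\z q)a(\z)Q^{\pm}(\z q^{-1},\kappa)Q^{\mp}(\z,\kappa+\al)\varphi(\z)$, is exactly the first of the two pieces produced by writing $f^{\pm}(\z)=g^{\pm}(\z q)-g^{\pm}(\z q^{-1})$ in the right-hand side of (\ref{AIa}). Subtracting the right-hand side thus leaves four terms: the two proportional to $d(\z)$,
\begin{align}
d(\z)\varphi(\z)\Theta(\z),\qquad
\Theta(\z)=g^{\pm}(\z q)Q^{\pm}(\z q,\kappa)Q^{\mp}(\z,\kappa+\al)-g^{\pm}(\z)Q^{\pm}(\z,\kappa)Q^{\mp}(\z q,\kappa+\al)\,,\nn
\end{align}
and the two proportional to $a(\z)$, which assemble into $-a(\z)\varphi(\z)\Theta(\z q^{-1})$. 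The functional equation (\ref{eqphi}), rewritten as $a(\z)\varphi(\z)=d(\z q^{-1})\varphi(\z q^{-1})$, converts the latter into $-\Xi(\z q^{-1})$ with $\Xi(\z):=d(\z)\varphi(\z)\Theta(\z)$. Hence the difference of the two sides of (\ref{AIa}) equals $\int_{\Gamma_{\mathbf{m}}}\bigl(\Xi(\z)-\Xi(\z q^{-1})\bigr)\frac{d\z^2}{\z^2}$, the period of a $q$-exact form.

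It remains to prove that this period vanishes. Substituting $\z\to\z q^{-1}$ in the second summand rewrites its integral as $\int_{q^{-1}\Gamma_{\mathbf{m}}}\Xi(\z)\frac{d\z^2}{\z^2}$, so the period is the integral of the single-valued form $\Xi(\z)\frac{d\z^2}{\z^2}$ over $\Gamma_{\mathbf{m}}-q^{-1}\Gamma_{\mathbf{m}}$, and it suffices that these two contours enclose the same poles of $\Xi$. The hypothesis that $\z^{\mp\al}f^{\pm}(\z)$ is a polynomial in $\z^2$ makes $g^{\pm}$ of the same type, and together with the description of $Q^{\pm}$ as $\z^{\pm(\kappa-\mathbf{S})}$ times a polynomial in $\z^2$ from Section \ref{SPMD} it guarantees that $\Xi$ is single-valued in $\z^2$ and that its only poles come from $\varphi$. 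The key point is that the factor $d(\z)$ cancels the pole of each string $\z^2=\tau_{\mathbf{m}}^2q^{2s_{\mathbf{m}}-2k-1}$ at $k=0$, i.e. the end of the string that would otherwise fall outside $q^{-1}\Gamma_{\mathbf{m}}$, while the extra slot picked up by $q^{-1}\Gamma_{\mathbf{m}}$ at the opposite end is not a pole of $\varphi$; consequently $\Gamma_{\mathbf{m}}$ and $q^{-1}\Gamma_{\mathbf{m}}$ enclose exactly the same poles of $\Xi$. Verifying this matching for every $\mathbf{m}=\mathbf{0},\ldots,\mathbf{n}$ — in particular for $\Gamma_{\mathbf{0}}$ around $\z^2=0$, where the order of $\Xi$ at the origin and at infinity is controlled by the degrees and the nonvanishing of the extreme coefficients of the $Q$-polynomials recorded in Section \ref{SPMD} — is the delicate part and the main obstacle.

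The identity (\ref{AId}) is proved by the mirror of this argument. One again eliminates the transfer matrices by (\ref{Baxter}), recognises the term matching the right-hand side through $f^{\pm}(\z)=g^{\pm}(\z q)-g^{\pm}(\z q^{-1})$, and collects the remainder into a $q$-exact form $\bigl(\Xi'(\z q)-\Xi'(\z)\bigr)\frac{d\z^2}{\z^2}$ with the shift in the opposite direction, now using (\ref{eqphi}) in the form $d(\z)\varphi(\z)=a(\z q)\varphi(\z q)$ and the cancellation of the opposite end of each pole string by $a(\z)$. Its period over $\Gamma_{\mathbf{m}}$ vanishes by the same homology argument.
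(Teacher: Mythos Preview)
Your proposal is correct and follows exactly the route the paper indicates: its own proof is the single sentence ``This can be verified directly by applying the Baxter equation to $T(\z,\kappa)Q^{\pm}(\z,\kappa)$, $T(\z,\kappa+\al)Q^{\mp}(\z,\kappa+\al)$ and moving contours of integration,'' and you have spelled out precisely that computation. Your identification of the difference as $\int_{\Gamma_{\mathbf{m}}}\bigl(\Xi(\z)-\Xi(\z q^{-1})\bigr)\tfrac{d\z^2}{\z^2}$ and the observation that $d(\z)$ kills the $k=0$ pole of the string so that $\Gamma_{\mathbf{m}}$ and $q^{-1}\Gamma_{\mathbf{m}}$ enclose the same poles of $\Xi$ is the correct ``moving contours'' step; the worry you flag about $\Gamma_{\mathbf{0}}$ is not a real obstacle, since both $\Gamma_{\mathbf{0}}$ and $q^{-1}\Gamma_{\mathbf{0}}$ are small loops around $\z^2=0$ and $\Xi$ is single-valued there.
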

\begin{proof}
This can be verified directly
by applying the Baxter equation to $T(\z,\kappa)Q^{\pm}(\z,\kappa)$,
$T(\z,\kappa +\al)Q^{\mp}(\z ,\kappa+\al)$ 
and moving contours of integration.
\end{proof}

It is well known that, on a compact Riemann surface of genus $g$, 
the space of the first and the second kind differentials 
(meromorphic differentials without residues) 
has a finite dimension $2g$ when considered 
modulo exact forms. In \cite{SmiToda} it was explained
what are the $q$-deformed exact forms, for which 
the deformed Abelian integrals vanish. 
Since the proof was omitted in that paper we include it here.

\begin{lem}\label{exact}
Define a $q$-deformed exact form to be an expression 
\begin{align}
&E\bigl(f^{\pm}(\z)\bigr)\label{exact form}\\
&=T(\z,\kappa)\Delta _{\z}^{-1}\(f^{\pm}(\z)T(\z,\kappa)\)
+T(\z,\kappa+\al)\Delta _{\z}^{-1}\(f^{\pm}(\z)
T(\z ,\kappa+\al)\)\nn\\
&-T(\z ,\kappa)\Delta _{\z}^{-1}\(f^{\pm}(\z q)
T(\z q, \kappa+\al)\)
 -T(\z ,\kappa+\al)\Delta _{\z}^{-1}
\(f^{\pm}(\z q^{-1})T(\z q^{-1},\kappa)\)\nn\\
&+a(\z q)d(\z)f^{\pm}(\z q)-d(\z q^{-1})a(\z)
f^{\pm}(\z q ^{-1})\,, 
\nn
\end{align}
where $\z ^{\mp\al}f^{\pm}(\z)$ is a polynomial in $\z^2$. 
Then we have 
\begin{align}
\int\limits _{\Gamma_{\mathbf{m}}}
E\bigl(f^{\pm}(\z)\bigr)Q^{\mp}(\z ,\kappa+\al)Q^{\pm}(\z,\kappa)
\varphi (\z)\frac{d\z ^2}{\z ^2}=0\,.\nn
\end{align}
\end{lem}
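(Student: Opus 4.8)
The plan is to eliminate the two $q$–primitives of products appearing in $E(f^{\pm})$ by reducing the integrals that contain them to the already proved Lemma~\ref{1stAI}, and then to check that the remaining, purely algebraic, integrand integrates to zero by means of the Baxter equation~\eqref{Baxter} and the functional equation~\eqref{eqphi}. First I would set $A(\z)=\Delta_\z^{-1}\bigl(f^{\pm}(\z)T(\z,\kappa)\bigr)$ and $B(\z)=\Delta_\z^{-1}\bigl(f^{\pm}(\z)T(\z,\kappa+\al)\bigr)$; since $T(\z,\lambda)$ is a polynomial in $\z^2$, both $f^{\pm}(\z)T(\z,\kappa)$ and $f^{\pm}(\z)T(\z,\kappa+\al)$ are admissible arguments for Lemma~\ref{1stAI}. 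Because $\Delta_\z^{-1}$ commutes with $\z\mapsto\z q^{\pm1}$, the four transcendental terms of $E$ regroup as $\bigl(T(\z,\kappa)A(\z)-T(\z,\kappa+\al)A(\z q^{-1})\bigr)+\bigl(T(\z,\kappa+\al)B(\z)-T(\z,\kappa)B(\z q)\bigr)$. The first bracket is exactly the expression in braces on the left of~\eqref{AIa} with the primitive taken of $f^{\pm}(\z q^{-1})T(\z q^{-1},\kappa)$, and the second is minus that of~\eqref{AIa} with primitive of $f^{\pm}(\z)T(\z,\kappa+\al)$. Applying Lemma~\ref{1stAI} to each, the transcendental part of the integral collapses to the primitive-free expression
\begin{align}
\int_{\Gamma_{\mathbf{m}}}\Bigl(f^{\pm}(\z q^{-1})T(\z q^{-1},\kappa)-f^{\pm}(\z)T(\z,\kappa+\al)\Bigr)a(\z)\,Q^{\mp}(\z,\kappa+\al)\,Q^{\pm}(\z q^{-1},\kappa)\,\varphi(\z)\,\frac{d\z^2}{\z^2}.\nn
\end{align}

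Next I would apply the Baxter equation~\eqref{Baxter} to the two products $T(\z q^{-1},\kappa)Q^{\pm}(\z q^{-1},\kappa)$ and $T(\z,\kappa+\al)Q^{\mp}(\z,\kappa+\al)$, turning the integrand into a sum of $Q$–bilinears with shifted arguments. The bilinears of the unshifted type $Q^{\mp}(\z,\kappa+\al)Q^{\pm}(\z,\kappa)$ cancel among themselves and against the explicit algebraic term $-d(\z q^{-1})a(\z)f^{\pm}(\z q^{-1})$ of $E$, leaving the companion algebraic term $a(\z q)d(\z)f^{\pm}(\z q)$ together with two shifted bilinears. Finally I would match the survivors in two pairs by the changes of variable $\z\mapsto\z q^{\pm1}$, using~\eqref{eqphi} in the forms $\varphi(\z q)=\tfrac{d(\z)}{a(\z q)}\varphi(\z)$ and $\varphi(\z q^{-1})=\tfrac{a(\z)}{d(\z q^{-1})}\varphi(\z)$ to absorb the factors produced by the shift: the term carrying $a(\z q)d(\z)f^{\pm}(\z q)$ cancels one shifted bilinear, and the two remaining bilinears cancel each other, so the whole integral is $0$.

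The genuinely delicate point, and the step I expect to cost the most care, is the legitimacy of these contour shifts. Moving $\Gamma_{\mathbf{m}}$ by $\z\mapsto\z q^{\pm1}$ drags it across the $q$–string of poles of $\varphi_{s_{\mathbf{m}}}(\z/\tau_{\mathbf{m}})$, and the accompanying replacement of $\varphi(\z q^{\pm1})$ via~\eqref{eqphi} introduces a factor $1/a(\z q)$ or $1/d(\z q^{-1})$ whose pole lies just beyond the string encircled by $\Gamma_{\mathbf{m}}$. The saving mechanism—exactly the one already underlying Lemma~\ref{1stAI}—is that the factor $a(\z q)$ or $d(\z q^{-1})$ carried by the monomial being shifted cancels this new pole, while the zeros of $a(\z)$ and $d(\z)$ sit precisely at the endpoints of each string and kill the boundary residues there. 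Hence the integrand stays regular across the deformation and the contour can be returned to $\Gamma_{\mathbf{m}}$, making the formal pairwise cancellations rigorous. Verifying this regularity endpoint-by-endpoint for each string, uniformly in $\mathbf{m}=\mathbf{0},\dots,\mathbf{n}$, is the only laborious part of the argument.
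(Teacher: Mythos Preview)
Your proposal is correct and follows essentially the same route as the paper: group the four $q$-primitive terms into two pairs, apply \eqref{AIa} to each pair to land on the primitive-free integral $\int_{\Gamma_{\mathbf m}}\bigl(f^{\pm}(\z q^{-1})T(\z q^{-1},\kappa)-f^{\pm}(\z)T(\z,\kappa+\al)\bigr)a(\z)Q^{\mp}(\z,\kappa+\al)Q^{\pm}(\z q^{-1},\kappa)\varphi(\z)\tfrac{d\z^2}{\z^2}$, then expand both $T\cdot Q$ products via the Baxter equation and shift contours using \eqref{eqphi}. One small bookkeeping slip: after the Baxter expansion there are \emph{three} shifted bilinears, not two, so together with the surviving algebraic term $a(\z q)d(\z)f^{\pm}(\z q)$ you have four terms that cancel in two pairs---exactly as you describe in the final sentence of that paragraph, but your preceding count ``two shifted bilinears'' should read ``three''.
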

\begin{proof}
Let us divide the integral into two pieces:
\begin{align}
&\int\limits _{\Gamma_{\mathbf{m}}}
E\bigl(f^{\pm}(\z)\bigr)
Q^{\mp}(\z ,\kappa+\al)Q^{\pm}(\z,\kappa)
\varphi (\z)\frac{d\z ^2}{\z ^2}
=I_1+I_2\,,
\end{align}
where $I_1$ first four terms from \eqref{exact form} and $I_2$
contains remaining two. 
Apply (\ref{AIa}) to the first and the fourth terms
in $I_1$
, and 
to the second and the third terms as well.  
Using the Baxter equation 
and moving contours using (\ref{eqphi}), one obtains
\begin{align}
I_1&=\int\limits _{\Gamma_{\mathbf{m}}}
f^{\pm}(\z q^{-1})T(\z q^{-1},\kappa)
Q^{\mp}(\z ,\kappa+\al)Q^{\pm}(\z q^{-1},\kappa)a(\z)
\varphi (\z)\frac{d\z ^2}{\z ^2}
\nn\\
&-\int\limits _{\Gamma_{\mathbf{m}}}
f^{\pm}(\z )T(\z ,\kappa +\al)
Q^{\mp}(\z ,\kappa+\al)Q^{\pm}(\z q^{-1},\kappa)a(\z)
\varphi (\z)\frac{d\z ^2}{\z ^2}\nn\,.
\end{align}
Now apply the Baxter equation:
\begin{align}
I_1=&\int\limits _{\Gamma_{\mathbf{m}}}
f^{\pm}(\z q^{-1})
Q^{\mp}(\z ,\kappa+\al)
\(Q^{\pm}(\z q^{-2},\kappa)
a(\z q^{-1})+
Q^{\pm}(\z ,\kappa)d(\z q^{-1})\)
a(\z)
\varphi (\z)\frac{d\z ^2}{\z ^2}
\nn\\
&-\int\limits _{\Gamma_{\mathbf{m}}}
f^{\pm}(\z )
\(a(\z)Q^{\mp}(\z q^{-1} ,\kappa+\al)+d(\z)Q^{\mp}(\z q,\kappa+\al)\)Q^{\pm}(\z q^{-1},\kappa)
a(\z)
\varphi (\z)\frac{d\z ^2}{\z ^2}
\nn
\end{align}
Moving contours we find
\begin{align}
&I_1=\int\limits _{\Gamma_{\mathbf{m}}}
\Bigl\{a(\z q)d(\z)f^{\pm}(\z q)-d(\z q^{-1})a(\z)f^{\pm}(\z q ^{-1})   \Bigr\}
Q^{\mp}(\z ,\kappa+\al)Q^{\pm}(\z,\kappa)
\varphi (\z)\frac{d\z ^2}{\z ^2}
\,,\nn
\end{align}
i.e. $I_1=-I_2$.
\end{proof}

A beautiful feature of deformed Abelian integrals
is that they allow for a deformation of   
the Riemann bilinear relations as well.  
In \cite{SmiToda} the latter 
are given in the full-fledged form. 
For our present purposes, it is sufficient to 
use a part of them given by the following Lemma.

\begin{lem}
Consider the following function in two variables
\begin{align}
&r(\z,\xi)=r^+(\z,\xi)-r^-(\xi,\z)\,,\nn
\end{align}
where
$$r^+(\z,\xi)=r^+(\z,\xi|\kappa,\al),\quad 
r^-(\xi,\z)=r^+(\xi,\z|-\kappa,-\al),$$
and
\begin{align}
&r^+(\z,\xi|\kappa,\al)=
T(\z ,\kappa)\Delta _{\z}^{-1}\(\psi(\z/\xi,\al)(T(\z ,\kappa)-
T(\xi ,\kappa))\)\label{r+}\\&
+T(\z ,\kappa+\al)\Delta _{\z}^{-1}\(\psi(\z/\xi,\al)(T(\z ,\kappa+\al )
-T(\xi ,\kappa+\al ))\)\nn\\
&-T(\z ,\kappa)\Delta _{\z}^{-1}\(\psi(q\z/\xi,\al)(T(\z q
,\kappa+\al)-
T(\xi ,\kappa+\al))\)\nn\\ &
-T(\z ,\kappa+\al)\Delta _{\z}^{-1}\(\psi(q^{-1}\z/\xi,\al)(T(\z q^{-1} ,\kappa)
  -T(\xi,\kappa))\)\nn\\&+\bigl(a(\z q)-a(\xi)\bigr)d(\z)\psi(q\z/\xi,\al)
-\bigl(d(\z q^{-1}\bigr)
  -d(\xi))a(\z)\psi(q^{-1}\z/\xi,\al)\,.\nn
\end{align}
Then
\begin{align}
\int
\limits _{\Gamma _{\mathbf{i}}}\int\limits _{\Gamma _{\mathbf{j}}}
r(\z,\xi)
Q^{-}(\z,\kappa+\al)Q^{+}(\z,\kappa)
Q^{+}(\xi,\kappa+\al)Q^{-}(\xi,\kappa)
\varphi (\z)\varphi (\xi)\frac{d\z ^2}{\z ^2}\frac{d\xi ^2}{\xi ^2}=0\,.
\label{riemann}
\end{align}
\end{lem}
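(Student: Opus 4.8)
The plan is to reduce the double integral \eqref{riemann} to a residue computation by carrying out the inner ($\z$) integration first, exploiting the close resemblance between $r^+(\z,\xi|\kappa,\al)$ and the exact form of Lemma~\ref{exact}. The starting observation is that each of the four arguments of $\Delta_\z^{-1}$ in \eqref{r+} is regular at $\z^2=\xi^2$: the simple pole of $\psi(\z/\xi,\al)$ (resp.\ $\psi(q^{\pm1}\z/\xi,\al)$) from \eqref{psi} is cancelled by the vanishing of $T(\z,\cdot)-T(\xi,\cdot)$ (resp.\ $T(\z q^{\pm1},\cdot)-T(\xi,\cdot)$) there, since $T$ depends only on $\z^2$. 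Hence $\z^{-\al}$ times each argument is a genuine polynomial in $\z^2$, so $r^+(\z,\xi|\kappa,\al)$ lies in the admissible class, the integrand of \eqref{riemann} is single-valued, and the identities \eqref{AIa}, \eqref{AId} may be applied to it. Writing $A_1(\z)=\psi(\z/\xi,\al)(T(\z,\kappa)-T(\xi,\kappa))$ and $A_2(\z)=\psi(\z/\xi,\al)(T(\z,\kappa+\al)-T(\xi,\kappa+\al))$, one checks that the third and fourth $\Delta_\z^{-1}$-arguments are exactly $A_2(\z q)$ and $A_1(\z q^{-1})$, so the first four terms of $r^+$ have precisely the shape of the first four terms of the exact form in Lemma~\ref{exact}.

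First I would compute the inner integral $\oint_{\Gamma_{\mathbf{i}}}r^+(\z,\xi|\kappa,\al)\,Q^-(\z,\kappa+\al)Q^+(\z,\kappa)\varphi(\z)\,d\z^2/\z^2$ by repeating the manipulation in the proof of Lemma~\ref{exact}: group the four $\Delta_\z^{-1}$-terms into the pairs $(1,4)$ and $(2,3)$, apply \eqref{AIa} to each pair, use the Baxter equation \eqref{Baxter} on $T(\z,\kappa)Q^+(\z,\kappa)$ and $T(\z,\kappa+\al)Q^-(\z,\kappa+\al)$, and shift contours by means of \eqref{eqphi}. The parts of $A_1,A_2$ carrying $T(\z,\cdot)$ combine with the $a(\z q)d(\z)\psi$ and $d(\z q^{-1})a(\z)\psi$ pieces of the last two terms of $r^+$ and cancel identically, exactly as $I_1=-I_2$ in Lemma~\ref{exact}. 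What survives is only the ``source'', namely the terms proportional to the $\z$-independent constants $T(\xi,\kappa)$, $T(\xi,\kappa+\al)$, $a(\xi)$, $d(\xi)$. Crucially, once the compensating factor $T(\z,\cdot)-T(\xi,\cdot)$ has been stripped, these source terms have a true simple pole at $\z^2=\xi^2$ (and at $\z^2=q^{\pm2}\xi^2$ from the shifted $\psi$'s), so the residual inner integral collapses to a residue at those points, leaving a single explicit integral in $\xi$.

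It remains to treat the second half of \eqref{riemann}, the integral of $r^-(\xi,\z)=r^+(\xi,\z|-\kappa,-\al)$, which I would reduce in the same way but integrating in $\xi$ first. The weights are matched using the symmetries of Section~\ref{SPMD}: by \eqref{symQ} one has $Q^-(\z,\kappa+\al)Q^+(\z,\kappa)=Q^+(\z,-\kappa-\al)Q^-(\z,-\kappa)$, so the $\z$-weight is precisely what the $\xi$-weight $Q^+(\xi,\kappa+\al)Q^-(\xi,\kappa)$ becomes under $(\kappa,\al)\mapsto(-\kappa,-\al)$, which is exactly the substitution built into the definition of $r^-$; together with \eqref{symT} this makes the $r^-$-reduction a mirror image of the $r^+$-one. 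Its residual integral is again a residue at $\xi^2=\z^2$, and the point of forming $r=r^+(\z,\xi)-r^-(\xi,\z)$ is that this residue should be exactly the negative of the one produced by the $r^+$-piece, so the two single integrals cancel and \eqref{riemann} follows.

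The main obstacle is the residue and contour bookkeeping in the two reductions and their matching. When the contour is shifted by $\z\mapsto\z q^{\pm1}$ in the course of using \eqref{eqphi} and \eqref{Baxter}, the poles of $\varphi(\z)$ as well as the points $\z^2=\xi^2,q^{\pm2}\xi^2$ may be crossed, and each crossing contributes a residue that must be accounted for; one has to determine, for $\xi$ ranging over $\Gamma_{\mathbf{j}}$, which of these points lie inside $\Gamma_{\mathbf{i}}$, and this is where the dependence on whether $\mathbf{i}=\mathbf{j}$ enters, reproducing the intersection pattern of cycles in the classical Riemann bilinear relations. The delicate part is to verify that the coefficient and sign of the surviving residue from the $r^+$-reduction agree with those from the $r^-$-reduction after the symmetry substitution, so that they cancel rather than reinforce; the explicit form \eqref{psi} of $\psi$, whose residue at $\z^2=\xi^2$ is elementary, is what should make this matching tractable.
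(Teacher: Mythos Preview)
Your approach is essentially the paper's: reduce the $\zeta$-integration of $r^+(\zeta,\xi)$ to the machinery of Lemma~\ref{exact} via Lemma~\ref{1stAI}, the Baxter equation and contour shifts, treat $r^-(\xi,\zeta)$ symmetrically in $\xi$, and argue that the only possible obstruction is the singularity of $\psi(\zeta/\xi,\alpha)$ at $\zeta^2=\xi^2$. The paper's proof is a one-paragraph sketch saying exactly this.

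There is one difference of emphasis worth flagging. You frame the endgame as: nonzero residues survive from the $r^+$-reduction and from the $r^-$-reduction separately, and the definition $r=r^+-r^-$ makes them cancel. The paper frames it as: the potential residues picked up when moving contours across the $\psi$-pole compute intersection numbers of the cycles $\Gamma_{\mathbf{i}}$, $\Gamma_{\mathbf{j}}$ (cf.\ \cite{SmiAI} and the classical picture in Appendix~C), and since these are all ``$a$-type'' cycles with trivial mutual intersection, \emph{no} residue contribution arises at all. These are two descriptions of the same mechanism, but the paper's is more economical: once you observe that for generic inhomogeneities the contours $\Gamma_{\mathbf{i}}$ can be taken pairwise disjoint (and that the original integrand is regular at $\zeta^2=\xi^2$, so $\xi^2$ may be kept off $\Gamma_{\mathbf{i}}$ even when $\mathbf{i}=\mathbf{j}$), the contour shifts never cross the $\psi$-pole and the Lemma~\ref{exact} computation goes through verbatim for each of $r^+$ and $r^-$, giving zero without any residual single integral to match. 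Your statement that the source ``collapses to a residue at those points'' slightly overstates what happens---the source integral still sees the $\varphi$-poles inside $\Gamma_{\mathbf{i}}$---but this does not affect the validity of your plan, since in the end those contributions vanish by the non-intersection argument.
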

\begin{proof}
The proof is similar to that of the previous lemma.  
We apply Lemma \ref{1stAI}, invoke the
Baxter equation and move the contours. 
When the Baxter equation 
is applied 
to expressions like
$\psi(\z/\xi,\al)(T(\z ,\kappa)-T(\xi ,\kappa))$ separately with respect to $\z$ and $\xi $,
a singularity may appear from $\psi(\z/\xi,\al)$.  
In general, by moving the contours such a singularity produces 
intersection numbers 
as in the genuine Riemann bilinear relations (see \cite{SmiAI}). 
In the present case this does not happen
 because the contours do not have nontrivial intersections.
\end{proof}

Clearly $\xi^{\al}r^+(\z,\xi)$ 
is a polynomial in $\xi^2$ and 
$\z^{-\al}r^-(\xi,\z)$ is a polynomial 
in $\z^2$, both of degree $\mathbf{n}$. 
This allows us to define the polynomials $p^{\pm}_{\mathbf{m}}$ by 
\begin{align}
r^+(\z,\xi)
=\sum\limits_{\mathbf{m}=0}^{\mathbf{n}}
\z^{\al}p^+_ {\mathbf{m}}(\z ^2)
\xi ^{-\al +2\mathbf{m}}\,,
\qquad r^-(\xi,\z)=\sum\limits _{\mathbf{m}=0}^{\mathbf{n}}
\xi ^{-\al}p^-_ {\mathbf{m}}(\xi ^2)
\z ^{\al +2\mathbf{m}}\,. 
\nn
\end{align}
Introduce the $(\mathbf{n+1})\times (\mathbf{n+1})$ matrices
\begin{align}
&\mathcal{A}^\pm_{\mathbf{i},\mathbf{j}}
=\int\limits _{\Gamma _{\mathbf{i}}}
\z^{\pm \al+2\mathbf{j}}
Q^{\mp}(\z,\kappa+\al)
Q^{\pm}(\z,\kappa)
\varphi (\z)\frac{d\z ^2}{\z ^2}\,,
\label{AB1}\\
&\mathcal{B}^\pm_{\mathbf{i},\mathbf{j}}
=\int\limits_{\Gamma _{\mathbf{i}}}
\z ^{\pm \al}p^\pm_{\mathbf{j}}(\z^2)
Q^\mp(\z ,\kappa+\al)Q^\pm(\z, \kappa)
\varphi (\z)\frac{d\z ^2}{\z ^2}\,.
\label{AB2}
\end{align}
Then (\ref{riemann}) reads as 
\begin{align}
\mathcal{B}^+(\mathcal{A}^-)^t=\mathcal{A}^+(\mathcal{B}^-)^t\,.
\label{symplectic}
\end{align}
We explain in Appendix C that, in the classical limit $q\to 1$
(and for $\al=0$),    
$\mathcal{A}^\pm$, $\mathcal{B}^\pm$ reduce 
to the matrices of $a$-periods of differentials 
of the first and the second kind, respectively. 
The relation (\ref{symplectic}) becomes 
one quarter of the classical Riemann bilinear relations 
which state that the full matrix of $a$- and $b$-periods
is an element of the symplectic group. 

Before closing this section let us make a comment. 
Suppose $\z^{\mp \al}f(\z)$ 
is a rational function. We assume that the poles of this function
do not overlap those of $\varphi(\z)$ and $\z^2=0$.  
In this case, the $q$-primitive $\Delta_\z^{-1}f(\z)$ 
is not uniquely defined, and in general    
develops infinitely many poles $q^{2n}w$ ($n\in\Z$, $w$ are the poles of $\z^{\mp \al}f(\z)$ ). 
Nevertheless Lemma \ref{1stAI} remains true. 
Actually it tells that the deformed
Abelian integrals 
in the left hand side of \eqref{AIa}, \eqref{AId} do not depend on 
a particular choice of the $q$-primitive. 
For the same reason, deformed Abelian integrals of the 
$q$-exact form in Lemma \ref{exact} have
unambiguous meaning.
Later on we shall deal with examples of such 
$q$-primitives of the form
$\Delta^{-1}_\z(\psi(\z/\xi,\al)P(\z^2))$ or
$\Delta^{-1}_\z(\psi(\xi/\z,\al)P(\z^2))$.


\section{Properties of $Z^{\kappa}\bigl\{\bb ^*(\z)\bigl( X\bigr)  \bigr\}$ and
$Z^{\kappa}\bigl\{\cb ^*(\z)
X\bigr)  \bigr\}$.}\label{propbc}

Our strategy is to compute 
$Z^{\kappa}\bigl\{\bb ^*(\z)\bigl( X\bigr)  \bigr\}$
and
$Z^{\kappa}\bigl\{\cb ^*(\z)
X\bigr)  \bigr\}$
inductively, reducing them to 
similar quantities involving the 
annihilation operators $\bb(\z),\cb(\z)$. 
It has been said in Introduction 
that $Z^{\kappa}\bigl\{\bb ^*(\z)\bigl( X\bigr)  \bigr\}$
is non-trivial only when $X\in \mathcal{W}_{\al+1,-1}$, and 
$Z^{\kappa}\bigl\{\cb ^*(\z)\bigl( X\bigr)  \bigr\}$
is non-trivial only when $X\in \mathcal{W}_{\al-1,1}$. 
We denote these blocks by 
\begin{align}
&\left.\bb ^*(\z,\al)=\bb ^*(\z)\right|_{\mathcal{W}_{\al+1,-1}\to \mathcal{W}_{\al,0}}\,,\quad
\left.\cb (\z,\al)=\cb (\z)\right|_{\mathcal{W}_{\al+1,-1}\to \mathcal{W}_{\al,0}}\,,\nn\\
&\left.\cb ^*(\z,\al)=\cb ^*(\z)\right|_{\mathcal{W}_{\al-1,1}\to \mathcal{W}_{\al,0}}\,,\quad
\ \ \left.\bb (\z,\al)=\bb (\z)\right|_{\mathcal{W}_{\al-1,1}\to \mathcal{W}_{\al,0}}\,.\nn
\end{align}

Hence the non-trivial part of our main equations 
(\ref{mainb}), (\ref{mainc}) 
take the form
\bea
&&Z^{\kappa}\bigl\{\bb ^*(\z,\al)\bigl(X\bigr)\bigr\} 
=\frac 1 {2\pi i}\oint\limits_{\Gamma}\omega(\z,\xi)
Z^{\kappa}\bigl\{\cb(\xi,\al)\bigl(X\bigr)\bigr\}
\frac{d\xi^2}{\xi^2},\quad\ \ \ X\in \mathcal{W}_{\al+1,-1}\,,
\label{b*to c1}\\
&&Z^{\kappa}\bigl\{\cb ^*(\z,\al)\bigl(X\bigr)\bigr\} 
=-\frac 1 {2\pi i}\oint\limits_{\Gamma}\omega(\z,\xi)
Z^{\kappa}\bigl\{\bb(\xi,\al)\bigl(X\bigr)\bigr\}
\frac{d\xi^2}{\xi^2}, \quad X\in \mathcal{W}_{\al-1,1}\,.
\label{b*to c2}
\ena
Our task is to establish the existence of $\omega(\z,\xi)$ and to determine it explicitly. 
In view of the spin reversal symmetry which relates 
$(\bb^*,\cb)$ with $(\cb^*,\bb)$, we shall concentrate on the first pair. 

Apart from an overall power of $\z$, 
$\bb^*(\z,\al)$ is defined a priori 
as a formal power series in $\z^2-1$. 
Nevertheless when acting on each operator 
it reduces to a rational function,  
due to the same mechanism as explained for $\tb^*$. 
Namely, 
\begin{align}
&\bb^*(\z,\alpha)(q^{2(\alpha+1)S(0)}X_{[1,m]})=q^{2\alpha S(0)}
\lim_{l\rightarrow\infty}\Tr_c\{\mathbb T_{c,[m+1,l]}(\z)\gb_{c,[1,m]}(\z,\alpha)(X_{[1,m]})\}.\label{redbstar}
\end{align}
We recall the definition of the operator
$\gb_{c,[1,m]}(\z,\al)$ in Appendix \ref{app:bstar1}. 
The formula  \eqref{redbstar} together
with the
requirement of translational invariance can be considered as a definition of $\bb^*(\z,\alpha)$,
but self-consistency
of this definition requires that $\gb_{c,[1,m]}(\z,\al)$ satisfies
certain reduction relations which were proved in \cite{HGSII}.

Using (\ref{redbstar}) we find by the same method as in
Lemma \ref{thm:t*rho}:
\bea
&&T(\z,\kappa)Z^\kappa\left(\bb^*(\z,\al)(q^{2(\al+1) S(0)}X_{[1,m]})\right)\label{redZb*}
\\
&&\quad =
\frac{\Tr_{[1,m], c}
\left(\langle{\kappa+\al}|T_{[1,m],\mathbf{M}}(1,\kappa)T_{c,\mathbf{M}}(\z,\kappa)
2\gb _{c,[1,m]}(\z,\al)(X_{[1,m]})
|{\kappa}\rangle
\right)}
{T(1,\kappa)^{m}\langle\kappa+\al|\kappa\rangle}\,.\nn
\ena
Due to this equation the left hand side happens to be 
up to the overall multiplier $\z^{\al}$ a rational function
of $\z^2$ with poles only
at $\z^2=q^{\pm 2}$.  
Its singular part is given as follows.
\begin{lem}\label{prop:charact1}
Set
\bea
&&\omega_{sing}(\z,\xi)=-\Delta_\z\psi(\z/\xi,\al)\label{omegasing}
\\
&&\quad+\frac{4}{T(\z,\kappa)T(\xi,\kappa)}
\left(a(\xi)d(q^{-1}\xi)\psi(q\z/\xi,\al)
-
a(q\xi)d(\xi)\psi(q^{-1}\z/\xi,\al)\right). 
\nn
\ena
Then we have
\begin{align}
&T(\z,\kappa)Z^\kappa
\Bigl\{\Bigl(
\bb^*(\z,\al)
-\frac 1 {2\pi i}\oint\limits_{\Gamma}\omega_{sing}(\z,\xi)
\cb(\xi,\al)\Bigr)(X)
\Bigr\}
\frac{d\xi^2}{\xi^2}=\z^\al P_\mathbf{n}(\z^2), \label{singb*}
\end{align}
where $X\in\mathcal{W}_{\al+1,-1}$, $\Gamma$ encircles $\xi^2=1$, and 
$P_\mathbf{n}(\z^2)$ is a polynomial in $\z^2$ of 
degree at most ${\bf n}$.
\end{lem}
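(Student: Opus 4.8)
The plan is to regard both sides of \eqref{singb*} as rational functions of $\z^2$ and to match their principal parts at the only admissible poles $\z^2=q^{\pm2}$, after which the difference is forced to be a polynomial of the asserted degree. Write $g(\z)=\z^{-\al}T(\z,\kappa)Z^\kappa\{\bb^*(\z,\al)(X)\}$. By \eqref{redZb*} the whole $\z$-dependence of $g$ sits in the polynomial factor $T_{c,\mathbf{M}}(\z,\kappa)$ and in $\gb_{c,[1,m]}(\z,\al)$, so by the properties of $\gb$ recalled from \cite{HGSII} (see \eqref{redbstar}) the function $g$ is rational in $\z^2$ with poles only at $\z^2=q^{\pm2}$. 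It therefore suffices to (i) compute the principal parts of $g$ there, (ii) show they are reproduced by the $\omega_{sing}$-term, and (iii) control the growth of the remainder at $\z^2\to0,\infty$.

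First I would extract $\res_{\z^2=q^{\pm2}}\gb_{c,[1,m]}(\z,\al)$ from the reduction relations satisfied by $\gb$ in \cite{HGSII}. These residues should collapse the long auxiliary string into the operator entering $\cb(\xi,\al)$, so that after substitution into \eqref{redZb*} the principal part of $g$ at $\z^2=q^{\pm2}$ becomes proportional to $Z^\kappa\{\cb(\xi,\al)(X)\}$ near $\xi^2=1$, dressed by Matsubara factors produced by acting with $T_{c,\mathbf{M}}(\z,\kappa)$ on $|\kappa\rangle$. The crucial point is that at $\z=q^{\pm1}$ one has $\z q^{\mp1}=1$, so the Baxter equation \eqref{Baxter} links the transfer-matrix and $Q$-values at these $\z$ to their values at the reference point $\xi^2=1$; this is the mechanism that turns a pole of $\bb^*$ at $\z^2=q^{\pm2}$ into $\cb$ evaluated at $\xi^2=1$.

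Next I would compute the subtracted contour integral. Since $\cb(\xi,\al)=\xi^\al\sum_{p\ge0}(\xi^2-1)^{-p}\cb_p$, the function $Z^\kappa\{\cb(\xi,\al)(X)\}$ is $\xi^\al$ times a rational function of $\xi^2$ whose only singularity is a pole at $\xi^2=1$, while $\omega_{sing}(\z,\xi)$ is regular there for generic $\z$; hence $\frac1{2\pi i}\oint_\Gamma\omega_{sing}(\z,\xi)Z^\kappa\{\cb(\xi,\al)(X)\}\frac{d\xi^2}{\xi^2}$ is obtained by residues at $\xi^2=1$ and, as a function of $\z$, acquires poles at $\z^2=q^{\pm2}$ solely through the factors $\psi(q\z/\xi,\al)$ and $\psi(q^{-1}\z/\xi,\al)$ in \eqref{omegasing}. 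Multiplying by $T(\z,\kappa)$, the term $-\Delta_\z\psi$ contributes the $T(\z,\kappa)$-weighted part and the term $\frac{4}{T(\z,\kappa)T(\xi,\kappa)}(a(\xi)d(q^{-1}\xi)\psi(q\z/\xi,\al)-a(q\xi)d(\xi)\psi(q^{-1}\z/\xi,\al))$ contributes the $T(\z,\kappa)$-free part; I would then rewrite the Matsubara factors $a,d,T$ by the Baxter equation evaluated at $\z=q^{\pm1}$ and check that the resulting principal parts coincide, term by term, with those found in the previous step. The precise shape of $\omega_{sing}$ — in particular the split between the $\Delta_\z\psi$ piece and the $a,d$-weighted piece — is exactly what this matching forces.

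Once the principal parts agree, $g(\z)-\z^{-\al}T(\z,\kappa)\,\frac1{2\pi i}\oint_\Gamma\omega_{sing}(\z,\xi)Z^\kappa\{\cb(\xi,\al)(X)\}\frac{d\xi^2}{\xi^2}$ is rational in $\z^2$ with no poles, hence a polynomial; to see that its degree is at most $\mathbf{n}$ I would compare the $\z^2\to\infty$ asymptotics of the two terms, using that $T(\z,\kappa)$ and $a(\z),d(\z)$ have degree $\mathbf{n}$ in $\z^2$, together with regularity at $\z^2=0$, which holds because the prefactor $\z^{\al+2}$ of $\bb^*$ and the $\z^\al$ in $\psi$ combine with $\z^{-\al}$ to leave a positive power of $\z^2$. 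I expect the main obstacle to lie in steps (i)–(ii): establishing the residue reduction of $\gb$ at $\z^2=q^{\pm2}$ and then reconciling the Matsubara weights it produces with the $q$-shifted combinations $a(\xi)d(q^{-1}\xi)$, $a(q\xi)d(\xi)$ of \eqref{omegasing}. This bookkeeping of Baxter-equation shifts is where all the content lies; by the spin-reversal symmetry relating $(\bb^*,\cb)$ with $(\cb^*,\bb)$, the companion statement then follows with no extra work.
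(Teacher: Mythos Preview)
Your overall strategy --- match the principal parts of $\z^{-\al}T(\z,\kappa)Z^\kappa\{\bb^*(\z,\al)(X)\}$ against those of the $\omega_{sing}$-term and conclude the difference is polynomial --- is exactly what the paper does. But two of your key steps do not go through as written.

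First, in the homogeneous chain the poles of $g(\z)$ at $\z^2=q^{\pm2}$ have order up to $m$, the length of the support of $X$, not order one. Your ``extract $\res_{\z^2=q^{\pm2}}\gb_{c,[1,m]}$'' tacitly treats them as simple. The paper avoids this by passing to an inhomogeneous chain with spectral parameters $\xi_1,\dots,\xi_m$, so that the poles split into simple poles at $\z^2=q^{\pm2}\xi_j^2$; it then computes each simple residue (Lemma~\ref{lem:rescgc}), packages the result as a contour integral in $\xi$, and only at the end specialises $\xi_1=\cdots=\xi_m=1$. Without this device you would have to control the full principal part at a high-order pole, which is not what the reduction relations in \cite{HGSII} give directly.

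Second, the Matsubara weights $a(\xi)d(q^{-1}\xi)$ and $a(q\xi)d(\xi)$ in $\omega_{sing}$ do \emph{not} come from the Baxter equation. In the paper they arise because the residue of $\gb_{c,[1,m]}$ at $\z^2=q^{\pm2}\xi_m^2$ contains terms of the form $\tau^\pm_m\sigma^+_{c}-\sigma^+_m\tau^\pm_{c}$, which factor through the singlet projector $P^-_{m,c}$; inside $Z^\kappa_{[1,m+1]}$ this projector hits $T_{m,\mathbf{M}}(\xi_m)T_{c,\mathbf{M}}(q^{\pm1}\xi_m)$ and the \emph{quantum determinant} relation
\[
P^-_{m,c}\,T_{m,\mathbf{M}}(\xi_m)T_{c,\mathbf{M}}(q\xi_m)=a(q\xi_m)d(\xi_m)\,P^-_{m,c}
\]
collapses the Matsubara content to the scalar $a(q\xi_m)d(\xi_m)$. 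The remaining trace then reproduces $\res_{\xi^2=\xi_m^2}Z^\kappa_{[1,m]}\{\cb_{[1,m]}(\xi,\al)(X)\}$. No $Q$-operators and no Baxter equation enter Appendix~\ref{app:bstar1} at all; your proposed mechanism would not produce the specific $a\cdot d/(T\cdot T)$ structure of \eqref{omegasing}. The degree bound, on the other hand, is obtained as you suggest, from the $\z^2\to\infty$ behaviour $\gb_{c,[1,m]}(\z,\al)(X_{[1,m]})=O(1)$.
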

 Lemma \ref{prop:charact1} is proved in Appendix 
\ref{app:bstar1}. 

In order to characterise the quantity in the 
left hand side of \eqref{singb*}, we 
need to have a control over the unknown polynomial
$P_\mathbf{n}(\z^2)$. This is the point where deformed Abelian integrals
come into play. 
Introduce the notation
\be
\overline{D}_\z F(\z)=F(q\z)+F(q^{-1}\z)
 -2\rho(\z)F(\z).
\en

\begin{lem}\label{prop:charact2}
For $\mathbf{m=0,\cdots, n}$, the following relations hold.
\begin{align}
&\int\limits_{\Gamma_{{\bf m}}}
T(\z,\kappa)
Z^\kappa\Bigl\{\Bigl(\bb^*(\z,\al)
+\frac 1 {2\pi i}\oint\limits_{\Gamma}\frac{d\xi^2}{ \xi^2}
\left(\overline{D}_\z\overline{D}_\xi\Delta^{-1}_{\z}
\psi(\z/\xi,\al)\right)
\cb(\xi,\al)\Bigr)(X)
\Bigr\}
\label{b*periods}
\\&\times 
 Q^-(\z,\kappa+\al)Q^+(\z,\kappa) \varphi(\z)   \frac{d\z^2}{\z^2}   =0,\nn
\end{align}
for $X\in\mathcal{W}_{\al+1,-1}$.
\end{lem}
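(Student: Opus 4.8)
The plan is to bring the integrand of \eqref{b*periods} into the deformed-Abelian-integral framework of Section~\ref{DAI} and then annihilate it using the vanishing of periods of $q$-exact forms (Lemma~\ref{exact}). First I would eliminate $\bb^*$ via Lemma~\ref{prop:charact1}. By \eqref{redZb*} the quantity $T(\z,\kappa)Z^\kappa\{\bb^*(\z,\al)(X)\}$ is $\z^\al$ times a rational function of $\z^2$ whose only poles are at $\z^2=q^{\pm2}$, and Lemma~\ref{prop:charact1} identifies it as $\z^\al P_\mathbf{n}(\z^2)$ plus $\frac1{2\pi i}\oint_\Gamma \omega_{sing}(\z,\xi)\,T(\z,\kappa)Z^\kappa\{\cb(\xi,\al)(X)\}\,\frac{d\xi^2}{\xi^2}$, where $T(\z,\kappa)$ is pulled out of the $\xi$-integral. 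Substituting this into \eqref{b*periods} merges the two $\cb$-integrals into one, so that the $\z$-integrand over $\Gamma_{\mathbf{m}}$ becomes $\z^\al P_\mathbf{n}(\z^2)$ plus a $\xi$-integral of $Z^\kappa\{\cb(\xi,\al)(X)\}$ weighted by $T(\z,\kappa)(\omega_{sing}(\z,\xi)+\overline{D}_\z\overline{D}_\xi\Delta^{-1}_\z\psi(\z/\xi,\al))$, the whole object paired against $Q^-(\z,\kappa+\al)Q^+(\z,\kappa)\varphi(\z)\,\frac{d\z^2}{\z^2}$.

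The core of the argument is the claim that, as a function of $\z$ with $\xi$ a parameter, $T(\z,\kappa)(\omega_{sing}(\z,\xi)+\overline{D}_\z\overline{D}_\xi\Delta^{-1}_\z\psi(\z/\xi,\al))$ is a $q$-exact form $E(f^+_\xi)$ in the sense of \eqref{exact form}, up to a first-kind term $\z^\al h_\xi(\z^2)$ with $h_\xi$ a polynomial in $\z^2$ of degree at most $\mathbf{n}$. I would prove this by expanding $\overline{D}_\z$ and $\overline{D}_\xi$ using $\rho=T(\cdot,\kappa+\al)/T(\cdot,\kappa)$ from \eqref{defrho}, so that $T(\z,\kappa)\overline{D}_\z$ generates precisely the $T(\z,\kappa)$- and $T(\z,\kappa+\al)$-weighted pairs occurring in $E$, inserting the explicit $\omega_{sing}$ from \eqref{omegasing}, and then repeatedly applying the Baxter equation \eqref{Baxter} to $T(\z,\kappa)Q^\pm$ together with the functional equation \eqref{eqphi} for $\varphi$ to shuttle the $q$-shifts onto the weight and move contours --- exactly the mechanism already used in Lemmas~\ref{1stAI} and~\ref{exact}. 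By Lemma~\ref{exact} the exact part contributes nothing to the period over $\Gamma_{\mathbf{m}}$, so the period of \eqref{b*periods} collapses to the two first-kind contributions $\int_{\Gamma_{\mathbf{m}}}\z^\al P_\mathbf{n}(\z^2)\,Q^-Q^+\varphi\,\frac{d\z^2}{\z^2}$ and $\frac1{2\pi i}\oint_\Gamma Z^\kappa\{\cb(\xi,\al)(X)\}(\int_{\Gamma_{\mathbf{m}}}\z^\al h_\xi(\z^2)\,Q^-Q^+\varphi\,\frac{d\z^2}{\z^2})\,\frac{d\xi^2}{\xi^2}$, i.e. to linear combinations of the periods $\mathcal{A}^+_{\mathbf{m},\mathbf{j}}$ of \eqref{AB1}.

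It then remains to check that these two first-kind contributions cancel for every $\mathbf{m}$. This is equivalent to the assertion that the full combined form carries no first-kind part, i.e. that the degree-$\le\mathbf{n}$ polynomial part of $T(\z,\kappa)Z^\kappa\{(\bb^*(\z,\al)+\frac1{2\pi i}\oint_\Gamma K\,\cb(\xi,\al))(X)\}$ with $K=\overline{D}_\z\overline{D}_\xi\Delta^{-1}_\z\psi$ vanishes identically; by the non-degeneracy of the period matrix $\mathcal{A}^+$ (Appendix~D) this equivalence is exact. I would verify it inside the same computation, by isolating the remainder $h_\xi$ explicitly and confirming the polynomial identity $\frac1{2\pi i}\oint_\Gamma \z^\al h_\xi(\z^2)Z^\kappa\{\cb(\xi,\al)(X)\}\,\frac{d\xi^2}{\xi^2}+\z^\al P_\mathbf{n}(\z^2)=0$, the kernel $K$ being tuned precisely so that this holds. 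I expect this cancellation, together with the underlying exact-form identification, to be the main obstacle: one must match every $q$-shifted term produced by the two operators $\overline{D}$ against the six terms of $\omega_{sing}$ in \eqref{omegasing} and against the trace expression \eqref{redZb*}, and isolate $h_\xi$ cleanly. A secondary subtlety, already flagged in the proof of the Riemann bilinear relation \eqref{riemann}, is that the pole of $\psi(\z/\xi,\al)$ at $\z^2=\xi^2$ might create intersection contributions when contours are moved; here it is harmless, since $\Gamma_{\mathbf{m}}$ and $\Gamma$ are disjoint.
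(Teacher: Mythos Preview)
Your proposal has a genuine gap at the decisive step. After invoking Lemma~\ref{prop:charact1} you correctly reduce \eqref{b*periods} to the sum of (i) the periods $\int_{\Gamma_{\mathbf m}}\z^\al P_{\mathbf n}(\z^2)Q^-Q^+\varphi\,d\z^2/\z^2$ and (ii) a $\xi$-integral whose $\z$-part can, up to a $q$-exact form, be made first-kind. But then you write that it ``remains to check that these two first-kind contributions cancel'', and propose to do so by ``isolating the remainder $h_\xi$ explicitly and confirming the polynomial identity $\frac1{2\pi i}\oint_\Gamma \z^\al h_\xi(\z^2)Z^\kappa\{\cb(\xi,\al)(X)\}\,d\xi^2/\xi^2+\z^\al P_{\mathbf n}(\z^2)=0$''. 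This is circular: $P_{\mathbf n}$ is \emph{not} determined by the kernel manipulations you describe. It is the full trace \eqref{redZb*}, depending on the operator $X_{[1,m]}$ through $\gb_{c,[1,m]}$, and Lemma~\ref{prop:charact1} gives you nothing about it beyond its degree. The identity you want to ``confirm'' is precisely the content of Lemma~\ref{prop:charact2}; it cannot be extracted from $T$, $a$, $d$, $Q^\pm$ and the explicit form of $\omega_{sing}$ alone. Put differently, your argument would apply verbatim with $\overline D_\z\overline D_\xi\Delta_\z^{-1}\psi$ replaced by any kernel differing from it by a first-kind piece, and would then be false.

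The paper's proof does not attempt to separate exact and first-kind parts at all. Instead it rewrites the $\cb$-term using $\bar\cb$ (Corollary~\ref{corr}) so that the full integrand of \eqref{b*periods} becomes the matrix element of an operator-valued quantity $I_{\mathbf M}(\z)$ built from a modified $\gb_{0,c,[1,m]}$ on the finite interval (Lemma~\ref{lem:step2}). The vanishing of its $\Gamma_{\mathbf m}$-periods is then reduced, via the degeneration of $L$-operators at $\z=\tau_{\mathbf n}$ and $\z=q^{-1}\tau_{\mathbf n}$, to a \emph{difference equation} \eqref{step3} for $\gb_{0,c,[1,m]}$ in the space direction alone (Lemma~\ref{lem:step3}). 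That difference equation is finally proved by a representation-theoretic argument about $U_q(\widehat{\mathfrak{sl}}_2)$-submodules of $V_\z\otimes V_{q^{-1}\z}\otimes W_{q^{-1}\z}$ (Lemma~\ref{lem:step4}). The case $\mathbf m=\mathbf 0$ is handled separately by a direct regularity estimate at $\z^2=0$. None of this is visible from the eigenvalue-level manipulations you propose; the missing input is an operator identity, not a function identity.
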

As explained at the end of Section \ref{DAI}, 
one can apply Lemma \ref{1stAI} to 
$f^+(\z)=\overline{D}_\xi\psi(\z/\xi,\al)$. 
Then, the integral over $\z^2$ in the second term can 
be rewritten as 
\bea
&&\int\limits_{\Gamma_{{\bf m}}}
T(\z,\kappa)
\overline{D}_\z\overline{D}_\xi
\Delta^{-1}_\z\psi(\z/\xi,\al)Q^-(\z,\kappa+\al)Q^+(\z,\kappa)\varphi(\z)\frac{d\z^2}{\z^2}\label{trans}
\\
&&=\int\limits_{\Gamma_{{\bf m}}}
\overline{D}_\xi\psi(\z/\xi,\al)Q^-(\z,\kappa+\al)
\left(a(\z)Q^+(q^{-1}\z,\kappa)
-d(\z)Q^+(q\z,\kappa)\right)\varphi(\z)\frac{d\z^2}{\z^2}. \nn
\ena
Hence it does not actually depend on a particular choice 
of $\Delta^{-1}_\z\psi(\z/\xi,\al)$. 

Proof of Lemma \ref{prop:charact2} is long and technical.
We defer it to Appendix \ref{app:bstar2}. 

Comparing  \eqref{b*to c1} with \eqref{singb*}, \eqref{b*periods},
we infer that the function $\omega (\z,\xi)=\omega (\z,\xi|\kappa,\al)$ satisfy the conditions

\noindent 1. Singular part
\begin{align}
\z ^{-\al}T(\z,\kappa)\bigl(\omega(\z,\xi)-\omega_{sing}(\z,\xi)\bigr)\ 
\text{is a polynomial in}
\ \z ^2\ \text{ of degree }
\ \mathbf{n} \,.\label{singofom}
\end{align}

\noindent
2. Normalisation
\begin{align}
\int\limits_{\Gamma_{{\bf m}}}
T(\z,\kappa)
\left(\omega(\z,\xi)+
\overline{D}_\z\overline{D}_\xi
\Delta^{-1}_\z\psi(\z/\xi,\al)\right)Q^-(\z,\kappa+\al)Q^+(\z,\kappa)
\varphi(\z) \frac{d\z^2}{\z^2}=0\,,
\label{normofom}\\
\qquad\qquad\qquad\qquad\qquad\qquad\qquad\qquad
(\mathbf{m=0,\cdots,n})\,.
\end{align}
Furthermore, the equation \eqref{b*to c2} requires an additional
property of $\omega(\z,\xi|\kappa,\al)$ (see Section 8).

\noindent
3. Symmetry
\begin{align}
\omega(\xi,\z|-\kappa,-\al)=\omega(\z,\xi|\kappa,\al)\,.\label{symofom}
\end{align}



\section{Definition of $\omega (\z,\xi)$ and its symmetry}\label{defomega}

We shall first give the definition of the function $\omega (\z,\xi)$, 
and then prove that it satisfies all the necessary properties. 

In Section \ref{DAI}, we defined
the matrices $\mathcal{A}^+$ and $\mathcal{B}^+$. 
In Appendix \ref{nondeg} we show that the condition
\begin{align}
\det \mathcal{A}^+\ne 0\,.
\label{det}
\end{align}
is equivalent to the non-degeneracy condition 
(\ref{general}) accepted previously. 
The classical analogue of (\ref{det}) states that 
``there are no holomorphic differentials such that all the 
$a$-periods vanish".

Assuming (\ref{det}), consider the function
\begin{align}
&\omega (\z,\xi|\kappa,\al)=\frac 4 {T(\z,\kappa)T(\xi,\kappa)}
v^+(\z)^t(\mathcal{A}^+)^{-1}\mathcal{B}^+v^-(\xi)
+
\osym(\z,\xi|\kappa,\al)\,,
\label{def-omega2}
\end{align}
where 
$v^{\pm}(\z)$ are vectors with components
$v^{\pm}(\z)_{\mathbf{j}}=\z ^{\pm \al+2\mathbf{j}}$,
$\mathcal{A}$, $\mathcal{B}$ are given by \eqref{AB1},
\eqref{AB2},  and 
\begin{align}
\osym(\z,\xi|\kappa,\al)
=\frac 1 {T(\z,\kappa)T(\xi,\kappa)}\Bigl\{
&\(4a(\xi)d(\z)-T(\z,\kappa)T(\xi,\kappa)\)\psi(q\z/\xi,\al)
\nn\\
&-\(4a(\z)d(\xi)-T(\z,\kappa)T(\xi,\kappa)\)\psi(q^{-1}\z/\xi,\al)
\nn\\
-2\psi(\z/\xi,\al)\Bigl(&T(\z,\kappa)T(\xi, \kappa +\al)
 -T(\xi,\kappa)T(\z, \kappa +\al)\Bigr)
\Bigr\}\,.\nn
\end{align}

The function $\osym(\z,\xi|\kappa,\al)$ satisfies 
the relation
\begin{align}
\osym(\z,\xi|\kappa,\al)=\osym(\xi,\z|-\kappa,-\al)\label{osymm}
\end{align}
due to \eqref{symT} and the equality 
$\psi(\z^{-1},-\al)=-\psi(\z,\al)$.

The function $\z ^{-\al}\omega (\z,\xi|\kappa,\al)$ is a rational function of $\z^2$.
It is clear by the construction that 
the property (\ref{singofom}) is satisfied.

The remaining properties
(\ref{normofom}), (\ref{symofom})
of $\omega (\z,\xi|\kappa,\al)$
are more complicated, and we formulate them as lemmas.
\begin{lem}\label{lem:7.1}
The function $\omega (\z,\xi|\kappa,\al)$ defined 
by \eqref{def-omega2} satisfies the normalisation condition
\eqref{normofom}. 
\end{lem}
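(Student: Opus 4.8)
The plan is to verify the normalisation condition \eqref{normofom} by plugging the explicit formula \eqref{def-omega2} for $\omega(\z,\xi|\kappa,\al)$ into the left hand side and showing that each piece integrates to zero over $\Gamma_{\mathbf{m}}$. The integrand is
$$
T(\z,\kappa)\left(\omega(\z,\xi)+\overline{D}_\z\overline{D}_\xi\Delta^{-1}_\z\psi(\z/\xi,\al)\right)Q^-(\z,\kappa+\al)Q^+(\z,\kappa)\varphi(\z)\,,
$$
so the factor $T(\z,\kappa)$ cancels against the $1/T(\z,\kappa)$ appearing in both $\osym$ and the $(\mathcal{A}^+)^{-1}\mathcal{B}^+$ term of \eqref{def-omega2}. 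First I would treat the term $\frac{4}{T(\z,\kappa)T(\xi,\kappa)}v^+(\z)^t(\mathcal{A}^+)^{-1}\mathcal{B}^+v^-(\xi)$: upon multiplying by $T(\z,\kappa)$ and integrating $\z$ over $\Gamma_{\mathbf{m}}$ against $Q^-Q^+\varphi\,d\z^2/\z^2$, the monomials $\z^{\al+2\mathbf{j}}$ in $v^+(\z)$ reproduce precisely the rows of the matrix $\mathcal{A}^+$ by definition \eqref{AB1}. Hence this contribution becomes $\frac{4}{T(\xi,\kappa)}(\mathcal{A}^+ (\mathcal{A}^+)^{-1}\mathcal{B}^+ v^-(\xi))_{\mathbf{m}}=\frac{4}{T(\xi,\kappa)}(\mathcal{B}^+ v^-(\xi))_{\mathbf{m}}$, which by \eqref{AB2} equals $\frac{4}{T(\xi,\kappa)}\sum_{\mathbf{j}}\xi^{-\al+2\mathbf{j}}\int_{\Gamma_{\mathbf{m}}}\z^{\al}p^+_{\mathbf{j}}(\z^2)Q^-Q^+\varphi\,d\z^2/\z^2$. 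This is the key cancellation: the $(\mathcal{A}^+)^{-1}$ is engineered exactly so that the $a$-period of this term matches the $b$-period data encoded in $\mathcal{B}^+$.

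The remaining task is then to show that the integral over $\Gamma_{\mathbf{m}}$ of $T(\z,\kappa)$ times the other two pieces, namely $\osym(\z,\xi|\kappa,\al)$ and the exact-form-like correction $\overline{D}_\z\overline{D}_\xi\Delta^{-1}_\z\psi(\z/\xi,\al)$, cancels against this $\mathcal{B}^+$ contribution. Here I would expand $\overline{D}_\z=T_\z^q+T_\z^{q^{-1}}-2\rho(\z)$ acting by $q$-shifts (recalling $\rho(\z)=T(\z,\kappa+\al)/T(\z,\kappa)$ from \eqref{defrho}), apply the Baxter equation \eqref{Baxter} to rewrite $T(\z,\kappa)Q^+(\z,\kappa)$ and $T(\z,\kappa+\al)Q^-(\z,\kappa+\al)$ in terms of shifted $Q$'s, and then move contours using the defining relation $a(\z q)\varphi(\z q)=d(\z)\varphi(\z)$ of \eqref{eqphi}. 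This is precisely the mechanism of Lemma \ref{1stAI} and Lemma \ref{exact}; indeed the transformation already recorded in \eqref{trans} shows that the $\overline{D}_\z\overline{D}_\xi\Delta^{-1}_\z\psi$ term can be brought into a manifestly $\Delta^{-1}$-independent shape. The goal is to recognise the combination of $\osym$ and the correction term, after these manipulations, as an exact form $E(f^+)$ in the sense of Lemma \ref{exact} plus exactly the polynomial piece $\z^{\al}p^+_{\mathbf{m}}(\z^2)$ whose $\Gamma_{\mathbf{m}}$-integral reproduces $\mathcal{B}^+$.

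The main obstacle will be the bookkeeping that identifies the $\xi$-dependent polynomials $p^+_{\mathbf{j}}(\z^2)$ generated by $r^+(\z,\xi)$ of \eqref{r+} with the $\xi$-expansion coefficients arising from $\osym$ and $\overline{D}_\z\overline{D}_\xi\Delta^{-1}_\z\psi$. Comparing the structure of $\osym$ in \eqref{def-omega2} with $r^+(\z,\xi|\kappa,\al)$ in \eqref{r+}, one sees that the two are built from the same $\psi$-shifted $T$-combinations, so the strategy is to show that $T(\z,\kappa)\osym(\z,\xi)+T(\z,\kappa)\overline{D}_\z\overline{D}_\xi\Delta^{-1}_\z\psi(\z/\xi,\al)$ differs from $r^+(\z,\xi)$ only by a $q$-exact form $E(f^+)$ and by the residual polynomial whose periods are the $\mathcal{B}^+$ entries. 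Once this algebraic identity is in place, the vanishing follows by combining the $\mathcal{A}^+(\mathcal{A}^+)^{-1}\mathcal{B}^+=\mathcal{B}^+$ cancellation of the first paragraph with Lemma \ref{exact}. The delicate point throughout is tracking the $q$-shifts so that no spurious poles of $\varphi$ or of $\Delta^{-1}_\z\psi$ are crossed when contours are moved; since all the relevant contours $\Gamma_{\mathbf{m}}$ have no nontrivial intersections, as already noted in the proof of the Riemann bilinear Lemma, these boundary contributions vanish and the identity closes.
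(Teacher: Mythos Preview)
Your approach is correct and essentially the same as the paper's: compute the $\Gamma_{\mathbf{m}}$-integral of the $v^+(\z)^t(\mathcal{A}^+)^{-1}\mathcal{B}^+v^-(\xi)$ term as $(\mathcal{B}^+v^-(\xi))_{\mathbf{m}}$, then show the remaining pieces cancel it modulo a $q$-exact form. The paper streamlines your second and third paragraphs into one purely algebraic identity,
\[
r^+(\z,\xi)=E\bigl(\psi(\z/\xi,\al)\bigr)-\tfrac{1}{4}T(\xi,\kappa)T(\z,\kappa)\bigl\{\osym(\z,\xi|\kappa,\al)+\overline{D}_\z\overline{D}_\xi\Delta^{-1}_\z\psi(\z/\xi,\al)\bigr\},
\]
checked by direct expansion of the definitions --- so no Baxter equation or contour moves are needed here (those are already absorbed into Lemma~\ref{exact}), and there is no extra ``residual polynomial'': the $-r^+$ term itself is what carries the $\mathcal{B}^+$ periods that cancel the first contribution.
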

\begin{proof}
By using definitions \eqref{AB1} and \eqref{AB2}  we have,
\begin{align}
&\int\limits_{\Gamma _{\mathbf{m}}}
v^+(\z)^t(\mathcal{A}^+)^{-1}\mathcal{B}^+v^-(\xi)
Q^-(\z ,\kappa+\al)Q^+(\z,\kappa)\varphi (\z)\frac{d\z ^2}{\z ^2}
\nn\\
&=\left(\mathcal{B}^+v^-(\xi)\right)_{\mathbf{m}}
\nn\\
&=\int\limits _{\Gamma _{\mathbf{m}}}
r^+(\z,\xi)Q^-(\z ,\kappa+\al)Q^+(\z,\kappa)\varphi (\z)\frac{d\z ^2}{\z ^2}\,,
\nn
\end{align}
The definition \eqref{r+} can be rewritten as 
\begin{align}
r^+(\z,\xi)&=E\bigl(\psi(\z/\xi,\al)\bigr)
\nn\\
&-\frac{1}{4}T(\xi,\kappa)T(\z,\kappa)
\Bigl\{\omega_{sym}(\z,\xi|\kappa,\al)
+
\overline{D}_{\z}\overline{D}_{\xi}\Delta _{\z}^{-1}\psi (\z/\xi,\al)
\Bigr\}
\nn\,.
\end{align}
Therefore,  $T(\z,\kappa)\(\omega(\z,\xi)+
\overline{D}_\z\overline{D}_\xi
\Delta^{-1}_\z\psi(\z/\xi,\al)\)$ is a $q$-deformed exact form in $\z$.

\end{proof}
\begin{lem}\label{symmetry}
The function $\omega (\z,\xi|\kappa,\al)$ defined 
by \eqref{def-omega2} satisfies the  symmetry condition
\eqref{symofom}.
\end{lem}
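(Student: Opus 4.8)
The plan is to reduce the asserted symmetry to the $q$-deformed Riemann bilinear relation \eqref{symplectic}. In the decomposition \eqref{def-omega2}, the term $\osym$ already satisfies $\osym(\z,\xi|\kappa,\al)=\osym(\xi,\z|-\kappa,-\al)$ by \eqref{osymm}, so it suffices to prove that the rational ``period'' term
$$
\frac{4}{T(\z,\kappa)T(\xi,\kappa)}\,v^+(\z)^t(\mathcal{A}^+)^{-1}\mathcal{B}^+v^-(\xi)
$$
is invariant under the substitution $(\z,\xi,\kappa,\al)\mapsto(\xi,\z,-\kappa,-\al)$.

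First I would record how each ingredient transforms under $\kappa\to-\kappa$, $\al\to-\al$. By \eqref{symT} the prefactor $T(\z,\kappa)T(\xi,\kappa)$ is unchanged. Applying $Q^-(\z,\la)=Q^+(\z,-\la)$ from \eqref{symQ} termwise inside the integrands of \eqref{AB1}, \eqref{AB2}, and using the identification $p^+_{\mathbf{j}}|_{-\kappa,-\al}=p^-_{\mathbf{j}}$ (which follows by comparing the two polynomial expansions in $r^+(\xi,\z|-\kappa,-\al)=r^-(\xi,\z|\kappa,\al)$), I would establish
$$
\mathcal{A}^+\big|_{-\kappa,-\al}=\mathcal{A}^-,\qquad
\mathcal{B}^+\big|_{-\kappa,-\al}=\mathcal{B}^-.
$$
Since $v^{\pm}(\z)_{\mathbf{j}}=\z^{\pm\al+2\mathbf{j}}$, flipping $\al$ simply exchanges $v^+\leftrightarrow v^-$ on the same argument. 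Here $\varphi$ and the contours $\Gamma_{\mathbf{m}}$ carry no $\kappa,\al$ dependence, so they are untouched.

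Assembling these, the transformed period term becomes
$$
\frac{4}{T(\z,\kappa)T(\xi,\kappa)}\,v^-(\xi)^t(\mathcal{A}^-)^{-1}\mathcal{B}^-v^+(\z).
$$
Being a scalar, it equals its own transpose, which rewrites the matrix factor as $v^+(\z)^t(\mathcal{B}^-)^t\bigl((\mathcal{A}^-)^t\bigr)^{-1}v^-(\xi)$. Matching this with the untransformed term then reduces exactly to the identity $(\mathcal{B}^-)^t\bigl((\mathcal{A}^-)^t\bigr)^{-1}=(\mathcal{A}^+)^{-1}\mathcal{B}^+$, i.e.\ $\mathcal{A}^+(\mathcal{B}^-)^t=\mathcal{B}^+(\mathcal{A}^-)^t$, which is precisely \eqref{symplectic}. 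Invertibility of $\mathcal{A}^-$ needed for the transpose is the condition \eqref{det} applied to the parameters $(-\kappa,-\al)$.

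The computation is essentially bookkeeping; the single step doing real work is recognising that the required matrix identity is exactly the bilinear relation \eqref{symplectic}. The main point, therefore, is to arrange the transposition so that \eqref{symplectic} applies verbatim, rather than needing any further manipulation of the $q$-deformed periods or of the Baxter equation.
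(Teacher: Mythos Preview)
Your proof is correct and follows essentially the same route as the paper: reduce to the period term using \eqref{osymm}, then observe that the required identity $v^+(\z)^t(\mathcal{A}^+)^{-1}\mathcal{B}^+v^-(\xi)=v^-(\xi)^t(\mathcal{A}^-)^{-1}\mathcal{B}^-v^+(\z)$ is exactly \eqref{symplectic} after inverting $\mathcal{A}^\pm$. The only cosmetic difference is that the paper justifies $\det(\mathcal{A}^-)\ne0$ by a direct appeal to Appendix~\ref{nondeg} (showing it follows from the single assumption \eqref{general}), whereas you phrase it as ``\eqref{det} at parameters $(-\kappa,-\al)$''; the latter is the same statement once one notes that \eqref{general} is itself symmetric under $(\kappa,\al)\mapsto(-\kappa,-\al)$ via the spin reversal $J$.
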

\begin{proof}
In Section \ref{DAI} we had the relation (\ref{symplectic}), 
\begin{align}
\mathcal{B}^+(\mathcal{A}^-)^t=\mathcal{A}^+(\mathcal{B}^-)^t
\,.\label{:::}
\end{align}
In Appendix \ref{nondeg} we show that $\det (\mathcal{A}^-)\ne 0$
follows from the condition \eqref{general}. Hence both $\mathcal{A}^\pm$ can
be inverted. 
So, invert them and multiply the result by $v^+(\z)^t$ from the left and $v^-(\xi) $ from the right:
$$
v^+(\z)^t\(\mathcal{A}^+\)^{-1}
\mathcal{B}^+v^-(\xi)=v^-(\xi)^t\(\mathcal{A}^-\)^{-1}\mathcal{B}^-v^+(\z)
\,.
$$
What remains to do is to add $\osym(\z,\xi|\kappa,\al)$
to both sides, to use  (\ref{osymm}) 
and to recall the identities (\ref{symT}) and (\ref{symQ}).
\end{proof}
\section{Main theorem}\label{mainthm}

Now we are able to prove our main theorem.
\begin{thm}
Under 
the generality condition \eqref{general}
we have
\begin{align}
&Z^{\kappa}\bigl\{\bb^*(\z)(X)\bigr\}
=\frac 1 {2\pi i}\oint\limits_{\Gamma}
\omega (\z,\xi)
Z^{\kappa}\bigl\{\cb(\xi)(X)\bigr\}\frac{d\xi^2}{\xi^2}\,,
\label{mainb1}\\
&Z^{\kappa}\bigl\{\cb^*(\z)(X)\bigr\}=-\frac 1 {2\pi i}\oint\limits _{\Gamma}
\omega (\xi,\z)
Z^{\kappa}\bigl\{\bb(\xi)(X)\bigr\}\frac{d\xi^2}{\xi^2}\,.
\label{mainc1}
\end{align}
\end{thm}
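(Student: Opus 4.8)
The plan is to establish the first identity \eqref{mainb1} by a uniqueness argument resting on the three properties of $\omega$ assembled in Section \ref{propbc}, and then to obtain \eqref{mainc1} from it by spin reversal. By the block structure \eqref{blocks}, both sides of \eqref{mainb1} vanish unless $X\in\mathcal{W}_{\al+1,-1}$, so I assume this and work with the blocks $\bb^*(\z,\al)$ and $\cb(\z,\al)$. Put
\begin{align*}
G(\z)=Z^{\kappa}\bigl\{\bb^*(\z,\al)(X)\bigr\}
-\frac 1{2\pi i}\oint\limits_{\Gamma}\omega(\z,\xi)\,
Z^{\kappa}\bigl\{\cb(\xi,\al)(X)\bigr\}\,\frac{d\xi^2}{\xi^2},
\end{align*}
so that the assertion is $G\equiv 0$. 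First I would determine the analytic type of $T(\z,\kappa)G(\z)$. Splitting $\omega=\omega_{sing}+(\omega-\omega_{sing})$ and applying Lemma \ref{prop:charact1} to the $\omega_{sing}$ part, while using the singular-part property \eqref{singofom} of $\omega$ for the remainder, one finds that $\z^{-\al}T(\z,\kappa)$ times each contribution is a polynomial in $\z^2$ of degree at most $\mathbf n$; the $\xi$-integration against $Z^{\kappa}\{\cb(\xi,\al)(X)\}$ leaves the $\z$-dependence untouched. Hence $T(\z,\kappa)G(\z)=\z^{\al}\widetilde P(\z^2)$ with $\deg_{\z^2}\widetilde P\le\mathbf n$.

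Next I would show that every deformed period of $T(\z,\kappa)G(\z)$ vanishes. Setting $K(\z,\xi)=\overline D_\z\overline D_\xi\Delta^{-1}_\z\psi(\z/\xi,\al)$ and adding and subtracting its $\cb$-convolution inside the $\xi$-integral, $T(\z,\kappa)G(\z)$ is the difference between $T(\z,\kappa)Z^{\kappa}\{(\bb^*(\z,\al)+\frac1{2\pi i}\oint_\Gamma K\cb)(X)\}$ and $T(\z,\kappa)\frac1{2\pi i}\oint_\Gamma(\omega+K)\,Z^{\kappa}\{\cb(\xi,\al)(X)\}\,\frac{d\xi^2}{\xi^2}$. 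Integrating the first over $\Gamma_{\mathbf m}$ against $Q^-(\z,\kappa+\al)Q^+(\z,\kappa)\varphi(\z)\,\frac{d\z^2}{\z^2}$ gives zero by Lemma \ref{prop:charact2}; for the second I interchange the $\z$- and $\xi$-integrations and invoke, for each fixed $\xi$, the normalisation Lemma \ref{lem:7.1}, which again gives zero. Therefore
\begin{align*}
\int\limits_{\Gamma_{\mathbf m}}\z^{\al}\widetilde P(\z^2)\,
Q^-(\z,\kappa+\al)Q^+(\z,\kappa)\varphi(\z)\,\frac{d\z^2}{\z^2}=0,
\qquad \mathbf m=\mathbf 0,\dots,\mathbf n.
\end{align*}
Writing $\widetilde P(\z^2)=\sum_{\mathbf j}c_{\mathbf j}\z^{2\mathbf j}$, these $\mathbf n+1$ relations read $\sum_{\mathbf j}\mathcal{A}^+_{\mathbf m,\mathbf j}c_{\mathbf j}=0$ with $\mathcal{A}^+$ as in \eqref{AB1}. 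Since $\det\mathcal{A}^+\ne0$ by \eqref{det}, which is equivalent to \eqref{general}, all $c_{\mathbf j}$ vanish, so $\widetilde P\equiv0$ and $G\equiv0$; this proves \eqref{mainb1}.

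Finally, \eqref{mainc1} would follow from \eqref{mainb1}, now established for all admissible $(\kappa,\al)$, by the spin-reversal involution $J$, which interchanges the pairs $(\bb^*,\cb)$ and $(\cb^*,\bb)$ and conjugates the functional at $(\kappa,\al)$ into the one at $(-\kappa,-\al)$. Applying \eqref{mainb1} at $(-\kappa,-\al)$ and transporting it back through $J$, the identities \eqref{symT} and \eqref{symQ} match the transfer-matrix and $Q$-factors, while the symmetry \eqref{symofom} rewrites $\omega(\z,\xi|-\kappa,-\al)$ as $\omega(\xi,\z|\kappa,\al)$, producing exactly the kernel of \eqref{mainc1}; the overall minus sign is generated by the fermionic exchange. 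The genuinely hard work is hidden in the two technical inputs, Lemmas \ref{prop:charact1} and \ref{prop:charact2}, whose proofs are deferred to Appendices \ref{app:bstar1} and \ref{app:bstar2}; granting those, the argument in Section \ref{mainthm} is finite-dimensional linear algebra. Within Section \ref{mainthm} itself the point demanding care is the bookkeeping when interchanging the $\z$- and $\xi$-contour integrations and keeping the degree bound $\le\mathbf n$ sharp, so that the vanishing of the periods translates faithfully into the non-degenerate system governed by $\mathcal{A}^+$.
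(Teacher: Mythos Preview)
Your proposal is correct and follows essentially the same route as the paper's proof: reduce to the blocks, use Lemma~\ref{prop:charact1} together with \eqref{singofom} to see that $T(\z,\kappa)G(\z)=\z^{\al}\widetilde P(\z^2)$ with $\deg\widetilde P\le\mathbf n$, then use Lemma~\ref{prop:charact2} and Lemma~\ref{lem:7.1} to kill all $\mathbf n{+}1$ deformed periods and conclude $\widetilde P=0$ via $\det\mathcal A^+\ne0$; finally derive \eqref{mainc1} from \eqref{mainb1} by spin reversal and \eqref{symofom}. One small imprecision: the minus sign in \eqref{mainc1} is not really ``fermionic exchange'' but comes directly from the relation $\cb^*(\z,\al)=-\phi_\al(\bb^*(\z,\al))$ (with $\bb(\z,\al)=\phi_\al(\cb(\z,\al))$), the constant factors in $\phi_\al$ cancelling between the two; also \eqref{symT} and \eqref{symQ} are already absorbed into \eqref{symofom} and need not be invoked separately here.
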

\begin{proof}
Consider (\ref{mainb1}). It has been said that it is sufficient to consider the blocks
$\bb^*(\z,\al)$, $\cb(\xi,\al)$. 
Due to the structure of singularities (\ref{singb*}) and (\ref{singofom})
we have:
\begin{align}
T(\z,\kappa)
Z^{\kappa}\Bigl\{\Bigl(\bb^*(\z,\al)-
\frac 1 {2\pi i}\oint\limits _{\Gamma}
\omega (\z,\xi)
\cb(\xi,\al)\Bigr)(X)\Bigr\}\frac{d\xi^2}{\xi^2}
=\z ^{\al}
\tilde{P}_{\mathbf{n}}(\z^2)\,,
\end{align}
where $P_\mathbf{n}(\z^2)$ is a polynomial of degree $\mathbf{n}$. 
Due to Lemma \ref{prop:charact2}
and Lemma \ref{lem:7.1} we have
$$
\int\limits _{\Gamma _{\mathbf{m}}}\z ^\al P_{\mathbf{n}}(\z^2)Q^-(\z,\kappa+\al)
Q^+(\z,\kappa)\varphi(\z)\frac {d\z ^2}{\z ^2}=0,\quad \mathbf{m=0,\cdots, n}\,,
$$
which implies $P_{\mathbf{n}}(\z ^2)=0$ due to (\ref{det}).

Now consider (\ref{mainc1}). According to
\cite{HGSII}, the operators $\cb ^*$, $\bb$ 
are related to $\bb ^*$, $\cb$ 
by the transformation

\be
&&\phi_\al(\xb(\z,\al))=q^{-1}N(\al-1)\circ
\J\circ\xb(\z,-\al)\circ\J,
\en
where $N(x)=q^{-x}-q^{x}$ and $\J(X)=J X J^{-1}$ is the spin reversal. 
Namely,
$$
\cb^*(\z,\al)=-\phi_\al(\bb^*(\z,\al)),
\qquad\bb(\z,\al)=\phi_\al(\cb(\z,\al))\,.
$$
It is also easy to see that 
$$
Z^{\kappa}\{X\}=Z^{-\kappa}\{\J(X)\}\,.
$$
Hence (\ref{mainb1}) implies
\begin{align}
&Z^{\kappa}\bigl\{\cb^*(\z,\al)(X)\bigr\}=-\frac 1 {2\pi i}\oint\limits _{\Gamma}
\omega (\z,\xi|-\kappa,-\al)
Z^{\kappa}\bigl\{\bb(\xi,\al)(X)\bigr\}\frac{q\xi^2}{\xi^2}\,,\nn
\end{align}
which is equivalent to
(\ref{mainc1}) due to (\ref{symofom}).
\end{proof}

\appendix

\section{Proof of Lemma  \ref{prop:charact1}}
\label{app:bstar1}

In this appendix, we prove Lemma \ref{prop:charact1}. We
also prove 
some additional result used in Appendix B (Corollary \ref{corr}).
First let us  comment on the equation 
\eqref{redZb*}. 
This formula is used in the proof in order to
reduce the action of the operator $\bb^*$,  
when it is considered inside  the functional $Z^\kappa$,
to that of an operator on the interval $[1,m]$. 
This is a great simplification, because without $Z^\kappa$
the support of the coefficients in the expansion of $\bb^*(\z,\al)$ in
$\zeta^2-1$ becomes indefinitely large. 
In other words, inside $Z^\kappa$ the series expansion of $\bb^*(\z,\al)$
can be summed up to a rational function. Therefore,
the proof of Lemma \ref{prop:charact1}
consists in computing the singular part of the rational function.
This task is done indirectly by 
considering an inhomogeneous space chain.
We introduce inhomogeneity parameters $\xib=(\xi_j)$,
so that the original multiple poles $\zeta^2=q^{\pm2}$ in the homogeneous chain
are split into simple poles 
$\zeta^2=q^{\pm2}\xi_j^2$ for $1\leq j\leq m$.
Define the functional
\bea
&&Z^\kappa_{[1,m]}\left\{X_{[1,m]}\right\}
=
\frac{\Tr_{[1,m]}\langle\kappa+\al|T_{[1,m],{\bf M}}(\xib,\kappa)X_{[1,m]}
|\kappa\rangle}
{\prod_{j=1}^mT(\xi_j,\kappa)\langle\kappa+\al|\kappa\rangle}\,,
\label{Z-finite}
\\
&&T_{[1,m],{\bf M}}(\xib,\kappa)
=T_{1,{\bf M}}(\xi_1,\kappa)\cdots T_{m,{\bf M}}(\xi_m,\kappa). 
\nn
\ena
Using this functional the equation \eqref{redZb*} takes the form
\begin{align}
&Z^\kappa\left\{\bb^*(\z,\al)(q^{2(\al+1) S(0)}X_{[1,m]})\right\}
\label{specialization from inhomogeneous}
\\
&=Z^\kappa_{[1,m+1]}
\left\{
2\gb_{m+1,[1,m]}(\xi_{m+1},\al)(X_{[1,m]}) 
\right\}\mid_{\xi_1=\cdots=\xi_m=1,\xi_{m+1}=\zeta}\,.\nn
\end{align}

First recall from \cite{HGSII} the definition of the 
operator $\kb_{[1,m]}(\z,\al)$ and its basic relations with
$\cb_{[1,m]}(\z,\al)$, $\bar{\cb}_{[1,m]}(\z,\al)$,
$\fb_{[1,m]}(\z,\al)$:
\bea
&&\kb_{[1,m]}(\z,\al)(X_{[1,m]}) 
\label{kb}\\
&&\quad =\Tr_{a,A}\left\{\sigma^+_a\T_{\{a,A\},[1,m]}(\z,\al)
\z^{\al-\bS_{[1,m]}}
(q^{-2S_{[1,m]}}X_{[1,m]})\right\},
\nn\\
&&\kb_{[1,m]}(\z,\al)(X_{[1,m]})
-\Delta_\zeta\fb_{[1,m]}(\z,\al)(X_{[1,m]})
\label{decomp}\\
&&\quad 
=
\cb_{[1,m]}(q\z,\al)(X_{[1,m]})
+
\cb_{[1,m]}(q^{-1}\z,\al)(X_{[1,m]})
+
\bar\cb_{[1,m]}(\z,\al)(X_{[1,m]}). 
\nn
\ena
in the complex plane the operators $\cb_{[1,m]}(\z,\al)$, $\bar{\cb}_{[1,m]}(\z,\al)$,
$\fb_{[1,m]}(\z,\al)$ have singularities at $\z ^2=\xi _j^2$ only.
The operator  $\gb_{c,[1,m]}(\z,\al)$ is given by
\bea
&&2\gb_{c,[1,m]}(\z,\al)(X_{[1,m]})
=\fb_{[1,m]}(q\z,\al)(X_{[1,m]})
+\fb_{[1,m]}(q^{-1}\z,\al)(X_{[1,m]})
\label{gc}\\
&&-2\bT_{c,[1,m]}(\z,\al)\fb_{[1,m]}(\z,\al)(X_{[1,m]})
+2\ub_{c,[1,m]}(\z,\al)(X_{[1,m]}),
\nn
\ena
where 
\be
&&\ub_{c,[1,m]}(\z,\al)(X_{[1,m]})
=\Tr_{c,a,A}\left(Y_{a,c,A}
\bT_{\{a,A\},[1,m]}(\z,\al)\z^{\al-\bS_{[1,m]}}
(q^{-2S_{[1,m]}}X_{[1,m]})\right),
\nn\\
&&Y_{a,c,A}=
-\frac{1}{2}\sigma^3_c\sigma^+_a
+\sigma^+_c\sigma^3_a-\ao_A\sigma^+_c\sigma^3_a. 
\nn
\en

For the proof of Lemma \ref{prop:charact1}  we compare the singularities of
$\gb_{c,[1,m]}(\zeta,\al)(X_{[1,m]})$ inside the functional 
$Z^\kappa_{[1,m+1]}$ with those of
$\cb_{[1,m]}(\z,\al)(X_{[1,m]})$ inside $Z^\kappa_{[1,m]}$.

It is known that $\gb_{c,[1,m]}(\zeta,\al)(X_{[1,m]})$ is regular at $\zeta^2=\xi_j^2$. 
Now we compare $\res_{\z^2=q^{\pm2}\xi_m^2}\gb_{c,[1,m]}(\z,\al)(X_{[1,m]})$ with $\res_{\z^2=\xi_m^2}\cb_{[1,m]}(\z,\al)(X_{[1,m]})$.
Set 
\be
&&U_{[1,m]}=\res_{\z^2=\xi_m^2}\qb_{[1,m]}(\z,\al)(X_{[1,m]})
\frac{d\z^2}{\z^2},
\\
&&\qb_{[1,m]}(\z,\al)(X_{[1,m]})=
\Tr_{A}\left(\bT_{A,[1,m]}(\z,\al)
\z^{\al-\bS_{[1,m]}}(q^{-2S_{[1,m]}}X_{[1,m]})\right).
\en
\begin{lem}\label{lem:rescgc}
The operator $[\sigma^+_m,U_{[1,m]}]_+$ has its support in 
$[1,m-1]$$\mathrm{:}$
\bea
[\sigma^+_m,U_{[1,m]}]_+=x_{[1,m-1]}I_m,
\quad x_{[1,m-1]}=\Tr_m\left(\sigma^+_mU_{[1,m]}\right)\,. 
\label{support}
\ena
We have 
\begin{align}
&\res_{\z^2=\xi_m^2}\cb_{[1,m]}(\z,\al)(X_{[1,m]})
\frac{d\z^2}{\z^2}
=-\frac{1}{2}[\sigma^+_m,U_{[1,m]}]_+\,,
\label{cres}\\
&\res_{\z^2=q^{2\varepsilon}\xi_m^2}
\left(\gb_{c,[1,m]}(\z,\al)(X_{[1,m]})+\bT_{c,[1,m]}(\z,\al)
\fb_{[1,m]}(\z,\al)(X_{[1,m]})\right)\frac{d\z^2}{\z^2}
\label{gcres}\\
&\quad =
\begin{cases}
-\frac{1}{4}[\sigma^+_m,U_{[1,m]}]_+
-U_{[1,m]}(\tau^+_m\sigma^+_c-\sigma^+_m\tau^+_c) & (\varepsilon=+);\\ 
\frac{1}{4}[\sigma^+_m,U_{[1,m]}]_+
+(\tau^-_m\sigma^+_c-\sigma^+_m\tau^-_c)U_{[1,m]} & 
(\varepsilon=-)\,.\\ 
\end{cases} \nn
\end{align}
\end{lem}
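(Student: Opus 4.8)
All three assertions concern residues at $\z^2=\xi_m^2$ (respectively $\z^2=q^{\pm2}\xi_m^2$), and the common mechanism is that the $\z$-dependence of $\qb_{[1,m]}$, $\kb_{[1,m]}$, $\ub_{c,[1,m]}$ and of the operators $\cb_{[1,m]}$, $\bar\cb_{[1,m]}$, $\fb_{[1,m]}$ enters only through the doubled monodromy $\bT_{\cdot,[1,m]}(\z,\al)$, an ordered product $\bT_{\cdot,m}(\z)\,\bT_{\cdot,[1,m-1]}(\z)$ of local factors built from the $L$-operator (\ref{Lop}). Because the inhomogeneity $\xi_m$ sits at site $m$, only the site-$m$ factor is singular at the relevant point while $\bT_{\cdot,[1,m-1]}$ is regular there. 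The plan is therefore to isolate the residue of the site-$m$ factor, which at the degeneration point is rank-deficient and can be evaluated explicitly, and to express every residue through the single local object $U_{[1,m]}=\res_{\z^2=\xi_m^2}\qb_{[1,m]}(\z,\al)(X_{[1,m]})\,\tfrac{d\z^2}{\z^2}$, reducing the auxiliary traces over $A$, $a$ (and $c$) to the Pauli algebra at site $m$.

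\textbf{The support identity.} First I would expand $U_{[1,m]}$ in the site-$m$ factor, $U_{[1,m]}=V_I\otimes I_m+V_+\otimes\sigma^+_m+V_-\otimes\sigma^-_m+V_3\otimes\sigma^3_m$ with $V_\bullet$ supported on $[1,m-1]$. From $[\sigma^+,\sigma^+]_+=[\sigma^+,\sigma^3]_+=0$, $[\sigma^+,\sigma^-]_+=I$ and $[\sigma^+,I]_+=2\sigma^+$ one gets $[\sigma^+_m,U_{[1,m]}]_+=2V_I\otimes\sigma^+_m+V_-\otimes I_m$, so (\ref{support}) is equivalent to $V_I=\tfrac12\Tr_m U_{[1,m]}=0$, with $x_{[1,m-1]}=V_-=\Tr_m(\sigma^+_m U_{[1,m]})$. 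The vanishing $\Tr_m U_{[1,m]}=0$ is exactly the statement that the residue of the site-$m$ factor of the doubled monodromy at $\z^2=\xi_m^2$ carries no component proportional to $I_m$, which I would read off from (\ref{Lop}).

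\textbf{The residue formulas.} For (\ref{cres}) I would compute $\res_{\z^2=\xi_m^2}\cb_{[1,m]}$ directly from the definition of $\cb_{[1,m]}$ in \cite{HGSII}, which like (\ref{kb}) is a trace of the doubled monodromy against a fixed auxiliary covector: contracting the site-$a$ data against the degenerate site-$m$ factor turns the object $U_{[1,m]}$ produced by the $A$-trace into the anticommutator $[\sigma^+_m,U_{[1,m]}]_+$, the coefficient $-\tfrac12$ being read off from the explicit degeneration. For (\ref{gcres}) I would start from (\ref{gc}): in $\gb_{c,[1,m]}+\bT_{c,[1,m]}\fb_{[1,m]}$ the term $-\bT_{c,[1,m]}\fb_{[1,m]}$ cancels, leaving $\tfrac12\bigl(\fb_{[1,m]}(q\z,\al)+\fb_{[1,m]}(q^{-1}\z,\al)\bigr)+\ub_{c,[1,m]}(\z,\al)$. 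Since $\fb_{[1,m]}$ is singular at $\z^2=\xi_m^2$, the shifted terms contribute the poles at $\z^2=q^{\pm2}\xi_m^2$, and by the same degeneration argument each $\fb$-residue is a multiple of $[\sigma^+_m,U_{[1,m]}]_+$, giving the $\pm\tfrac14[\sigma^+_m,U_{[1,m]}]_+$ terms with signs dictated by the shift. The remaining terms involving $\tau^\pm$ and $\sigma^+_c$ come from $\res\,\ub_{c,[1,m]}$, which I would evaluate by inserting $Y_{a,c,A}=-\tfrac12\sigma^3_c\sigma^+_a+\sigma^+_c\sigma^3_a-\ao_A\sigma^+_c\sigma^3_a$ and again replacing the site-$m$ factor by its degeneration; whether the $q^{+}$ or $q^{-}$ shift produces the pole fixes the case $\varepsilon=\pm$ as well as the left/right placement of $U_{[1,m]}$.

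\textbf{Main obstacle.} The hardest part will be the explicit evaluation of the site-$m$ factor of the doubled (adjoint) monodromy at its degeneration points and pushing the three auxiliary traces (over the oscillator space $A$ and the two-dimensional spaces $a$ and $c$) through it. In particular, the $\ub_{c,[1,m]}$ term with the three-space operator $Y_{a,c,A}$ is what produces the asymmetric combinations $\tau^\pm_m\sigma^+_c-\sigma^+_m\tau^\pm_c$, and getting the signs, the factors $\tfrac14$, and the correct ordering of $U_{[1,m]}$ relative to these factors right in the two cases $\varepsilon=\pm$ is the delicate bookkeeping on which the lemma hinges.
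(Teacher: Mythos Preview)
Your strategy is correct and is precisely the one underlying the paper's proof; the paper itself does not give details here but refers to \cite{FB}, Lemma~2.6 for \eqref{support}, to \cite{HGSII}, Lemma~2.2 for \eqref{cres}, and declares \eqref{gcres} ``similar''. Your plan of peeling off the site-$m$ factor of the adjoint monodromy, reducing \eqref{support} to $\Tr_m U_{[1,m]}=0$, and using the cancellation $\gb_{c,[1,m]}+\bT_{c,[1,m]}\fb_{[1,m]}=\tfrac12\bigl(\fb_{[1,m]}(q\z)+\fb_{[1,m]}(q^{-1}\z)\bigr)+\ub_{c,[1,m]}(\z)$ is exactly the intended route.

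One small correction: the $L$-operator you need is not \eqref{Lop} (that is the Matsubara-side $L_{A,\mathbf{m}}$), but the space-direction fused operator $L_{\{a,A\},j}(\z/\xi_j)$ from \cite{HGSII}; after the gauge $F_{a,A}$ of Appendix~\ref{app:bstar2} this factorises as $\bT_{a,[1,m]}\bT_{A,[1,m]}$, and it is the $\bT_{a,[1,m]}$ piece (spin-$\tfrac12$ auxiliary) whose poles sit at $\z^2=q^{\pm2}\xi_m^2$ and produce the $\tau^{\pm}_m\sigma^+_c-\sigma^+_m\tau^{\pm}_c$ terms in $\res\,\ub_{c,[1,m]}$, while the oscillator piece $\bT_{A,[1,m]}$ is what gives $U_{[1,m]}$ at $\z^2=\xi_m^2$. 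With that correction your bookkeeping program goes through.
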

\begin{proof}
Property \eqref{support} appears in \cite{FB} as 
Lemma 2.6 (see also \cite{HGSII}, Appendix D). 
Formula \eqref{cres} is proved in \cite{HGSII}, Lemma 2.2.  
The calculation for \eqref{gcres} is similar, but
we omit the details. 
\end{proof}
\begin{cor}\label{corr}
We have the  relations between 
 $\bar\cb_{[1,m]}(\z,\al) $ and
$ \cb_{[1,m]}(\z,\al)$
\begin{align}
&\res_{\z^2=\xi_j^2}\bigl(\bar\cb_{[1,m]}(\z,\al)+\tb^*_{[1,m]}(\z,\al)
\cb_{[1,m]}(\z,\al)\bigr)(X_{[1,m]})
\frac{d\z^2}{\z^2}=0
\,.
\label{cbarres}
\end{align}
\end{cor}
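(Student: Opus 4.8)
The plan is to compute, pole by pole, the operator-valued residue at $\z^2=\xi_j^2$ of each of the two terms and to check that they add up to zero. Since the finite monodromy $T_{[1,m],\mathbf{M}}(\xib,\kappa)$ is an ordered product, one may commute the relevant physical-space $R$-matrices by the Yang--Baxter equation --- all of them invertible in a neighbourhood of $\z^2=\xi_j^2$, their degeneracies sitting at $\z^2=q^{\pm2}\xi_i^2$ --- so as to bring $\xi_j$ into the outermost slot. Because conjugation by invertible operators preserves vanishing of a residue, it suffices to treat $j=m$, which is precisely the situation analysed in Lemma \ref{lem:rescgc}.

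First I would dispose of $\bar\cb$. Substituting the decomposition \eqref{decomp} and using that $\cb_{[1,m]}$ and $\fb_{[1,m]}$ have poles only at $\z^2=\xi_i^2$, the shifted terms $\cb_{[1,m]}(q^{\pm1}\z,\al)$ and $\fb_{[1,m]}(q^{\pm1}\z,\al)$ are regular at $\z^2=\xi_m^2$ for generic inhomogeneities, their poles having moved to $\z^2=q^{\mp2}\xi_i^2$. Hence
\begin{align}
\res_{\z^2=\xi_m^2}\bar\cb_{[1,m]}(\z,\al)(X_{[1,m]})\frac{d\z^2}{\z^2}
=\res_{\z^2=\xi_m^2}\kb_{[1,m]}(\z,\al)(X_{[1,m]})\frac{d\z^2}{\z^2}\,,\nn
\end{align}
so the whole claim reduces to matching the residue of $\kb$ against that of $\tb^*_{[1,m]}(\z,\al)\cb_{[1,m]}(\z,\al)$.

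For the second term I would exploit that the inverse monodromy $T_{a,[1,m]}(\z)^{-1}$ entering $\tb^*_{[1,m]}(\z,\al)=\Tr_a\mathbb{T}_{a,[1,m]}(\z,\al)$ is regular and invertible at $\z^2=\xi_m^2$ (its poles are at $\z^2=q^{\pm2}\xi_i^2$), so that $\tb^*_{[1,m]}(\z,\al)$ itself is regular there and, by \eqref{cres},
\begin{align}
\res_{\z^2=\xi_m^2}\tb^*_{[1,m]}(\z,\al)\cb_{[1,m]}(\z,\al)(X_{[1,m]})\frac{d\z^2}{\z^2}
=-\half\,\tb^*_{[1,m]}(\xi_m,\al)\bigl([\sigma^+_m,U_{[1,m]}]_+\bigr)\,.\nn
\end{align}
It then remains to establish $\res_{\z^2=\xi_m^2}\kb=\half\,\tb^*_{[1,m]}(\xi_m,\al)([\sigma^+_m,U_{[1,m]}]_+)$. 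This is the step where the structure of $\kb$ is used: at $\z^2=\xi_m^2$ the $R$-matrix joining the combined auxiliary space $\{a,A\}$ to site $m$ degenerates onto a rank-one projector --- the same localisation mechanism behind \eqref{cres} and \eqref{gcres} --- and, since $\kb$ differs from $\qb$ only by the extra two-dimensional space $a$ dressed by $\sigma^+_a$, the fusion of $\{a,A\}$ (see \cite{HGSII}) factorises the residue into the oscillator piece $U_{[1,m]}$ and the two-dimensional trace over $a$, the latter assembling into $\tb^*_{[1,m]}(\xi_m,\al)$ acting on the support-$[1,m-1]$ operator produced by \eqref{support}.

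The main obstacle is exactly this last factorisation: one must track the ordering of the non-commuting factors on the site-$m$ and $[1,m-1]$ tensor slots, confirm that no derivative-of-pole term arises (guaranteed by the regularity of $\tb^*$ at $\z^2=\xi_m^2$ noted above, so that the residue of the product is simply the value of $\tb^*$ times the residue of $\cb$), and reproduce both the normalisation $\half$ and the $\sigma^+_m$-anticommutator. Since this is the very residue computation that was deliberately omitted for \eqref{gcres} in Lemma \ref{lem:rescgc}, in practice I would carry it out in parallel with that one and then read off \eqref{cbarres}.
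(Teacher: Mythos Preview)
Your reduction via \eqref{decomp} to $\res_{\z^2=\xi_m^2}\kb_{[1,m]}$ is correct, as is the observation that $\tb^*_{[1,m]}(\z,\al)$ is regular at $\z^2=\xi_m^2$ so that $\res(\tb^*\cb)=\tb^*_{[1,m]}(\xi_m,\al)\bigl(\res\cb\bigr)$. The paper, however, takes a shorter route: it computes $\res_{\z^2=\xi_m^2}\bar\cb_{[1,m]}$ directly from the definition of $\bar\cb$ in \cite{HGSII}, without passing through $\kb$. At $\z=\xi_m$ the site-$m$ factor of the adjoint monodromy $\bT_{a,[1,m]}$ degenerates to the permutation $\mathbb{P}_{a,m}$, which converts $\sigma^+_a$ into the matrix element $(0,1)_m(\cdot)\binom{1}{0}_m=\Tr_m(\sigma^+_m\,\cdot\,)$ on site $m$; the support property \eqref{support} then replaces $U_{[1,m]}$ by the site-$m$-free operator $x_{[1,m-1]}$, and what remains is literally $\tb^*_{[1,m]}(\xi_m,\al)$ applied to $-\res\cb$ via \eqref{cres}. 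This is a two-line argument once \eqref{support} and \eqref{cres} are in hand.

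Your route eventually rejoins the paper's: after undoing the gauge $F_{a,A}=1-\ao_A\sigma^+_a$ on the fused auxiliary space (which leaves $\sigma^+_a$ untouched since $(\sigma^+_a)^2=0$), the residue of $\kb$ reduces to the same expression $\Tr_a\bigl(\sigma^+_a\mathbb{P}_{a,m}\bT_{a,[1,m-1]}(\xi_m,\al)U_{[1,m]}\bigr)$. So the detour through \eqref{decomp} is harmless but buys nothing --- the ``factorisation'' you leave for last is exactly the paper's direct computation, and the parallel with the omitted calculation for \eqref{gcres} is not actually needed here.
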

\begin{proof}
To see \eqref{cbarres}, it suffices to write 
\begin{align}
\res_{\z^2=\xi_m^2}&\bar\cb_{[1,m]}(\z,\al)(X_{[1,m]}) 
=\Tr_{a}\left(\sigma^+_a\mathbb{P}_{a,m}
\bT_{a,[1,m-1]}(\xi_m,\al)U_{[1,m]}\right)\nn\\&
=
\Tr_{a}\Bigl(\mathbb{P}_{a,m}
\bT_{a,[1,m-1]}(\xi_m,\al)\Bigl(
(0,1)_mU_{[1,m]}
\Bigl(\begin{matrix}1\\ 0\\ \end{matrix}\Bigr)_m\ \Bigr)\Bigr)\,,\nn
\end{align}
and use \eqref{support}, \eqref{cres} and $R$-matrix symmetry.
\end{proof}

Using Lemma \ref{lem:rescgc} we obtain
\begin{align}
&\res_{\xi_{m+1}^2=q^{2}\xi_m^2}
Z^\kappa_{[1,m+1]}
\left\{2\gb_{m+1,[1,m]}(\xi_{m+1},\al)(X_{[1,m]})\right\}
\frac{d\xi_{m+1}^2}{\xi_{m+1}^2}
\label{xim2}\\
&\quad=
\res_{\zeta^2=\xi_m^2}
Z^\kappa_{[1,m]}\left\{\cb_{[1,m]}(\zeta,\al)(X_{[1,m]})\right\}
\frac{d\zeta^2}{\zeta^2}\nn\\
&\quad-2\ \res_{\xi_{m+1}^2=q^{2}\xi_m^2}Z^\kappa_{[1,m+1]}
\left\{U_{[1,m]}(\tau^+_m\sigma^+_{m+1}
-\sigma^+_m\tau^+_{m+1})\right\}.
\nn
\end{align}
Here the third term of \eqref{gc} does not contribute, because the only singularities of $\fb_{[1,m]}(\z,\al)$ are the simple poles at $\z^2=\xi_j^2$, and
the following inhomogeneous analogue of Theorem \ref{thm:t*rho} holds:
\bea
Z^\kappa_{[1,m+1]}
\left\{\bT_{m+1,[1,m]}(\xi_{m+1},\al)X_{[1,m]}\right\}
=2\rho(\xi_{m+1})Z^\kappa_{[1,m]}\left\{X_{[1,m]}\right\}. 
\label{Zt*rho}
\ena
Note that 
\be
&&\tau^+_m\sigma^+_{m+1}-\sigma^+_m\tau^+_{m+1}
=(\tau^+_m\sigma^+_{m+1}-\sigma^+_m\tau^+_{m+1})P^-_{m,m+1},
\en
where $P^-$ is the projector on the singlet, 
$$
P^-(v_\varepsilon\otimes v_{\varepsilon'})=
\epsilon\delta_{\epsilon+\epsilon',0}
\frac{1}{2}(v_+\otimes v_- -v_-\otimes v_+),
\quad
\sigma^3v_\varepsilon=\varepsilon v_\varepsilon.
$$
Using the cyclicity of trace and the quantum determinant relation
\be
&&
P^-_{m,m+1}
T_{m,{\bf M}}(\xi_m)T_{m+1,{\bf M}}(q\xi_m)
=a(q\xi_m)d(\xi_m)P^-_{m,m+1}, 
\en
we find 
\begin{align*}
&\res_{\xi_{m+1}^2=q^{2}\xi_m^2}Z^\kappa_{[1,m+1]}
\left\{U_{[1,m]}(\tau^+_m\sigma^+_{m+1}
-\sigma^+_m\tau^+_{m+1})\right\}=\frac{a(q\xi_m)d(\xi_m)}{\prod_{j=1}^mT(\xi_j,\kappa)\cdot T(q\xi_m,\kappa)}\nn
\\
&\times
\frac{\Tr_{[1,m+1]}\langle\kappa+\al|T_{[1,m-1],{\bf M}}(\xib,\kappa)
U_{[1,m]}(\tau^+_m\sigma^+_{m+1}-\sigma^+_m\tau^+_{m+1})|\kappa\rangle}
{\langle\kappa+\al|\kappa\rangle}\\
&=-\frac{a(q\xi_m)d(\xi_m)}{T(\xi_m,\kappa) T(q\xi_m,\kappa)}
Z^\kappa_{[1,m-1]}\Bigl\{\Tr_m\frac{1}{2}[\sigma^+_m, U_{[1,m]}]_+\Bigr\}\\
&=\frac{2a(q\xi_m)d(\xi_m)}
{T(\xi_m,\kappa)T(q\xi_m,\kappa)}
\res_{\zeta^2=\zeta_m^2}
Z_{[1,m]}^\kappa
\left\{\cb_{[1,m]}(\zeta,\al)(X_{[1,m]})\right\}
\frac{d\zeta^2}{\zeta^2}.
\end{align*}
In the last line we used \eqref{support}, \eqref{cres}.
Computation of the 
residue at $\z^2=q^{-2}\xi_m^2$ is done similarly, using
\be
&&\tau^-_m\sigma^+_{m+1}-\sigma^+_m\tau^-_{m+1}
=P^-_{m,m+1}(\tau^-_m\sigma^+_{m+1}-\sigma^+_m\tau^-_{m+1}).
\en
The residues at $\xi_{m+1}^2=q^{\pm 2}\xi_j^2$ are 
readily found from $R$-matrix symmetry. 
\begin{lem}
\begin{align}
&
\res_{\xi _{m+1}^2=q^{\pm 2}\xi_j^2}
Z^\kappa_{[1,m+1]}\left\{2\gb_{m+1,[1,m]}(\xi_{m+1},\al)(X_{[1,m]}) 
\right\}
\frac{d\xi_{m+1}^2}{\xi_{m+1}^2}\label{rescb1}
\\
&\quad=\res_{\z^2=q^{\pm 2}\xi_j^2}\ \omega (\z,\xi _j)
\ \res_{\zeta^2=\xi_j^2}
Z^\kappa_{[1,m]}
\left\{
\cb_{[1,m]}(\zeta,\al)(X_{[1,m]})
\right\}
\frac{d\zeta^2}{\zeta^2}.
\nn
\end{align}
\end{lem}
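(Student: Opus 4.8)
The plan is to bootstrap from the case $j=m$, which is settled by the explicit residue computation preceding the statement. That computation shows
$$\res_{\xi_{m+1}^2=q^{\pm2}\xi_m^2}Z^\kappa_{[1,m+1]}\bigl\{2\gb_{m+1,[1,m]}(\xi_{m+1},\al)(X_{[1,m]})\bigr\}\frac{d\xi_{m+1}^2}{\xi_{m+1}^2}$$
is an explicit scalar multiple of $\res_{\zeta^2=\xi_m^2}Z^\kappa_{[1,m]}\bigl\{\cb_{[1,m]}(\zeta,\al)(X_{[1,m]})\bigr\}\frac{d\zeta^2}{\zeta^2}$, and comparing with \eqref{omegasing} shows that this scalar is exactly $\res_{\z^2=q^{\pm2}\xi_m^2}\omega_{sing}(\z,\xi_m)$. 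Since $\z^{-\al}T(\z,\kappa)\bigl(\omega(\z,\xi)-\omega_{sing}(\z,\xi)\bigr)$ is a polynomial in $\z^2$ by \eqref{singofom}, the residues of $\omega(\z,\xi_m)$ and of $\omega_{sing}(\z,\xi_m)$ at $\z^2=q^{\pm2}\xi_m^2$ coincide for generic $\xi_m$; hence \eqref{rescb1} holds for $j=m$, the cases $\varepsilon=\pm$ being treated in the same way.

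For general $j$ I would invoke the permutation symmetry of the inhomogeneous space chain. The monodromy product $T_{[1,m+1],\mathbf{M}}(\xib,\kappa)$, the operators $\gb_{m+1,[1,m]}$ and $\cb_{[1,m]}$, and therefore the functionals $Z^\kappa_{[1,m+1]}$ and $Z^\kappa_{[1,m]}$, are all covariant under interchanging two adjacent physical sites $(i,\xi_i)\leftrightarrow(i+1,\xi_{i+1})$ by conjugation with the intertwiner $\check R_{i,i+1}(\xi_i/\xi_{i+1})$, as a consequence of the Yang--Baxter equation. Using a product of such intertwiners I would move the $j$-th site to the rightmost physical position $m$, so that $\xi_j$ becomes adjacent to the creation auxiliary $m+1$. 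These intertwiners depend only on the ratios $\xi_j/\xi_k$ with $j<k\le m$, and not on the creation parameter $\xi_{m+1}$; consequently they commute with $\res_{\xi_{m+1}^2=q^{\pm2}\xi_j^2}$, so taking the residue commutes with the reordering and the conjugating factors appear identically on the two sides of the identity, where they cancel.

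After the reordering the collision $\xi_{m+1}^2=q^{\pm2}\xi_j^2$ is exactly the rightmost-site collision already analysed, so the $j=m$ formula applies verbatim with $\xi_m$ replaced by $\xi_j$; and since $\res_{\z^2=q^{\pm2}\xi^2}\omega(\z,\xi)$ is the fixed elementary function of $\xi$ determined in the first step, this produces precisely \eqref{rescb1}. The main obstacle is the bookkeeping behind this covariance: one must check that the combination $\gb_{m+1,[1,m]}+\bT_{m+1,[1,m]}\fb_{[1,m]}$, whose residue was identified in Lemma \ref{lem:rescgc}, together with the annihilation operator $\cb_{[1,m]}$, transform in matching ways under the $\check R$-reordering of the auxiliary-space traces, so that the (regular) intertwiner factors genuinely drop out of the residue instead of contributing a spurious dressing. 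Once this compatibility is in place, the reduction to the case $j=m$ is immediate.
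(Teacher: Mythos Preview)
Your proposal is correct and follows essentially the same route as the paper: the paper's proof simply cites ``the preceding calculations'' (which established the $j=m$ case and already invoked $R$-matrix symmetry to pass to general $j$) together with the explicit residues of $\omega(\z,\xi_j)$ at $\z^2=q^{\pm2}\xi_j^2$. Your use of $\omega_{sing}$ together with \eqref{singofom} to identify those residues, and your slightly more explicit account of the $\check R$-covariance needed to permute site $j$ to position $m$, are just a spelled-out version of what the paper compresses into the phrase ``readily found from $R$-matrix symmetry.''
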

\begin{proof}
This follows from the preceding calculations and 
\be
&&
\res_{\z^2=q^{ 2}\xi_j^2}\ \omega (\z,\xi _j)=1-\frac{4a(\xi_j)d(q\xi_j)}{T(\xi _j,\kappa)T(\xi _j q,\kappa)}\,,\\
&&
\res_{\z^2=q^{ -2}\xi_j^2}\ \omega (\z,\xi _j)=
-\Big(1-\frac{4a(\xi_j q^{-1})d(\xi_j)}{T(\xi _j,\kappa)T(\xi _j q^{-1},\kappa)}\Bigr)\,.
\en
\end{proof}
Let us return to the homogeneous case $\xi_1=\cdots=\xi_m=1$.
The operators $\cb(\z,\al)$, $\bar{\cb}(\z,\al)$ acting from  $\mathcal{W}_{\al+1,-1}$ to
$\mathcal{W}_{\al,0}$ are defined by
\begin{align}
&\cb(\z,\al)\bigl(q^{2(\al+1)S(0)}X_{[1,m]}\bigr)=q^{2\al S(0)}\cb_{[1,m]}(\z,\al)\bigl(X_{[1,m]}\bigr)\,,\label{defc}\\
&\bar{\cb}(\z,\al)\bigl(q^{2(\al+1)S(0)}X_{[1,m]}\bigr)=q^{2\al S(0)}\bar{\cb}_{[1,m]}(\z,\al)\bigl(X_{[1,m]}\bigr)\,,\nn
\end{align}
and the requirement of translational invariance. This definition is equivalent to the one given in \cite{HGSII}. The reduction relation 
proven 
there ensures the self consistency of the present definition. 
The equation (\ref{singb*}) follows by writing \eqref{rescb1}
as a contour integral and 
specialising to $\xi_1=\cdots=\xi_m=1$.

It remains to show that the polynomial 
$P_\mathbf{n}(\z ^2)$ in the remainder term of \eqref{singb*} 
has degree at most $\mathbf{n}$.
The only non-trivial case to consider 
is when the spin of $X_{[1,m]}$ equals $-1$.
Then it follows from the fact \cite{HGSII} that 
$$
\gb_{c,[1,m]}(\z,\al)(X_{[1,m]})=O(1),\qquad \z^2\to \infty\,.
$$
This finishes the proof of Lemma \ref{prop:charact1} .


\section{Proof of Lemma \ref{prop:charact2}}
\label{app:bstar2}
The goal of this section is to prove Lemma \ref{prop:charact2}.
The proof is done in several steps. 

\noindent{\bf Step 1.}

Recall the definition \eqref{kb}. 
Fix a solution $\fb_{0,[1,m]}(\z,\al)$ of the equation
\bea
\Delta_\z\fb_{0,[1,m]}(\z,\al)(X_{[1,m]})
=\kb_{[1,m]}(\z,\al)(X_{[1,m]}) 
\label{delinv-k}
\ena
which has poles only at $\z^2=q^{2n}$ ($n\in \Z$). 
Define further
\begin{align}
&\bb^*_0(\z,\alpha)(q^{2(\alpha+1)S(0)}X_{[1,m]})\label{bbstar0}\\
&\qquad\qquad =
\lim_{l\rightarrow\infty}q^{2\alpha S(0)}\Tr_c\{\mathbb T_{c,[m+1,l]}(\z)\gb_{0,c,[1,m]}(\z,\alpha)(X_{[1,m]})\},\nn
\\
&\gb_{0,c,[1,m]}(\z,\alpha)(X_{[1,m]})\label{gb*0}\\
&\qquad\qquad=\frac12\fb_{0,[1,m]}(q\z,\al)(X_{[1,m]})
+\frac12\fb_{0,[1,m]}(q^{-1}\z,\al)(X_{[1,m]})\nn
\\
&\qquad\qquad-\mathbb T_{c,[1,m]}(\z,\alpha)\fb_{0,[1,m]}(\z,\al)(X_{[1,m]})+\ub_{c,[1,m]}(\z,\al)(X_{[1,m]})\,.
\nn
\end{align}
\begin{lem}\label{lem:step2}
Define
\begin{align}
&I_{\mathbf{M}}(\z)(X_{[1,m]})\nn\\
&=Q_{\mathbf{M}}^-(\z,\kappa+\al)
\Tr _{[1,m],c}\Bigl(T_{[1,m],\mathbf{M}}(1,\kappa)T_{c,\mathbf{M}}(\z,\kappa)\gb_{0,c,[1,m]}(\z,\al)(X_{[1,m]})\Bigr)Q^+_{\mathbf{M}}(\z,\kappa)\,.\nn
\end{align}
The identity \eqref{b*periods} follows from 
\begin{align}
\int\limits_{\Gamma_{{\bf m}}}I_{\mathbf{M}}(\z)(X_{[1,m]})
\varphi(\z)\frac{d\z^2}{\z^2}=0\,.\label{step2}
\end{align}
\end{lem}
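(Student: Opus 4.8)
The plan is to read $I_{\mathbf M}(\z)$ as the operator-valued (un-sandwiched) lift of the first term of \eqref{b*periods}, but with the genuine creation operator $\bb^*(\z,\al)$ replaced by the auxiliary operator $\bb^*_0(\z,\al)$ built in \eqref{bbstar0}--\eqref{gb*0} from the $q$-primitive $\fb_{0,[1,m]}$. The reduction of \eqref{b*periods} to \eqref{step2} then splits into two largely independent moves: (a) passing from the operator identity \eqref{step2} to its eigenvalue form by sandwiching against $\langle\kappa+\al|$ and $|\kappa\rangle$; and (b) showing that, inside $Z^\kappa$, the operator $\bb^*_0(\z,\al)$ coincides with $\bb^*(\z,\al)$ plus exactly the $\cb$-correction kernel appearing in \eqref{b*periods}.

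For move (a) I would sandwich \eqref{step2} between $\langle\kappa+\al|$ and $|\kappa\rangle$ and use that these are eigencovectors/eigenvectors of $Q^{-}_{\mathbf M}(\z,\kappa+\al)$ and $Q^{+}_{\mathbf M}(\z,\kappa)$ with eigenvalues $Q^-(\z,\kappa+\al)$ and $Q^+(\z,\kappa)$. Since $\bb^*_0$ has the same structure \eqref{redbstar} as $\bb^*$ with $\gb_{c,[1,m]}$ replaced by $\gb_{0,c,[1,m]}$, the derivation leading to \eqref{redZb*} applies verbatim and gives $\langle\kappa+\al|I_{\mathbf M}(\z)(X)|\kappa\rangle=\tfrac12 T(1,\kappa)^m\langle\kappa+\al|\kappa\rangle\,T(\z,\kappa)Z^\kappa\{\bb^*_0(\z,\al)(X)\}Q^-(\z,\kappa+\al)Q^+(\z,\kappa)$. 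As $\langle\kappa+\al|\kappa\rangle\neq0$ by \eqref{general}, the vanishing \eqref{step2} is equivalent to $\int_{\Gamma_{\mathbf m}}T(\z,\kappa)Z^\kappa\{\bb^*_0(\z,\al)(X)\}Q^-(\z,\kappa+\al)Q^+(\z,\kappa)\varphi(\z)\frac{d\z^2}{\z^2}=0$.

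For move (b) I would set $g=\fb_{0,[1,m]}-\fb_{[1,m]}$ and subtract \eqref{gc} from \eqref{gb*0}, so that $2(\gb_{0,c}-\gb_c)=g(q\z)+g(q^{-1}\z)-2\bT_{c,[1,m]}(\z,\al)g(\z)$, the $\ub_{c,[1,m]}$ terms cancelling. Feeding this into the reduction and using Lemma \ref{thm:t*rho} to convert the adjoint monodromy $\bT_{c,[1,m]}$ into the scalar $\rho(\z)$ produces the combination $\overline{D}_\z$ acting in $\z$ on the $Z^\kappa$-value $G(\z):=Z^\kappa\{g_{[1,m]}(\z,\al)(X_{[1,m]})\}$. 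The primitive $g$ is then tied back to the annihilation operator through $\Delta_\z g=\cb(q\z)+\cb(q^{-1}\z)+\bar\cb$, obtained by combining \eqref{delinv-k} and \eqref{decomp}; Corollary \ref{corr} lets me trade $\bar\cb$ for $-\tb^*\cb$ (hence, again by Lemma \ref{thm:t*rho}, for $-2\rho\,\cb$ at the level of residues at $\z^2=1$), so that $\Delta_\z G=\overline{D}_\z C$ with $C(\xi)=Z^\kappa\{\cb(\xi,\al)(X)\}$. Reproducing $C$ by its residues on $\Gamma$ through the Cauchy kernel $\psi(\z/\xi,\al)$ and absorbing the second $q$-difference into $\overline{D}_\xi$ then rewrites $Z^\kappa\{(\bb^*_0-\bb^*)(X)\}$ as $\frac1{2\pi i}\oint_\Gamma(\overline{D}_\z\overline{D}_\xi\Delta_\z^{-1}\psi(\z/\xi,\al))C(\xi)\frac{d\xi^2}{\xi^2}$, precisely the correction term of \eqref{b*periods}; the ambiguity in $\Delta_\z^{-1}$ is immaterial, as recorded in the remark around \eqref{trans}. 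Combining (a) and (b) yields \eqref{b*periods}.

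The main obstacle is move (b), and within it the precise identification of the kernel $\overline{D}_\z\overline{D}_\xi\Delta_\z^{-1}\psi(\z/\xi,\al)$: one must match the two $q$-difference structures (one in $\z$ coming from the $\bT_{c}$-to-$\rho$ conversion, one in $\xi$ coming from the $\bar\cb$/Corollary \ref{corr} step) and keep strict control of the residues at $\z^2=1$ and $\z^2=q^{\pm2}$, so that the Cauchy reproduction by $\psi$ is exact rather than merely correct up to a polynomial. Move (a), by contrast, is routine once the eigenvector property of the $Q$-operators and the $\gb_{0,c}$-analogue of \eqref{redZb*} are invoked.
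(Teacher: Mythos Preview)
Your proposal is correct and follows essentially the same route as the paper's proof: pass from the operator identity \eqref{step2} to its scalar form via the eigenvectors (your move (a)), and identify $\bb^*_0-\bb^*$ with the $\cb$-correction in \eqref{b*periods} through $g=\fb_{0,[1,m]}-\fb_{[1,m]}$, the decomposition \eqref{decomp}, and Corollary \ref{corr} (your move (b)).

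The one place where the paper organises things a bit differently is the kernel identification you flag as the main obstacle. Rather than applying $Z^\kappa$ early and writing $\Delta_\z G=\overline D_\z C$ and then converting the inner $\overline D_\z$ to $\overline D_\xi$ via Cauchy reproduction (where one has to deal with the mismatch $\rho(\z)$ versus $\rho(\xi)$), the paper keeps everything at the operator level by introducing $D_\z F=F(q\z)+F(q^{-1}\z)-\tb^*(\z)F(\z)$. It first shows the operator identity $\bb^*_0-\bb^*=D_\z\Delta_\z^{-1}\bigl(\cb(q\z)+\cb(q^{-1}\z)+\bar\cb(\z)\bigr)$, then uses \eqref{step1-3} (which is exactly your Corollary \ref{corr} step, but phrased as an integral representation) to rewrite the bracket as $\tfrac1{2\pi i}\oint_\Gamma D_\xi\psi(\z/\xi,\al)\,\cb(\xi)\,d\xi^2/\xi^2$. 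Only at the very end does it invoke Lemma \ref{thm:t*rho} to replace both $D_\z$ and $D_\xi$ by $\overline D_\z$ and $\overline D_\xi$ inside $Z^\kappa$. This ordering sidesteps the $\rho(\z)$/$\rho(\xi)$ bookkeeping that your route requires, but the content is the same.
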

\begin{proof}

Introduce the operator
\be
D_\z F(\z)=F(q\z)+F(q^{-1}\z)-\tb^*(\z)F(\z),
\en
which can be used to rewrite the definition of $\bb ^*(\z,\al)$:
\begin{align}
\bb ^*(\z,\al)(q^{2(\al +1)S(0)}X_{[1,m]})&=D_\z
\(q^{2\al S(0)}\fb_{[1,m]} ^*(\z,\al)(X_{[1,m]})\)\nn\\ &+
q^{2\al S(0)}\lim_{l\rightarrow\infty}\Tr_c\{\mathbb T_{c,[m+1,l]}(\z)\ub_{c,[1,m]}(\z,\alpha)(X_{[1,m]})\}\,.\nn
\end{align}
In similar formula for $\bb _0^*(\z,\al)$ the only change is $\fb$
to $\fb _0$, $\ub$ remains the same.
Comparing this with the equations \eqref{decomp},
\eqref{delinv-k} and \eqref{defc} we arrive at
$$\bb^*_0(\z,\al)-\bb ^*(\z,\al)=D_\z\Delta _\z^{-1}
\(\cb(\z q,\al)+\cb(\z q^{-1},\al)+\bar{\cb}(\z,\al)\)\,.$$

Now consider the term containing $\cb$ in \eqref{b*periods}.
Notice that due to Lemma \ref{thm:t*rho} for any quasi-local operator
$X(\z)$ depending on $\z$
$$Z^\kappa\bigl\{\overline{D}_\z(X(\z))  \bigl\}=Z^\kappa\bigl\{D_\z(X(\z))  \bigl\}\,.$$
So we replace in \eqref{b*periods} $\overline{D}_\z$, $\overline{D}_\xi$ by
$D_\z$, $D_\xi$.
On the other hand from \eqref{cbarres} 
it follows that we have
an equality of formal power series in $(\z^2-1)^{-1}$, 
\begin{align}
\bar{\cb}(\z,\al)&(q^{2(\alpha+1)S(0)}X_{[1,m]})\label{step1-3}\\
&=-
\frac 1 {2\pi i}\oint\limits_{\Gamma}
\psi (\z/\xi,\al)\tb^*(\xi,\alpha)\cb(\xi,\alpha)(q^{2(\alpha+1)S(0)}X_{[1,m]})\frac{d\xi^2}{ \xi^2} \,.\nn
\end{align}

Using \eqref{step1-3} we evaluate
$$
\cb(\z q,\al)+\cb(\z q^{-1},\al)+\bar{\cb}(\z,\al)=\frac 1 {2\pi i}\oint\limits_{\Gamma}D_\xi\(
\psi(\z/\xi,\al)\)
\cb(\xi,\al)\frac{d\xi^2}{\xi ^2}\,.$$ 
In other words we obtain an equation similar to \eqref{redZb*}
with $\bb ^*$ replaced by the expression under $Z^{\kappa}$ in \eqref{b*periods}, and $\gb$ replaced by $\gb_0$,
which is nothing but the matrix element of \eqref{step2}.

\end{proof}

\noindent{\bf Step 2.}\quad

The next step is to reduce the identity to a difference equation for $\gb_0$ on a finite interval.
We will show that, for all ${\bf m=1,\cdots, n}$,
the identity \eqref{step2} reduces to the same equation
for a quantity in the space direction. 
So, we can forget the Matsubara direction. 
Introduce an operator
\be
\bbA_{c,[1,m]}(\z)(Y_{[1,m]\sqcup c})
=T_{c,[1,m]}(\z)q^{\al\sigma^3_c}
\theta_c\left(Y_{[1,m]\sqcup c}\ 
\theta_c\bigl(T_{c,[1,m]}(\z)^{-1}\bigr)
\right), 
\en
where $\theta$ signifies the anti-involution
\be
\theta(x)=\sigma^2 x^{t} \sigma^2\qquad 
(x\in \End(V)). 
\en

\begin{lem}\label{lem:step3}
Identity \eqref{step2} follows from the equation
\begin{align}
\gb_{0,c,[1,m]}(\z,\al)(X_{[1,m]})
&=-\bbA_{c,[1,m]}(\z)\left(
\gb_{0,c,[1,m]}(q^{-1}\z,\al)(X_{[1,m]})
\right)\,.
\label{step3}
\end{align}
\end{lem}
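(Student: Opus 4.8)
The plan is to take the assumed difference equation \eqref{step3} as the driving relation and to show that, after substitution into $I_{\mathbf{M}}(\z)$, the integrand $I_{\mathbf{M}}(\z)\varphi(\z)\,d\z^2/\z^2$ reproduces itself with opposite sign under a $q$-shift of the contour, so that its period over $\Gamma_{\mathbf{m}}$ must vanish. Concretely, I would first insert \eqref{step3} to express $\gb_{0,c,[1,m]}(\z,\al)$ through $\gb_{0,c,[1,m]}(q^{-1}\z,\al)$, producing inside the trace the operator $\bbA_{c,[1,m]}(\z)$ applied to the shifted object. The whole expression is still a trace over $[1,m]\sqcup c$ against the monodromies $T_{[1,m],\mathbf{M}}(1,\kappa)$ and $T_{c,\mathbf{M}}(\z,\kappa)$, sandwiched by $Q^-_{\mathbf{M}}(\z,\kappa+\al)$ and $Q^+_{\mathbf{M}}(\z,\kappa)$.

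The technical core is the manipulation in the auxiliary space $c$. The operator $\bbA_{c,[1,m]}(\z)$ is a $\theta_c$-twisted conjugation by the space monodromy $T_{c,[1,m]}(\z)$, together with the twist $q^{\al\sigma^3_c}$. Since $\theta_c$ is an anti-involution that preserves the trace, $\Tr_c\circ\theta_c=\Tr_c$, and reverses products, I can transport $T_{c,\mathbf{M}}(\z,\kappa)$ across the $\theta_c$-conjugation by transposing in $c$ and invoking the crossing symmetry of the $L$-operators, while the inverse monodromy $T_{c,[1,m]}(\z)^{-1}$ recombines with $T_{c,[1,m]}(\z)$ via the $RTT$ relations. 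What must emerge from this bookkeeping is the quantum determinant: fusing the auxiliary line $c$ back into the Matsubara transfer matrices should produce precisely the scalar factors $a(\z)$ and $d(\z)$ of \eqref{eq-ad}, while the argument of the Matsubara $Q$-operators is shifted to $q^{\pm1}\z$. This step — reproducing the Baxter combination $d(\z)Q^{+}(q\z,\kappa)+a(\z)Q^{+}(q^{-1}\z,\kappa)$ and its $\kappa+\al$ partner for $Q^-$ out of the $\theta_c$-twisted conjugation — is the main obstacle, since it demands tracking the transpose $\theta$, the inverse monodromy, and the $q^{\al\sigma^3_c}$ and $\kappa$ twists all at once.

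Once the Baxter equation \eqref{Baxter} has been used to absorb the emerging $a(\z),d(\z)$ and to shift the $Q^{\pm}_{\mathbf{M}}$ arguments, the resulting integrand is $I_{\mathbf{M}}(q^{-1}\z)$ carried along by $\varphi$ with $a,d$ coefficients. At this point the argument is exactly the contour-moving step of Lemmas \ref{1stAI} and \ref{exact}: I would multiply by $\varphi(\z)$ and use its functional equation \eqref{eqphi}, $a(\z q)\varphi(\z q)=d(\z)\varphi(\z)$, to shift the contour $\Gamma_{\mathbf{m}}$ by one $q$-step. Because $\Gamma_{\mathbf{m}}$ encircles the full $q$-string of poles of $\varphi_{s_{\mathbf{m}}}$, the shifted contour is homologous to the original and the would-be boundary terms at the two ends of the string are killed by the zeros of $a$ and $d$ there. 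The net effect is that the $\Gamma_{\mathbf{m}}$-period of $I_{\mathbf{M}}(\z)\varphi(\z)$ equals minus itself, which forces \eqref{step2}; restoring the Matsubara matrix element $\langle\kappa+\al|\cdots|\kappa\rangle$ then yields \eqref{b*periods} as in Lemma \ref{lem:step2}.
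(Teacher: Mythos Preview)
Your high-level plan is the right shape — insert \eqref{step3}, manipulate in the auxiliary space $c$ using $\theta_c$, and recognise the Baxter/$\varphi$ structure — but the ``main obstacle'' you flag is not a bookkeeping exercise: it is the whole proof, and the route you sketch for it does not close.  Applying $\theta_c$ and crossing to $T_{c,\mathbf{M}}(\z,\kappa)$ at generic $\z$ produces an \emph{inverse} Matsubara monodromy at a shifted argument, not a shifted monodromy; there is no mechanism in your argument for turning that inverse back into $T_{c,\mathbf{M}}(q^{-1}\z,\kappa)$ so as to land on $I_{\mathbf{M}}(q^{-1}\z)$.  Likewise, the factors $a(\z),d(\z)$ you hope will appear from a ``quantum determinant'' step have nowhere to come from at generic $\z$: the quantum determinant relation is a two-line identity at specific relative spectral parameter, not a generic fusion.

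The paper avoids this by \emph{not} working at generic $\z$.  It first reduces to $s_{\mathbf{n}}=1/2$ by fusion, so that $\Gamma_{\mathbf{n}}$ encloses exactly two simple poles of $\varphi$, at $\z^2=\tau_{\mathbf{n}}^2$ and $\z^2=q^{-2}\tau_{\mathbf{n}}^2$.  The identity \eqref{step2} then becomes the single two-term relation $d(\tau_{\mathbf{n}}q^{-1})\,I_{\mathbf{M}}(\tau_{\mathbf{n}})+a(\tau_{\mathbf{n}})\,I_{\mathbf{M}}(\tau_{\mathbf{n}}q^{-1})=0$, and one computes $I_{\mathbf{M}}$ at these two \emph{specific} points.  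The key is that the $L$-operator degenerates there: $L_{c,\mathbf{n}}(1)=\eta P_{c,\mathbf{n}}$ at $\z=\tau_{\mathbf{n}}$, and $T_{c,\mathbf{M}}(\tau_{\mathbf{n}}q^{-1},\kappa)=2P^-_{c,\mathbf{n}}T_{\mathbf{M}}(\tau_{\mathbf{n}}q^{-1},\kappa)$ at $\z=\tau_{\mathbf{n}}q^{-1}$.  These degenerations let one peel off site $\mathbf{n}$ from $\mathbf{M}$, and the Baxter equation collapses because $d(\tau_{\mathbf{n}})=0$ and $a(\tau_{\mathbf{n}}q^{-1})=0$.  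The anti-involution $\theta_c$ enters only through the elementary identity $\theta_c(2P^-_{c,\mathbf{n}})=P_{c,\mathbf{n}}$, which is precisely what converts the $q^{-1}\tau_{\mathbf{n}}$ expression into the form matching $\bbA_{c,[1,m]}(\tau_{\mathbf{n}})$ applied to $Y(q^{-1}\tau_{\mathbf{n}})$.  After this, \eqref{step3} gives the cancellation directly.  So the missing idea in your proposal is the reduction to the two residues and the use of the permutation/singlet degenerations of $L$ there; that is what replaces the generic-$\z$ crossing manipulation you were unable to complete.
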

\begin{proof}
By symmetry it suffices to consider the case ${\bf m}={\bf n}$.
We prove the assertion assuming that ${\bf s}_{\bf n}=1/2$. 
The general case is reduced to this case by 
the standard fusion procedure.

From the defining relation \eqref{eqphi}
for $\varphi(\z)$, we have
\be
\res_{\z^2=q^{-2}\tau_\mathbf{n}^2}\varphi(\z)\frac{d\z^2}{\z^2}
=\frac{a(\tau_\mathbf{n})}{d(q^{-1}\tau_\mathbf{n})}
\res_{\z^2=\tau_\mathbf{n}^2}\varphi(\z)\frac{d\z^2}{\z^2}\,.
\en
So, the equation (\ref{step2}) is equivalent to:
\begin{align}
d(\tau _\mathbf{n}q^{-1})\ I_{\mathbf{M}}
(\tau_\mathbf{n})(X_{[1,m]})+a(\tau _\mathbf{n})\ I_{\mathbf{M}}
(\tau_\mathbf{n}q^{-1})
(X_{[1,m]})=0\,. \label{I+I}
\end{align}
Let us compute $I_{\mathbf{M}}(\tau _{\mathbf{n}})(X_{[1,m]})$. We simplify notations introducing
$$Y_{[1,m]}(\z,\al)=\gb_{0,c,[1,m]}(\z,\al)(X_{[1,m]})\,.$$
First, move $T_{c,\mathbf{M}}(\tau _\mathbf{n})$ to the left using the Yang-Baxter equation:
\begin{align}
&\Tr _{[1,m],c}\Bigl(T_{[1,m],\mathbf{M}}(1,\kappa)T_{c,\mathbf{M}}(\tau _\mathbf{n},\kappa)
Y_{[1,m]}(\tau_\mathbf{n},\al)\Bigr)\nn\\
&=\Tr _{[1,m],c}\Bigl(T_{c,\mathbf{M}}(\tau _\mathbf{n},\kappa)T_{[1,m],\mathbf{M}}(1,\kappa)
T_{c,[1,m]}(\tau _\mathbf{n})^{-1}
Y_{[1,m]}(\tau_\mathbf{n},\al)T_{c,[1,m]}(\tau _\mathbf{n})\Bigr)\,.\nn
\end{align}
Now, for $s_\mathbf{n}=1/2$, the $L$ operator satisfies
\be
&&L_{c,{\bf n}}(1)=\eta P_{c,{\bf n}},
\en
where we have set $\eta=q^{1/2}(q-q^{-1})$. 
Note that 
\be
&&T_{c,{\bf M}}(\tau_{\bf n},\kappa)=
\eta 
P_{c,{\bf n}}T_{c,{\bf M}'}(\tau_{\bf n},\kappa)
=\eta T_{\mathbf{n},{\bf M}'}(\tau_{\bf n},\kappa)P_{c,{\bf n}}\,,
\\
&&
\eta T_{{\bf n},{\bf M}'}(\tau_{\bf n},\kappa)
=T_\mathbf{M}(\tau_\mathbf{n},\kappa+\al)q^{-\al\sigma^3_{\bf n}}\,,
\en
where $\mathbf{M}'$ signifies the subinterval 
$[\mathbf{1},\mathbf{n-1}]$. 
Moreover,
$$T_\mathbf{M}(\tau _\mathbf{n},\kappa+\al)
Q_{\mathbf{M}}^-(\tau_\mathbf{n},\kappa+\al)=
a(\tau_\mathbf{n})Q_{\mathbf{M}}^-(\tau_\mathbf{n}q^{-1},\kappa+\al)\,,$$
because $d(\tau_\mathbf{n})=0$. So we can evaluate $I_{\mathbf{M}}(\tau _\mathbf{n})(X_{[1,m]})$ as 
\begin{align}
&I_{\mathbf{M}}(\tau _\mathbf{n})(X_{[1,m]})=a(\tau_\mathbf{n})Q_{\mathbf{M}}^-(\tau_\mathbf{n}q^{-1},\kappa+\al)\nn\\&\times
\Tr _{[1,m],c}\Bigl(
P_{c,{\bf n}}
T_{[1,m],\mathbf{M}}(1,\kappa)
q^{-\al\sigma^3_{c}}
T_{c,[1,m]}(\tau_\mathbf{n})^{-1}Y_{[1,m]}(\tau_\mathbf{n},\al)T_{c,[1,m]}(\tau_\mathbf{n})\Bigr)Q^+_{\mathbf{M}}(\tau_\mathbf{n},\kappa)\,.\nn
\end{align}
Now notice that
$$T_{[1,m],\mathbf{M}}(1,\kappa)=T_{[1,m],\mathbf{n}}(\tau _\mathbf{n}^{-1})T_{[1,m],\mathbf{M}'}(1,\kappa)=\mu (\tau _\mathbf{n})
T_{\mathbf{n},[1,m]}(\tau _\mathbf{n})^{-1}T_{[1,m],\mathbf{M}'}(1,\kappa)
\,,$$
where $\mu(\tau)$ is a function whose explicit form is irrelevant for
our calculation.

Bring the permutation through $T_{[1,m],\mathbf{M}}$ and 
put
$T_{[1,m],c}$ to the right by cyclicity of trace:
\begin{align}
I_{\mathbf{M}}(\tau _\mathbf{n})(X_{[1,m]})&=\mu(\tau_\mathbf{n})a(\tau_\mathbf{n})
Q_{\mathbf{M}}^-(\tau_\mathbf{n}q^{-1},\kappa+\al)
\nn\\&\times
\Tr _{[1,m],c}\Bigl(
T_{[1,m],\mathbf{M}'}(1,\kappa)P_{c,{\bf n}}
q^{-\al\sigma^3_{c}}
T_{c,[1,m]}(\tau_\mathbf{n})^{-1}Y_{[1,m]}(\tau_\mathbf{n},\al)\Bigr)Q^+_{\mathbf{M}}(\tau_\mathbf{n},\kappa)\,.\nn
\end{align}

We compute $I_{\mathbf{M}}(\tau _\mathbf{n}q^{-1})(X_{[1,m]})$ similarly, using 
\begin{align}
&T_{c,\mathbf{M}}(\tau _\mathbf{n}q^{-1},\kappa)=-2\eta q^{-1}
P^-_{c,\mathbf{n}}
T_{c,\mathbf{M}'}(\tau _\mathbf{n}q^{-1},\kappa)=
2P^-_{c,\mathbf{n}}T_{\mathbf{M}}(\tau _\mathbf{n}q^{-1}
\kappa)
\,,\nn\\
&T_{\mathbf{M}}(\tau _\mathbf{n}q^{-1},
\kappa)Q^+_{\mathbf{M}}(\tau _\mathbf{n}q^{-1},
\kappa)=d(\tau _\mathbf{n}q^{-1})Q^+_{\mathbf{M}}(\tau _\mathbf{n},
\kappa)\,.\nn
\end{align}
The result is
\begin{align}
&I_{\mathbf{M}}(\tau _\mathbf{n}q^{-1})(X_{[1,m]})=\mu(\tau_\mathbf{n})d(\tau _\mathbf{m}q^{-1})
Q_{\mathbf{M}}^-(\tau_\mathbf{n}q^{-1},\kappa+\al)
\nn\\&\times
\Tr _{[1,m],c}\Bigl(
T_{[1,m],\mathbf{M}'}(1,\kappa)P_{c,{\bf n}}\theta _c\bigl(Y_{[1,m]}(\tau_\mathbf{n}q^{-1},\al)
\theta _c\(T_{c,[1,m]}(\tau_\mathbf{n})^{-1}\)\bigr)\Bigr)Q^+_{\mathbf{M}}(\tau_\mathbf{n},\kappa)\,,\nn
\end{align}
where we applied the anti-automorphism $\theta _c$ under
$\Tr _c$ using $\Tr (\theta(x))=\Tr (x)$ and
$\theta _c(2P^-_{c,\mathbf{n}})=P_{c,\mathbf{n}}$. 
Thus (\ref{step3}) implies (\ref{I+I}).
\end{proof}


\noindent{\bf Step 3.}\quad

The third step is to reduce \eqref{step3} to representation 
theory. 

\begin{lem}\label{lem:step4}
Set
\be
&&y_1=\tau^-_c\sigma^+_a-\tau^-_a\sigma^+_c\,,
\\
&&
y_2=
\sigma^-_c\sigma^+_a
+\tau^-_c\tau^+_a-\tau^+_c\tau^-_a
+(\tau^-_c\sigma^+_a-\sigma^+_c\tau^+_a)\ao_A
 -\sigma^+_c\sigma^+_a\ao_A^2\,.
\en
Then equation \eqref{step3} is equivalent to the identities
\bea
&&\Tr_{c,a,A}\left(y 
\bT_{c,[1,m]}(\z,\al)
\bT_{a,[1,m]}(q^{-1}\z,\al)\bT_{A,[1,m]}(q^{-1}\z,\al)
(X_{[1,m]}')\right)=0
\qquad\label{trk}
\\
&&
(y=y_1,y_2),\nn
\ena
where we have set $X_{[1,m]}'=
(q^{-1}\z)^{\al-\bS_{[1,m]}}q^{-2S_{[1,m]}}X_{[1,m]}$. 
\end{lem}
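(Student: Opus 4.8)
The plan is to substitute the definition \eqref{gb*0} of $\gb_{0,c,[1,m]}(\z,\al)$ into the difference equation \eqref{step3} and to peel off, step by step, all reference to the physical interval $[1,m]$, until only the auxiliary spaces $V_c$, $V_a$ and the oscillator space $A$ remain. The template is that $\ub_{c,[1,m]}(\z)$ is already a trace of the fixed auxiliary operator $Y_{a,c,A}$ times the monodromy $\bT_{\{a,A\},[1,m]}(\z,\al)$ acting on $X'_{[1,m]}=(q^{-1}\z)^{\al-\bS_{[1,m]}}q^{-2S_{[1,m]}}X_{[1,m]}$. I would first bring $\gb_{0,c}(\z)$ and $\bbA_{c,[1,m]}(\z)\bigl(\gb_{0,c}(q^{-1}\z)\bigr)$ to this common trace form, so that \eqref{step3} becomes an identity between two such traces sharing the single argument $X'_{[1,m]}$.

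The twisted conjugation $\bbA_{c,[1,m]}(\z)$ is processed as follows. Its anti-involution $\theta_c$ is removed once the trace over $V_c$ present in \eqref{trk} is taken, using $\Tr(\theta(x))=\Tr(x)$, after which only a $T_{c,[1,m]}(\z)$-conjugation remains and can be transported by cyclicity. The conjugation terms built from $\fb_0$—namely $\half\fb_{0,[1,m]}(q\z)+\half\fb_{0,[1,m]}(q^{-1}\z)-\bT_{c,[1,m]}(\z)\fb_{0,[1,m]}(\z)$ and their $\bbA_c(\z)$-images—are then paired across the two sides of \eqref{step3}; the differences $\fb_{0,[1,m]}(q\z)-\fb_{0,[1,m]}(q^{-1}\z)$ that arise collapse by $\Delta_\z\fb_{0,[1,m]}=\kb_{[1,m]}$ to $\kb$-traces, which have the same shape as $\ub_c$. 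This is the mechanism that eliminates $\fb_0$ from the final statement. Every rearrangement—sliding $T_{c,[1,m]}$ past $T_{a,[1,m]}$ and $T_{A,[1,m]}$ and shifting the spectral parameters—is an instance of the Yang--Baxter/$RLL$ relations among the auxiliary spaces, and each is invertible, which yields the equivalence in both directions.

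After these reductions both sides of \eqref{step3} are carried by the single ordered monodromy $\bT_{c,[1,m]}(\z,\al)\,\bT_{a,[1,m]}(q^{-1}\z,\al)\,\bT_{A,[1,m]}(q^{-1}\z,\al)$ applied to $X'_{[1,m]}$, with a definite operator on $V_c\otimes V_a\otimes A$ in front; the mismatch between the parameter $\z$ on $V_c$ and $q^{-1}\z$ on $V_a$ and $A$ is exactly the one generated by applying $\bbA_c(\z)$ to $\gb_{0,c}(q^{-1}\z)$. Thus \eqref{step3} turns into the statement that this auxiliary operator annihilates the dressed operator, $\Tr_{c,a,A}\bigl(Y\,\bT_{c,[1,m]}(\z,\al)\bT_{a,[1,m]}(q^{-1}\z,\al)\bT_{A,[1,m]}(q^{-1}\z,\al)(X'_{[1,m]})\bigr)=0$. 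Decomposing the prefactor $Y$ into independent components in the auxiliary tensor product then produces the two operators $y_1$ and $y_2$, and their separate vanishing is exactly \eqref{trk}.

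The main obstacle is the middle stage: executing the $\theta_c$-removal and the Yang--Baxter reordering while keeping track of the shift $\z\mapsto q^{-1}\z$, and verifying that the $\fb_0$-conjugation terms genuinely collapse to $\kb$-traces with no residual $\fb_0$. Equally delicate is the final decomposition: one must check that $Y$ splits exactly into the stated $y_1$ and $y_2$—with the precise signs, the factor $-\half$, and the $\ao_A$- and $\ao_A^2$-terms inherited from $Y_{a,c,A}$—rather than into some other spanning pair, and that exactly two independent conditions survive, a feature tied to the structure of the oscillator monodromy built from the $L$-operator \eqref{Lop}. The invertibility of each manipulation is what guarantees that \eqref{trk} is sufficient, and not merely necessary, for \eqref{step3}.
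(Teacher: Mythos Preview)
Your broad picture—substitute \eqref{gb*0} into \eqref{step3}, eliminate $\fb_0$ via $\Delta_\z\fb_0=\kb$, and bring everything to a single monodromy acting on $X'_{[1,m]}$—is correct, and the $\fb_0$-collapse you describe does happen. But there is a genuine gap in how you pass from the operator identity \eqref{step3} to the two trace identities \eqref{trk}.

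Equation \eqref{step3} is an identity of operators on $V_c\otimes(\text{chain})$; it has a free $V_c$-index. You cannot render it as a \emph{single} scalar equation $\Tr_{c,a,A}(Y\cdots)=0$ and then recover two independent conditions by ``decomposing $Y$ into $y_1$ and $y_2$'': once the trace over $V_c$ is taken, only one linear constraint survives, and no decomposition of $Y$ can split it back into two. The paper's mechanism is different and essential. One does not remove $\theta_c$ by tracing; instead one uses the two operator identities
\[
\bbA_{c,[1,m]}(\z)(x_cX_{[1,m]})=\bT_{c,[1,m]}(\z,\al)(X_{[1,m]})\,\theta_c(x_c),
\qquad
\bbA_{c,[1,m]}(\z)\,\bT_{c,[1,m]}(q^{-1}\z,\al)=\id,
\]
to compute $\bbA_c(\z)\bigl(\gb_{0,c}(q^{-1}\z)\bigr)$ explicitly as an operator on $V_c$. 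After the $\fb_0\to\kb$ collapse and the splitting $\ub_c=-\tfrac12\sigma^3_c\,\kb+\sigma^+_c\,\bl$, the whole of \eqref{step3} takes the form
\[
(\text{chain operator})\cdot\tau^-_c+(\text{chain operator})\cdot\sigma^+_c=0,
\]
which is visibly two independent scalar conditions, obtained by pairing with $x_c=\tau^-_c$ and $x_c=\sigma^-_c$. That is where the number two comes from—not from a decomposition of a single prefactor.

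There is also a second, smaller gap: the reordering into $\bT_c(\z)\bT_a(q^{-1}\z)\bT_A(q^{-1}\z)$ is not a Yang--Baxter move. The left hand sides $\kb(\z)$ and $\bl(\z)$ must first be rewritten at the shifted argument $q^{-1}\z$ using the identity $\bT_{A,[1,m]}(\z,\al)(X)=\Tr_a\bigl(\tau^+_a\,\bT_{\{a,A\},[1,m]}(q^{-1}\z,\al)\,(q^{-1}\z)^{\al-\bS}(X)\bigr)$ from \cite{HGSII}, and the fused monodromy $\bT_{\{a,A\}}$ is then split by the gauge transformation $F_{a,A}=1-\ao_A\sigma^+_a$ into $\bT_a\bT_A$. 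This gauge conjugation is what manufactures the $\ao_A$ and $\ao_A^2$ terms in $y_2$; without it you will not land on the stated $y_1,y_2$.
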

\begin{proof}
Recall the definitions \eqref{delinv-k}, \eqref{gb*0}.
From the definition of $\bbA_{c,[1,m]}$ we easily find that 
\bea
&&\bbA_{c,[1,m]}(\z)(x_cX_{[1,m]})
=\bT_{c,[1,m]}(\z,\al)(X_{[1,m]})\theta_c(x_c)\,,
\label{bbA1}
\\
&&\bbA_{c,[1,m]}(\z)\bT_{c,[1,m]}(q^{-1}\z,\al)(X_{[1,m]})
=X_{[1,m]}\,.
\label{bbA2}
\ena
Write 
\be
\bu_{c,[1,m]}(\z,\al)(X_{[1,m]})
=-\frac{1}{2}\sigma^3_c\bk_{[1,m]}(\z,\al)(X_{[1,m]})
+\sigma^+_c\bl_{[1,m]}(\z,\al)(X_{[1,m]})\,. 
\en
Applying \eqref{bbA1}, \eqref{bbA2} 
we reduce \eqref{step3} to the form
\be
0&=&\left(
\bk_{[1,m]}(\z,\al)(X_{[1,m]})
 -\bT_{c,[1,m]}(\z,\al)(\bk_{[1,m]}(q^{-1}\z,\al)(X_{[1,m]}))
\right)\tau^-_c
\\
&+&
\left(
\bl_{[1,m]}(\z,\al)(X_{[1,m]})-
\bT_{c,[1,m]}(\z,\al)(\bl_{[1,m]}(q^{-1}\z,\al)(X_{[1,m]}))
\right)\sigma^+_c. 
\en
We rewrite this further,  using the fact that 
$y_c=0$ if and only if $\Tr_c(x_cy_c)=0$ for any $x_c\in \End(V_c)$. 
Nontrivial conditions arise from the choices 
$x_c=\tau^-_c$ or $\sigma^-_c$, giving respectively 
\bea
&&\bk_{[1,m]}(\z,\al)(X_{[1,m]})
=
\Tr_{c,a,A}\Bigl((\sigma^+_a\tau^-_c+(\sigma^3_a-\ao_A\sigma^+_a)\sigma^+_c)
\label{id3}
\\
&&\qquad \times
\bT_{c,[1,m]}(\z,\al)\bT_{\{a,A\},[1,m]}
(q^{-1}\z,\al)(q^{-1}\z)^{\al-\bS_{[1,m]}}
(q^{-2S_{[1,m]}}X_{[1,m]})\Bigr),\nn
\\
&&\bl_{[1,m]}(\z,\al)(X_{[1,m]})=
\Tr_{c,a,A}\Bigl(
(\sigma^+_a\sigma^-_c+(\sigma^3_a-\ao_A\sigma^+_a)\tau^+_c)
\label{id4}\\
&&\qquad \times
\bT_{c,[1,m]}(\z,\al)\bT_{\{a,A\},[1,m]}(q^{-1}\z,\al)
(q^{-1}\z)^{\al-\bS_{[1,m]}}(q^{-2S_{[1,m]}}X_{[1,m]})\Bigr).
\nn
\ena
On the other hand, we have an identity (see \cite{HGSII}, (2.20))
\be
\bT_{A,[1,m]}(\z,\al)(X_{[1,m]})
=\Tr_a\left(\tau^+_a \bT_{\{a,A\},[1,m]}(q^{-1}\z,\al)
(q^{-1}\z)^{\al-\bS}(X_{[1,m]})\right)\,, 
\en
which allows us to rewrite the left hand side of \eqref{id3} as 
\be
&&\bk_{[1,m]}(\z,\al)(X_{[1,m]})\\
&&\quad=
\Tr_{c,a,A}\left(
\sigma^+_c\tau^+_a
\bT_{c,[1,m]}(\z,\al)
\bT_{\{a,A\},[1,m]}(q^{-1}\z,\al)(q^{-1}\z)^{\al-\bS_{[1,m]}}
(q^{-2S_{[1,m]}}X_{[1,m]})\right)\,.
\en
For $\bl_{[1,m]}(\z,\al)(X_{[1,m]})$, we obtain an analogous expression,
replacing $\sigma^+_c\tau^+_a$ by \newline
$(\sigma^3_c+\ao_A\sigma^+_c)\tau^+_a$.

After this rewriting we take the difference of the left and the right hand sides 
of \eqref{id3}, \eqref{id4},  
and do further the gauge transformation 
\be
F_{a,A}\cdot \bT_{\{a,A\},[1,m]}(\z,\al)\cdot F_{a,A}^{-1}
=\bT_{a,[1,m]}(\z,\al)\bT_{A,[1,m]}(\z,\al)
\qquad (F_{a,A}=1-\ao_A\sigma^+_a)\,. 
\en
The assertion of Lemma follows. 
\end{proof}

To finish, it remains to prove \eqref{trk}. 
Let $\mathcal{R}$ be the universal $R$ matrix of 
$U_q(\widehat{\mathfrak{sl}}_2)$. 
Denote by $\pi_\z$ the evaluation module over $V=\C^2$, 
and by $\varpi_\z$ that of $q$-oscillator representation 
$W$. 
For the notation and details, 
we refer to Appendix A of \cite{HGSII}. 
Let further $\pi_{[1,m]}=\pi^{\otimes m}_1$. 
The identities of Lemma \ref{lem:step4}
can be written as 
\bea
&&\Tr_{c,a,A}\left(y\ 
(\pi_{\z}\otimes \pi_{q^{-1}\z}
\otimes \varpi_{q^{-1}\z}\otimes \pi_{[1,m]})\mathcal{R}\right)=0
\qquad (y=y_1,y_2). 
\label{step5}
\ena

In the tensor product
\be
Z=V_\z\otimes V_{q^{-1}\z}\otimes W_{q^{-1}\z},
\en 
there are two pairs which allow for non-trivial
$U_q\bp$-submodules: 
\be
&&V_0\subset V_{\z}\otimes V_{q^{-1}\z},
\\
&&
W_0\subset V_{q^{-1}\z}\otimes W_{q^{-1}\z}. 
\en
The submodule $V_0\simeq\C$ (resp. $W_0\simeq W_{q^{-2}\z}$) is 
spanned by $v_+\otimes v_-  -v_-\otimes v_+$ 
(resp. 
$\{v_-\otimes \ket{n}+(q^{2n}-1)v_+\otimes \ket{n-1}\}_{n\ge 0}$).
Set 
\be
Z_1=V_0\otimes W_{q^{-1}\z}, 
\quad 
Z_2=V_\z\otimes W_0. 
\en
Then a direct calculation shows that 
\be
&&y_1 Z\subset Z_1,\quad y_1 Z_1=0\,,
\\
&&y_2 Z\subset Z_1\oplus Z_2, \quad 
y_2 Z_1\subset Z_2,\quad y_2 Z_2\subset Z_1\,.
\en
Since $x Z_i\subset Z_i$ ($i=1,2$) 
holds for $x\in U_q\bp$, we have 
\be
\Tr_{c,a,A}\left(
y_i(\pi_\z\otimes\pi_{q^{-1}\z}\otimes
\varpi_{q^{-1}\z})(x)\right)=0
\quad (x\in U_q\bp)\,.
\en
The proof is now complete. 

\medskip

\noindent{\bf Step 4.}\quad

To complete the proof of Lemma \ref{prop:charact2} , 
we show matrix element of 
(\ref{step2}) between $\langle \kappa+\al|$, $|\kappa\rangle$
for $\mathbf{m=0}$.  
The integral consists of two parts, say $J_1$ and $J_2$, 
coming from the first 3 terms in \eqref{gb*0}, or from 
$\ub_{c,[1,m]}(\z,\al)$, respectively. 
We note that in view of \eqref{delinv-k} and Lemma \ref{1stAI} 
the proper meaning of $J_1$ is 
\be
&&J_1=\frac{1}{2}\int\limits_{\Gamma_{\bf 0}}\langle \kappa+\al|
\Tr _{[1,m],c}\Bigl(T_{[1,m],\mathbf{M}}(1,\kappa)T_{c,\mathbf{M}}(\z,\kappa)\kb_{ c,[1,m]}(\z,\al)(X_{[1,m]})\Bigr)|\kappa\rangle
\\
&&\quad\times
Q^-(\z,\kappa+\al)
\left(a(\z)Q^+(q^{-1}\z,\kappa)-d(\z)Q^+(q \z ,\kappa)\right)
\varphi(\z)
\frac{d\z^2}{\z^2}
\,.
\en
The functions 
\be
T_{c,\mathbf{M}}(\z,\kappa),\ \varphi(\z),\ \z^{\kappa+\al}Q^-(\z,\kappa+\al),\
\z^{-\kappa}Q^+(\z,\kappa), \
a(\z),\ d(\z) 
\en
are all regular at $\z^2=0$. 
On the other hand, for $X_{[1,m]}$ of spin $-1$, we have 
the estimate 
\be
&&\z ^{-\al}\kb_{[1,m]}(\z,\al)(X_{[1,m]})=O(\z^{2})
\quad (\z^2\to 0),
\en
as explained in \cite{HGSII}, section 2.5. The same argument there shows also  
\be
&&\z ^{-\al}\bl_{[1,m]}(\z,\al)(X_{[1,m]})
=O(\z^{2})\quad (\z^2\to 0).
\en
It follows that 
in both $J_1$ and $J_2$ the integrand is regular and  
the residue at $\z^2=0$ vanishes. 
This completes the proof of Lemma \ref{prop:charact2}.


\section{Classical limit}

In this Appendix, we explain the classical limit 
of our construction 
and its relation to hyperelliptic Riemann surfaces. 
We shall not go into much details since similar considerations
were done in \cite{SmiToda}, \cite{SmiKdV}. 
We assume $\al=0$.  
At the moment we are not ready to discuss the classical limit 
in the case $\al\neq 0$.   
We consider Bethe vectors of spin $0$, 
so that $\z^{\mp\kappa}Q^{\pm}(\z,\kappa)$ 
are polynomials in $\z ^2$ of the same degree
$$
s=\sum_{\mathbf{m=1}}^\mathbf{n} s_{\mathbf{m}}\,.
$$

In the parametrisation $q=e^{\pi i \nu}$, 
the classical limit amounts to $\nu\to 0$. 
So $\nu $ plays the role of  Planck's constant. 
Let us see what 
happens to the solutions to the Baxter equation in this limit. 
First of all, in order to obtain 
a Riemann surface of a finite genus,  
we keep $\mathbf{n}$ finite. 
But for the classical limit we 
need to have large quantum numbers. 
This is achieved by considering large spins $s_{\mathbf{m}}$. 
Actually, this was the main reason 
for us to consider arbitrary spins in the 
Matsubara direction.
So, we require that $\nu s_{\mathbf{m}}$, or equivalently $q^{s_\mathbf{m}}$,
tend to fixed non-zero values when $\nu\to 0$. 
Similarly, we demand that $\nu\kappa$, or $q^{\kappa}$, stays 
finite in the limit.
In this situation, $a(\z)$, $d(\z)$ and 
$T(\z)=T(\z, \kappa)$ tend to polynomials 
in $\z^2$ of degree $\mathbf{n}$, which 
we denote  by the same letters as in the quantum case.

In the classical limit, the poles of $\varphi(\z)$ concentrate 
on the portion of circles between the end points 
$\tau ^2_{\mathbf{m}}q^{-2s_{\mathbf{m}}}$ and 
$\tau^2 _{\mathbf{m}}q^{2s_{\mathbf{m}}}$.  
The $s$ zeros of the polynomials 
$\z ^{\mp\kappa}Q^{\pm}(\z,\kappa)$ 
concentrate to $\mathbf{n}$ open curved segments 
$C_{\mathbf{m}}$ close to the above circular segments.  
This claim is difficult to prove, but we can justify them
by analysing the Baxter equation in the classical limit. 

Consider the Baxter equation
\begin{align}
d(\z)Q(\z q)+a(\z)Q(\z q^{-1})=T(\z)Q(\z )\,.
\label{Ba}
\end{align}
We look for its quasi-classical solution in the form
\begin{align}
Q(\z)=
F(\z,\nu)\exp\Bigl
\{\frac 1 {2\pi i\nu }\int\limits _{1}^{\z^2}
\log \eta(\xi) \frac {d\xi ^2}{\xi^2} \Bigr\}, 
\label{qcl1}
\end{align}
where $\eta (\z)$ is a function independent of $\nu$ 
and $F(\z,\nu)$ is a power series in $\nu$.  
First, dividing the Baxter equation
by $Q(\z)$ 
and considering the leading order in Planck's constant
we conclude that $\eta (\z)$ must solve  the equation
\begin{align}
d(\z)\eta(\z) +a(\z)\eta^{-1}(\z)=T(\z)\,. 
\label{curve}
\end{align}
This is the equation of the 
spectral curve of the corresponding classical model. 

The function $\eta(\z)$ has two branches,  
$$
\eta^\pm(\z)=\frac {T(\z)\pm\sqrt{T(\z)^2-4a(\z)d(\z)}}{2d(\z)}
\,,
$$
for future convenience we choose the branch of the square root
such that 
$\sqrt{(q^{\kappa}-q^{-\kappa})^2}=q^{\kappa}-q^{-\kappa}$. 

Consider the behaviour $\z^2\to\infty$. 
The polynomial $T(\z)$ is not arbitrary, it comes from
the quasi-classical limit of a solution to the Baxter equation (\ref{Ba}).
Recall that for large $\z$ the function $Q^{\pm}(\z)$ is
$O(\z ^{\pm\kappa+2s})$
as discussed in Section 4. Also we have for $\z^2\to\infty$ and
to the main order
in  Planck's constant
\begin{align}
a(\z)&=\tau ^{-2} q^{2s}\z^{2\mathbf{n}}+\cdots,\quad 
d(\z)=\tau ^{-2}q^{-2s}\z^{2\mathbf{n}}+\cdots,
\nn\end{align}
where $\tau=\prod \tau _\mathbf{m}$.
So, the Baxter equation implies that
\begin{align}
T(\z)&=
\tau ^{-2}(q^{\kappa}+q^{-\kappa})\z^{2\mathbf{n}}+\cdots
\,.
\nn
\end{align}
Hence when  $\z^2\to\infty$ we have
$\eta^\pm\to q^{\pm\kappa+2s}$, which means
that $\eta ^+$ (resp. $\eta ^-$) corresponds to quasi-classical
limit of $Q^+$ (resp. $Q^-$).

Throughout this paper we use as parameter $\z$ while all the
important quantities are actually functions of $\z ^2$. This notational problem
is due to historical reasons, and we are forced to tolerate it in the
quantum case. However, in the classical case this notation
becomes very unnatural making incomprehensible simple formulae
for differentials on hyperelliptic Riemann surface. That is why in what follows
we shall often use the parameter $z=\z^2$. For the same reason we 
denote the discriminant $T(\z)^2-4a(\z)d(\z)$, which actually depends on $\z^2$,  by $P(\z^2)$.
Recalling that $a(\z)$, $d(\z)$, $T(\z)$ are polynomials of $\z ^2$ and making the
change of variables: $z=\z^2$, $w=2d(\z)\eta(\z)-T(\z)$ we bring the spectral curve (\ref{curve}) to 
canonical form:
\begin{align}
w^2=P(z)\,.\label{curve1}
\end{align}

In the $q$-deformed Abelian integrals the integration measure
contains $Q^{-}(\z ,\kappa)Q^{+}(\z ,\kappa)$. 
The most direct way to compute this quantity
uses the quantum Wronskian (\ref{wronskian})
\begin{align}
\frac{1}{q^\kappa-q^{-\kappa}}
W(\z) &=  Q^{+}(\z,\kappa )Q^{-}(\z q,\kappa)
-Q^{-}(\z ,\kappa)Q^{+}(\z q ,\kappa)
\nn\\
&\tto 
(\eta^--\eta^+)Q^{+}(\z ,\kappa)Q^{-}(\z,\kappa )\,.
\nn
\end{align}
This implies 
\begin{align}
Q^{+}(\z ,\kappa)Q^{-}(\z,\kappa )\varphi (\z)\ \tto 
\ \frac{1}{q^{-\kappa}-q^{\kappa}} \frac 1 {\sqrt{P(\z ^2 )}}\,,
\label{measure}
\end{align}
where we used the identity $W(\z)d(\z)\varphi(\z)=1$. 

The discriminant $P(z)$ is a polynomial 
of degree $2\mathbf{n}$. 
Let us call its zeros $x_{\bf 1},\cdots,x_{2\mathbf{n}}$.  
The Riemann surface (\ref{curve1}) 
is presented as two copies of the
$z$ plane glued together along 
the cuts $[x _{2\mathbf{m-1}},x_{2\mathbf{m}}]$.  
According to the conjecture accepted previously,  
the branch points can be ordered 
in such a way that the cut 
$[x _{2\mathbf{m-1}},x_{2\mathbf{m}}]$ 
is not far from the location of poles of 
$\varphi(\z)$ which
are contained in the contour $\Gamma _{\mathbf{m}}$. 
According to (\ref{measure}), 
for any polynomial $L(\z ^2)$ 
we have in the classical limit
\begin{align}
\int\limits_{\Gamma _{\mathbf{m}}}
L(\z ^2)Q^{+}(\z ,\kappa)Q^{-}(\z,\kappa )
\varphi (\z)\frac {d\z ^2}{\z ^2}\ 
\tto\
\frac{2}{q^{-\kappa}-q^{\kappa}}
\int\limits _{c _{\mathbf{m}}}
\frac {L(z) }{\sqrt{P(z )}}
\frac{dz }{z }\,,
\label{limint}
\end{align}
where $c _{\mathbf{m}}$ 
is a contour going in $z$-plane around 
$[x _{2\mathbf{m}-1},x_{2\mathbf{m}}]$ 
for $\mathbf{1}\le \mathbf{j}\le \mathbf{n}$, or
around $0$ for $\mathbf{m}=\mathbf{0}$. 
The limit (\ref{limint}) requires some remarks. 
The integral in the left hand side is
taken over the contour $\Gamma _{\mathbf{m}}$. 
In the limit 
the integrand develops cuts which appear 
as a result of
concentration of zeros of 
$Q^{+}(\z ,\kappa)Q^{-}(\z,\kappa )$
and poles of $\varphi (\z)$. 
So, obviously, in the limiting process 
we have to deform the contour in order that 
it does not cross the cut. 
This is how the integral around $c_\mathbf{m}$ appears.

The Riemann surface (\ref{curve1}) has genus $\mathbf{n-1}$.
The contours $c_{\mathbf{m}}$, with $\mathbf{m}=\mathbf{1},\cdots,
\mathbf {n-1}$ can be taken as $a$-cycles. 
Our Riemann surface have two points $0^\pm$ which lies on 
different sheets and project to $z=0$. The contour $c_\mathbf{0}$ 
goes around $0^+$. Similarly we have two points $\infty ^{\pm}$
which project to $z=\infty$.

Define the differentials on the Riemann surface
$$\sigma _{\mathbf{j}}(z)=\frac {z^{\mathbf{j}-1}}{\sqrt{P(z)}}dz,
\quad \mathbf{j}=\mathbf{0},\cdots, \mathbf{n}\,.$$
The differentials $\sigma _{\mathbf{j}}(z)$ where 
$\mathbf{j}=\mathbf{1},\cdots, \mathbf{n-1}$, are holomorphic
(the first kind) differentials while the differentials $\sigma _{\mathbf{0}}$ and $\sigma _{\mathbf{n}}$ are the third kind differentials. 
The differential $\sigma _{\mathbf{0}}$  has simple poles at
$z =0^\pm$, it is dual to the contour $c_\mathbf{0}$. 
The differential $\sigma _{\mathbf{n}}$ has simple poles at
$z =\infty^\pm$.

The holomorphic differentials can be normalised with respect 
to $c_{\mathbf{i}}$, $\mathbf{i}=\mathbf{1},\cdots,
\mathbf {n-1}$ because
$$
\det\Bigl( \ 
\int\limits_{c_{\mathbf{i}}}\sigma _{\mathbf{j}}
\ \Bigr)_{\mathbf{i,j=1,\cdots, n-1}}\ \ne \ 0\,.
$$
This is the classical version of (\ref{det}).

Consider the  differentials 
whose only singularities are at $\infty^{\pm}$.  
Among those are exact forms
\begin{align}
\frac {d}{dz}\(z^{k}\sqrt{P(z)}\)dz,\quad z^kdz,\quad k\ge 0\,.
\label{exactclassic}
\end{align}
Up to exact forms, holomorphic forms and the third kind differential $\sigma _{\mathbf{n}}(z)$
there are $\mathbf{n-1}$ linearly independent second kind differentials with singularities at  $\infty^{\pm}$:
$$
\tilde{\sigma }_{\mathbf{j}}(z)=z^\mathbf{j}\left[
\frac{d}{dz}
\bigl(z ^{-2{\mathbf{j}}}P(z)
\bigr)    \right]_+\frac {dz }{2\sqrt{P(z)}}\,, 
\quad \mathbf{j}=\mathbf{1,\cdots ,n-1}\,,
$$
where $[f(z)]_+$ means the polynomial 
part of $f(z )$,  
which is a Laurent polynomial at $z=\infty$.
We shall use at some point the differential $\tilde{\sigma}_\mathbf{0}$ which is an exact form.

The most important identity in the theory of Riemann
surfaces is the
Riemann bilinear relations. 
Usually this identity is written in the form
$$
\sum\limits _{\mathbf{m=1}}^{\mathbf{g}}
\Bigl(\  
\int
\limits_{a_{\mathbf{m}}}\omega _1
\int\limits _{b_{\mathbf{m}}}\omega _2
-
\int\limits_{b_{\mathbf{m}}}\omega _1
\int\limits _{a_{\mathbf{m}}}\omega _2\ \Bigr)
=2\pi i \ \omega _1\circ \omega _2
\,,
$$
where $\omega_{1,2}$ are the first or the second kind 
differentials, and  
\be
\omega _1\circ \omega _2
=-\sum \res (\omega _1d^{-1}\omega _2). 
\en
In our case the $a$-cycles coincide with 
$c_{\mathbf{1}},\cdots ,c_{\mathbf{n-1}}$.  
The $b$-cycle $b_{\mathbf{m}}$ ($\mathbf{m=1,\cdots ,n-1}$) 
crosses the cycle $a_{\mathbf{m}}$ once 
on the first sheet of the surface,
goes to the second sheet through the $\mathbf{m}$-th cut,
arrives to $\mathbf{n}$-th cut by the second sheet, 
crosses this cut and returns by the first sheet to its beginning.

An alternative way of writing the Riemann bilinear relations
is the following.  
It is easy to see that $\sigma$ 
and $\tilde{\sigma}$ constitute a canonical basis
\begin{align}
\sigma _{\mathbf{i}}\circ\tilde{\sigma} _{\mathbf{j}}=\delta _
{\mathbf{i},\mathbf{j}},\quad
\sigma _{\mathbf{i}}\circ\sigma _{\mathbf{j}}=0,\quad
\tilde{\sigma}_{\mathbf{i}}\circ\tilde{\sigma }_{\mathbf{j}}=0\,.
\label{intersect}
\end{align}
Now construct the antisymmetric form 
\begin{align}
\sigma (x,y)=
\sum\limits _{\mathbf{j=1}}^{\mathbf{n-1}}
\bigl(\sigma _{\mathbf{j}}(x)\tilde{\sigma} _{\mathbf{j}}(y)
-\sigma _{\mathbf{j}}(y)\tilde{\sigma} _{\mathbf{j}}(x)\bigr)
\,.
\label{sigma1}
\end{align}
Then
\begin{align}
\int\limits _{g_{\mathbf{1}}}\int\limits _{g_{\mathbf{2}}}
\sigma (x,y)=2\pi i \ g_{\mathbf{1}}\circ g_{\mathbf{2}}\,,
\label{ints}
\end{align}
where 
in the right hand side we put the intersection number of cycles.
From the explicit formulae 
for  $\sigma _{\mathbf{i}}$ and 
$\tilde{\sigma} _{\mathbf{j}}$, 
one easily finds the 2-form $\sigma (x,y)$,
\begin{align}
\sigma (x,y)=\Bigl(\frac{\partial}{\partial y}\Bigl(\frac{1}{y-x}\frac {\sqrt{P(y)}} {\sqrt{P(x)}}\Bigr)
-\frac{\partial}{\partial x}\Bigl(\frac{1}{x-y}\frac {\sqrt{P(x)}} {\sqrt{P(y)}}\Bigr)
\Bigr)dxdy\,.
\label{sigma2}
\end{align}
This form is exact, so, apparently
the integrals over all 2-cycles must vanish. 
However, there is a singularity at $x=y$
which produces the intersection number in the right
hand side of (\ref{ints}). 
All that is quite standard, so we do not go into
much details. 

Consider a particular case of (\ref{ints}),
\begin{align}
\int\limits _{c_{\mathbf{i}}}\int\limits _{c_{\mathbf{j}}}
\sigma (x,y)=0\,,\quad \mathbf{i},\mathbf{j=
1,\cdots ,n-1}\,.\label{rieclassic}
\end{align}
This is true because the $a$-cycles do not intersect.

On the product of two copies of Riemann surface we
have the canonical second kind differential
$\rho (x,y)$ with the following properties. 
\begin{itemize}
\item The differential $\rho (x,y)$ is holomorphic everywhere
except the diagonal, where it has a double pole with no residue
\begin{align}
\rho (x,y)=\(\frac 1 {(x-y)^2}+O(1)\)dx dy\,.
\label{dpole}
\end{align}
\item
The differential $\rho (x,y)$ is normalised
with respect to $x$, 
\begin{align}
\int\limits _{c_{\mathbf{m}}}
\rho (x,y)=0,\qquad\mathbf{m}
=\mathbf{1},\cdots ,\mathbf{n-1}\,.
\label{normrho}
\end{align}
\end{itemize}
An important consequence of the Riemann bilinear relations
is that this differential is automatically symmetric:
\begin{align}
\rho (x,y)=\rho (y,x)\,.
\label{symrho}
\end{align}
Let us explain this by 
giving an explicit construction of $\rho (x,y)$. 
We start with an exact form in $x$, 
$$
-\frac {\partial}{\partial x}
\(\frac {\sqrt{P(x)}}{\sqrt{P(y)}(x-y)}\)dx dy\,.
$$
which obviously has the required singularity at $x=y$, 
but 
has also additional singularities at infinity. 
Because of \eqref{sigma1} and \eqref{sigma2}, these singularities are 
cancelled in the following expression:
$$
\rho(x,y)=-\frac {\partial}{\partial x}
\(\frac {\sqrt{P(x)}}{\sqrt{P(y)}(x-y)}\)dx dy
+\sum_{\mathbf{i}=1}^{\mathbf{n}-1} 
\tilde{\sigma}_{\mathbf{i}}(x)\sigma _{\mathbf{i}}(y)+
\sum_{\mathbf{i},\mathbf{j=1}}^{\mathbf{n}-1} 
 X_{\mathbf{i},\mathbf{j}}\sigma _{\mathbf{j}}(x)
\sigma _{\mathbf{i}}(y)\,,
$$
where the matrix $X_{\mathbf{i},\mathbf{j}}$ 
must be defined from the normalisation condition
$$
\sum\limits_{\mathbf{j}=1}^{\mathbf{n}-1}X_{\mathbf{i},\mathbf{j}}
\int\limits _{c_{\mathbf{k}}}\sigma _{\mathbf{j}}
+\int\limits _{c_{\mathbf{k}}} \tilde{\sigma}_{\mathbf{i}}
=0\,.
$$
Now writing a similar formula for $\rho(y,x)$, it becomes apparent that 
the symmetry (\ref{symrho}) is equivalent to
the fact that $X$ is a symmetric matrix. 
This fact follows from (\ref{rieclassic}). 
There is an obvious similarity 
between this argument and the proof of Lemma \ref{symmetry}.

Suppose that we want to construct a normalised 
second kind differential
with given singular part. 
To be more precise, we allow a singularity
only at $x=1$ with a given singular part
$$
\tau _{sing}(x)=\sum\limits_{k=2}^N\gamma_k(x-1)^{-k}dx\,.
$$
So, we look for a differential which has the 
singular part $\tau _{sing}(x)$ at $x=1$ and is holomorphic
elsewhere.
We require that $\tau(x)$ is normalized
$$
\quad \int\limits _{c_{\mathbf{m}}}\tau(x)=0\,.
$$
It is rather obvious that $\tau(x)$ is given by
\begin{align}
\tau(x)=\oint\limits _{\Gamma} \sigma (x,y)\ 
d^{-1}\tau _{sing}(y)\,,
\label{2ndkind}
\end{align}
where the contour $\Gamma$ is as usual: 
$1$ is inside it, and $x$ outside.

Let us return to the 
quasi-classical limit of the quantum formulae.
First, notice that  for $\al=0$ 
the operator $\overline{D}$ becomes the
second difference derivative because $\rho(\z)=1$:
$$
\frac 1 {(\pi i \nu)^2}\overline{D}_\z (f(\z))=
\frac 1 {(\pi i \nu)^2}\(f(\z q)+f(\z q^{-1})-2f(\z)\)\tto
\( \z\frac{d}{d\z} \)^2f(\z)\,.
$$
Also $\Delta_\z^{-1}$ goes to the primitive function
$$
2\pi i\nu\Delta _\z^{-1}(f(\z))\tto 
\(\z\frac{d}{d\z} \)^{-1}f(\z)\,.
$$
Consider the $f(\z)=L(\z^2)$
and the corresponding $q$-deformed exact form (for $\al=0$
there is no difference between $f^{\pm}(\z)$):
\begin{align}
\varpi _\nu(\z^2)=\frac 1 {\pi i \nu}E (f(\z))Q^-(\z )Q^+(\z)\varphi (\z)\frac{d\z ^2}{\z ^2}\nn 
\end{align}
then
\begin{align}
\varpi _\nu(z)\ \tto 
\ - \frac{d}{dz }\(L(z)\sqrt{P(z)}\)dz \,.
\nn
\end{align}
Denote 
$$
\sigma _{\nu}(\z ^2,\xi ^2)=\frac 1 {\pi i \nu}r(\z,\xi)Q^-(\z )Q^+(\z)\varphi (\z)
Q^-(\xi)Q^+(\xi)\varphi (\xi)\frac{d\z ^2}{\z ^2}\frac{d\xi ^2}{\xi ^2}
\,,
$$
Then we have
\begin{align}
\sigma _{\nu}(x,y)\ \tto\  \sigma(x,y)+\frac 1 2 \( \sigma _\mathbf{0}(x)\tilde{\sigma} _\mathbf{0} 
(y)
-\sigma _\mathbf{0}(y)\tilde{\sigma} _\mathbf{0}(x) \)\,,\label{rielimit}
\end{align}
the additional term is not important in \eqref{ints} because $\tilde{\sigma} _\mathbf{0}$ is an exact form.
The limit  (\ref{rielimit})
explains the name $q$-deformed Riemann bilinear relations
for (\ref{riemann}).

Consider now 
$$
\rho _{\nu}(\z^2,\xi^2)=\frac 1 {\pi i\nu}
T(\z)T(\xi)\omega(\z,\xi)Q^-(\z )Q^+(\z)\varphi (\z)
Q^-(\xi)Q^+(\xi)\varphi (\xi)\frac{d\z ^2}{\z ^2}\frac{d\xi ^2}{\xi ^2}\,.
$$
We want to show that
$$\rho _{\nu}(x,y)\ \tto\ \rho (x,y)\,.$$
First, it is rather easy to find that in the singularity (\ref{omegasing})
two simple poles produce in the classical limit 
the 
double pole in (\ref{dpole}). 
Second, we have the normalisation conditions
(\ref{normofom}). They look different from the normalisation
conditions (\ref{normrho}) because of presence of the
term 
\begin{align}
&\frac 1 {\pi i\nu}\int\limits _{\Gamma _{\mathbf{m}}}T(\z,\kappa)\overline{D}_{\z}\overline{D}_{\xi}\Delta _{\z}^{-1}\psi (\z/\xi)
Q^-(\z )Q^+(\z)\varphi (\z)\frac{d\z ^2}{\z ^2}\label{bad}\,.
\end{align}
However, this term for $\al=0$, $\nu\to 0$ is of order
$\nu ^2$, while $\rho _{\nu}(\z ^2,\xi ^2)$ is of order $1$. So the
term (\ref{bad}) does not count and from (\ref{normofom})
with $\mathbf{m=1,\cdots ,n-1}$ we get the normalisation
conditions (\ref{normrho}).
Conditions (\ref{normofom}) 
with $\mathbf{m=0,n}$ show that the differential
$\rho _{\nu}(\z ^2,\xi ^2)$ in the limit $\nu\to 0$ does not have 
simple poles at $\z^2=0,\infty$ which were originally present.

Thus we conclude that the function 
$\omega(\z,\xi)$ is related
 in the classical limit to the canonical normalized second kind
differential.

Notice a clear similarity between the formula (\ref{2ndkind}) and
our main formula (\ref{b*to c1}).

\section{Equivalence of different non-degeneracy conditions.}
\label{nondeg}

In this Appendix we show that the conditions 
$\det(\mathcal{A}^{\pm})\ne 0$  are equivalent to the fact that the scalar product
\eqref{general} does not vanish. We use usual notations of the 
Quantum Inverse Scattering Method (QISM) \cite{FST}:
$$T_{a,\mathbf{M}}(\z)=\begin{pmatrix}
A(\z)&B(\z)\\C(\z)&D(\z)
\end{pmatrix}_a\,.$$
Consider the case when all the spaces in Matsubara direction
are two-dimensional (spin $1/2$). The basis of the two-dimensional space
will be denoted by $e_{\pm}$. Introduce two vectors
in Matsubara space
\begin{align}
|+\rangle =e_{+}\otimes\cdots\otimes e_{ +},\ ,
\quad |-\rangle =e_{-}\otimes\cdots\otimes e_{ -}\,.
\label{natbase}
\end{align}
The eigenvector $|\kappa\rangle $ is written in QISM framework as
\begin{align}
|\kappa\rangle=\prod C(\la ^-_\bj)|-\rangle\,,
\label{BBB}
\end{align}
where $(\la ^-_{\bj}) ^2$ are zeros of
$\z^{\kappa}Q^-_{\mathbf{M}}(\z,\kappa)$
which is a polynomial of $\z^2$. It is well-known that this eigenvector
does not vanish identically unless $\tau _\bi=\tau _\bj q$ for some
$\bj>\bi$. The latter situation has to be forbidden from the very
beginning because the tensor product on $\bi$-th and $\bj$-th
spaces is reducible and contains one-dimensional sub-module.
On the other hand there is no problem with the case 
$\tau _\bi=\tau _\bj q^{-1}$ 
which allows the fusion procedure, and show that
our considering only spin $1/2$ representations is not a real restriction. 

Consider now the vector $\prod B(\la ^+_\bj)|+\rangle$, where
$(\la ^+_{\bj}) ^2$ are zeros 
of $\z^{-\kappa}Q^+_{\mathbf{M},\kappa}(\z)$.
This vector also does not vanish identically, it is an eigenvector
of $T_{\mathbf{M}}(\z,\kappa )$ with the same eigenvalue as
\eqref{BBB}. Hence, the assumed uniqueness
of the eigenvector with the eigenvalue of 
maximal absolute value implies that this
vector is proportional to $|\kappa\rangle$ with some coefficient
which depends on $\tau_{j}$ and $\kappa$, the exact form of this coefficient
is irrelevant here. 

Now consider the scalar product \eqref{general}. We do not care about
the normalisation of the eigenvectors, so, in traditional QISM
way it is written as
$$\langle\kappa+\al|\kappa\rangle=
\langle -|\prod B(\mu ^-_\bj)\prod C(\la ^+_\bj)|-\rangle\,,$$
where $(\mu _\bj^{\pm})^2$ are zeros of $\z ^{\mp
\kappa}Q^\pm_{\mathbf{M}}(\z,\kappa)$.
Due to the previous remark we rewrite:
\begin{align}
\langle\kappa+\al|\kappa\rangle=Const\cdot
\langle -|\prod B(\mu ^-_\bj)\prod B(\la ^+_\bj)|+\rangle\,,
\label{BBBBBB}
\end{align}
where 
$Const$ is a
nonvanishing constant
which was discussed above. So, we conclude that the scalar
product in question is given essentially by the partition
function with domain wall boundary conditions
$$M_{\bn}(\xi _{\bo},
\cdots ,\xi_{\bn}|\tau _{\bo},
\cdots ,\tau_{\bn})
=\prod\xi_{\bj}^{-1}
\langle -|\ \prod\limits _{\bj=1}^{\bn} B(\xi _\bj)\ |+\rangle$$
with specification $\{\xi _\bj\}=\{\mu ^-_\bj\}\cup\{\la ^+_\bj\}$, notice that
independently of spin of our eigenvectors the number of
elements in the latter set is $\bn$.

Being a polynomial
of degree $\mathbf{n-1}$ in $\xi _{\bn}^2$ the function $M_{\bn}$ is completely characterised by the recurrence
relation:
\begin{align}
&M_{\bn}(\xi _{\bo},
\cdots ,\xi_{\bno},\tau _\bn|\tau_{\bo},
\cdots \tau_{\bno},\tau_{\bn})\label{rec1}\\
&= (q^2-1)\tau_\bn^{-1}\prod\tau^{-2}_\bj\prod\limits_ {\bj\ne \bn}
(q^2\xi _{\bj}^2-\tau _\bn^2)(q^2\tau_\bn^2-\tau _\bj^2)
M_{\bno}(\xi _{\bo},
\cdots ,\xi_{\bno}|\tau_{\bo},
\cdots \tau_{\bno})\,.\nn
\end{align}
This recurrence was solved by Izergin who found a determinant
formula for $M_{\bno}$ \cite{izergin}.

On the other hand we have the determinant $\det (\mathcal{A}^+)$
of $(\mathbf{n+1})\times (\mathbf{n+1})$ matrix. 
This determinant depends on the Bethe roots only through
the product $Q^-(\z,\kappa+\al)Q^+(\z,\kappa)$. Once again we
consider the union $\{\xi _\bj\}=\{\mu ^-_\bj\}\cup\{\la ^+_\bj\}$ and
normalise this product as follows
$$Q^-(\z,\kappa+\al)Q^+(\z,\kappa)=\prod\limits _{\bj=1}^\bn
(\z^2-\xi _\bj^2)\,.$$
The determinant can be reduced in two steps:
\begin{align}
&\det (\mathcal{A}_{\bi,\bj}^+)_{\mathbf{i,j=0,\cdots, n}}=-2\pi i
\prod \xi ^2_\bj\det (\mathcal{A}_{\bi,\bj}^+)_{\mathbf{i,j=1,\cdots, n}},\label{n+1nn-1}\\
&\det (\mathcal{A}_{\bi,\bj}^+)_{\mathbf{i,j=1,\cdots, n}}=-2\pi i
\det (\mathcal{A}_{\bi,\bj}^+)_{\mathbf{i,j=1,\cdots, n-1}}\,,\nn
\end{align}
where we used the obvious identities:
\begin{align}
&\int\limits _{\Gamma _{\mathbf{0}}}
\z ^{\al +2\mathbf{j}}
Q^-(\z ,\kappa+\al)Q^+(\z,\kappa)\varphi (\z)\frac{d\z ^2}{\z ^2}=
(-1)^{\bn-1}2\pi i\delta _{\mathbf{j},\mathbf{0}}\
\prod \xi ^2_\bj\,,
\nn\\
&\int\limits _{\Gamma _{\mathbf{\infty}}}
\z ^{\al +2\mathbf{j}}
Q^-(\z ,\kappa+\al)Q^+(\z,\kappa)\varphi (\z)\frac{d\z ^2}{\z ^2}=
-2\pi i\delta _{\mathbf{j},\mathbf{n}}\,,  
\nn
\end{align}

Making the dependence on $\bn$ and
other parameters explicit we introduce
\begin{align}&D_{\bn}(\xi _\bo,\cdots, \xi_\bn
|\tau _{\bo},
\cdots ,\tau_{\bn})
\nn\\ &\quad=(-1)^{\bn(\mathbf{n-1})/2}
\prod\tau _\bj^{-2}\ 
\prod\limits _{\bi,\bj}(
q\tau ^2_\bi-q^{-1}\tau ^2_\bj)
\prod\limits_{\bi<\bj}(\tau _\bi^2-\tau _\bj^2)
\det (\mathcal{A}^+_{\bi,\bj})_{\mathbf{i,j=1,\cdots n}}\,.\nn
\end{align}
where we preferred the intermediate reduction from
\eqref{n+1nn-1} for its antisymmetry with respect to
permutation of $\tau $'s.
In the case of two-dimensional representations in Matsubara
direction the integrals in $\mathcal{A}^+_{\bi,\bj}$ are easy:
they are given by sum of two residues. 
Obviously, 
$D_{\bn}$ is a polynomial in $\xi ^2_\bn$ of degree $\bn$.
However the second relation from \eqref{n+1nn-1} shows that the
actual degree is $\bno$.

Set $\xi _\bn=\tau _\bn$ and multiply the matrix
$\mathcal{A}^+$ from the right by the
matrix $I-\tau^2_\bn E$ with $E_{i,j}=\delta _{i,j-1}$. Then it is easy to see that in the last row only $\bn$-th matrix element does not
vanish.
Using this,
after some simple algebra one sees that $D_\bn$ satisfies
the relation \eqref{rec1}.  Hence
we conclude that
$$D_{\bn}(\xi _\bo,\cdots, \xi_\bn
|\tau _{\bo},
\cdots ,\tau_{\bn})=M_{\bn}(\xi _\bo,\cdots, \xi_\bn
|\tau _{\bo},
\cdots ,\tau_{\bn})\,.$$
Due to the above reasoning it shows that
$\langle \kappa +\al|\kappa\rangle $
is proportional to $\det (\mathcal{A}^+)$ with
non-vanishing coefficient. Similarly,  rewriting 
$\langle \kappa +\al|\kappa\rangle $ as 
$$\langle\kappa+\al|\kappa\rangle =Const\cdot
\langle +|\prod C(\mu ^+_\bj)\prod C(\la ^+_\bj)|-\rangle\,,$$
one proves that it is proportional to $\det (\mathcal{A}^-)$
with
non-vanishing coefficient.

\bigskip

\noindent

{\it Acknowledgements.}\quad
Research of MJ is supported by the Grant-in-Aid for Scientific 
Research B-20340027 and B-20340011. 
Research of TM is supported by
the Grant-in-Aid for Scientific Research B--17340038.
Research of FS is supported by  EC networks   "ENIGMA",
contract number MRTN-CT-2004-5652
and GIMP program (ANR), contract number ANR-05-BLAN-0029-01.

The authors are grateful to O. Babelon,
F. G{\"o}hmann, N. Kitanine, S. Lukyanov and L. Takhtajan
for helpful discussions. Special thanks are due to H. Boos for
long and fruitful collaboration which we hope to continue.
TM and MJ wish to thank
Universit{\'e} Paris VI for kind hospitality where this work began.
\bigskip

\end{document}